\documentclass[12pt,onecolumn]{IEEEtran}

\makeatletter

\let\proof\@undefined
\let\endproof\@undefined
\let\labelindent\@undefined
\makeatother

\usepackage{comment}

\usepackage{enumitem}
\usepackage{numprint}
\usepackage{graphicx}          
\usepackage{amssymb}
\usepackage{amsmath}
\usepackage{amsthm}
\usepackage[mathscr]{eucal}
\usepackage{cleveref}
\usepackage{pgfplots}
\usepackage{array}
\usepgfplotslibrary{fillbetween}
\pgfplotsset{compat=newest}
\usetikzlibrary{calc}

\usepackage{caption}
\usepackage{subcaption}

\crefname{equation}{}{}
\Crefname{equation}{Equation}{Equations}

\newtheorem{thm}{Theorem}
\crefname{thm}{Theorem}{Theorems}
\Crefname{thm}{Theorem}{Theorems}

\newtheorem{assum}{Assumption}
\crefname{assum}{Assumption}{Assumptions}
\Crefname{assum}{Assumption}{Assumptions}
\newtheorem{prop}{Proposition}

\crefname{figure}{Figure}{Figures}
\Crefname{figure}{Figure}{Figures}

\theoremstyle{definition}
\usepackage{thmtools}

\crefname{definition}{Definition}{Definitions}
\Crefname{definition}{Definition}{Definitions}

\theoremstyle{plain}
\declaretheorem[style=definition,qed=$\blacksquare$]{example}

\crefname{example}{Example}{Examples}
\Crefname{example}{Example}{Examples}

\newtheorem{corollary}{Corollary}
\theoremstyle{remark}
\newtheorem{rem}{Remark}
\theoremstyle{definition}

\newcommand{\ul}{\underline}
\newcommand{\ol}{\overline}
\newcommand{\td}{\widetilde}
\newcommand{\md}{\td}

\newcommand{\e}[1]{\mathscr{#1}}

\newcommand{\F}{\mathcal{F}}

\newcommand{\Rtsim}{\overset{\mathcal{R}_t}{\sim}}
\newcommand{\Rt}{\R_t}
\newcommand{\Rs}{\R_s}
\newcommand{\Rtau}{\R_{\tau}}
\newcommand{\RT}{\R_T}

\newcommand{\into}{\rightarrow}

\newcommand{\Rbb}{\mathbb{R}}
\newcommand{\Nbb}{\mathbb{N}}
\newcommand{\tr}{\top}
\newcommand{\I}{I}
\newcommand{\sm}{\backslash}
\newcommand{\set}[1]{\left\{#1\right\}}

\newcommand{\card}[1]{\#\left(#1\right)}
\newcommand{\paren}[1]{\left( #1 \right)}

\newcommand{\brak}[1]{\left[ #1 \right]}
\newcommand{\mat}[2]{\brak{\begin{array}{#1} #2 \end{array}}}
\newcommand{\norm}[1]{\left| #1 \right|}

\newcommand{\eqn}[1]{\begin{equation*}\begin{aligned} #1 \end{aligned}\end{equation*}}
\newcommand{\eqnn}[1]{\begin{equation}\begin{aligned} #1 \end{aligned}\end{equation}}

\newcommand{\spec}{\operatorname{spec}}
\newcommand{\diag}{\operatorname{diag}}

\newcommand{\cl}[1]{\ol{#1}}
\newcommand{\co}[1]{{#1}^C}
\newcommand{\nth}{ {\square\square}}
\newcommand{\rght}{ {\square\blacksquare}}
\newcommand{\lft}{ {\blacksquare\square}}
\newcommand{\bth}{ {\blacksquare\blacksquare}}
\newcommand{\uncon}{ {\rev{\square}}}
\newcommand{\con}  { {\rev{\blacksquare}}}

\newcommand{\LO}{\rev{\ensuremath{\blacksquare,\square}}}
\newcommand{\TD}{\rev{\ensuremath{\square,\blacksquare}}}

\definecolor{CornflowerBlue}{rgb}{0.258824,0.258824,0.435294}
\definecolor{cfblue}{rgb}{0.258824,0.258824,0.435294}
\definecolor{SkyBlue}{rgb}{0.196078,0.6,0.8}
\definecolor{TitleBlue}{RGB}{47, 67, 114}
\definecolor{dblue}{rgb}{.098,.243,.424}
\definecolor{lblue}{rgb}{.33,.57,.835}
\definecolor{llblue}{rgb}{.447,.643,.831}
\definecolor{lbluesam}{rgb}{.447,.643,.831} 
\definecolor{mblue}{rgb}{0.176, 0.380, 0.659}
\definecolor{lcomp}{rgb}{.969,.765,.416}
\definecolor{ddorange}{rgb}{0.624, 0.365, 0}
\definecolor{dorange}{rgb}{0.72, 0.506, 0.125}
\definecolor{lorange}{rgb}{0.961, 0.678, 0.165}
\definecolor{lgreen}{rgb}{.812,.969,.435}
\definecolor{dgreen}{RGB}{15,111,3}
\definecolor{mgreen}{RGB}{84, 174, 50}
\definecolor{lyellow}{rgb}{1,.859,.451}
\definecolor{dyellow}{rgb}{.651,.482,0}
\definecolor{lred}{rgb}{1,.6,.451}
\definecolor{dred}{rgb}{.65,.176,0}
\definecolor{dcompb}{RGB}{157,35,0}  
\definecolor{lcompb}{RGB}{186,70,30}
\definecolor{llcompb}{RGB}{255,136,92}
\definecolor{lcompbsam}{RGB}{255,136,92}  
\definecolor{dpurple}{RGB}{45,0,95}
\definecolor{mpurple}{RGB}{77,0,159}
\definecolor{lpurple}{RGB}{143,73,206}
\definecolor{purplea}{RGB}{122,24,207}
\definecolor{purpleb}{RGB}{68,17,112}
\definecolor{pptgreen}{RGB}{147, 208, 80}
\definecolor{darkred}{RGB}{200, 0, 0}

\newcommand{\rev}[1]{{#1}}

\renewcommand{\H}{\mathcal{H}}
\newcommand{\C}{\mathcal{C}}
\newcommand{\D}{\mathcal{D}}

\renewcommand{\F}{\mathcal{F}}
\newcommand{\G}{\mathcal{G}}
\newcommand{\R}{\mathcal{R}}

\newcommand{\mcM}{\mathcal{M}}
\newcommand{\mcMt}{\mathcal{M}_t}

\newcommand{\xjam}{x^\text{jam}}
\newcommand{\xcrit}{x^\text{crit}}
\newcommand{\ulx}{\underline{x}}
\newcommand{\J}{\mathcal{J}}

\usepackage[textwidth=1.5in]{todonotes}

\title{On infinitesimal contraction analysis \\ for hybrid systems} 
\author{Samuel A. Burden, Thomas Libby, and Samuel D. Coogan%
\thanks{S. A. Burden is and T. Libby was with the Department of Electrical \& Computer Engineering, University of Washington, Seattle, WA, USA, \texttt{sburden,tlibby@uw.edu}. S. D. Coogan is with the School of Electrical and Computer Engineering and the School of Civil and Environmental Engineering, Georgia Institute of Technology, Atlanta, GA, USA, \texttt{sam.coogan@gatech.edu}.  Support was provided by the U.~S. Army Research Laboratory and the U.~S. Army Research Office under contract/grant number W911NF-16-1-0158, by the Air Force Office of Scientific Research under award number FA9550-19-1-0015, and Awards \#1836819,1836932 from the National Science Foundation.
}%
\thanks{Notes on contributions:  SAB and SDC contributed the theoretical results and proofs; SDC contributed the application to transportation systems; SAB and TL contributed the application to mechanical systems.}
}

\date{}

\renewcommand\footnotemark{}

\begin{document}
\maketitle
\begin{abstract}
Infinitesimal contraction analysis, 
wherein global asymptotic convergence results are obtained from local dynamical properties,
has proven to be a powerful tool for applications in biological, mechanical, and transportation systems. 
\rev{
The technique has primarily been developed for systems governed by a single, possibly nonsmooth, differential or difference equation.
We generalize infinitesimal contraction analysis to hybrid systems governed by interacting differential and difference equations.
Importantly, we leverage an intrinsic distance function to derive the first contraction results for hybrid systems without restrictions on 
mode sequence
or 
dwell time.
}
Our theoretical results are illustrated in several examples and applications.

\end{abstract}

\section{Introduction}

A dynamical system is \emph{contractive} if all trajectories converge to one another~\cite{Lohmiller1998-xj}.
Contractive systems enjoy strong asymptotic properties, e.g. any equilibrium or periodic orbit is globally asymptotically stable.
Provocatively, these global results can sometimes be obtained by analyzing local (or \emph{infinitesimal}) properties of the system's dynamics.
In smooth differential (or difference) equations, for instance, a bound on a matrix measure (or induced norm) of the derivative of the equation can be used to prove global contractivity~\cite{Lohmiller1998-xj, Pavlov:2004lr, Sontag2010-qg, Aminzare:2014rm};
this approach has been successfully applied to biological~\cite{Raveh2016-ky, Russo2009-pg, Wang2005-zj}, mechanical~\cite{Lohmiller2000-kj,Manchester2014-kc}, and transportation~\cite{coogan2015compartmental, Como:2015ne} systems. 

Recent work has extended contraction analysis to certain classes of nonsmooth systems. Contraction for systems with a continuous vector field that is piecewise differentiable was first suggested in \cite{Lohmiller:2000kl} and rigorously characterized in \cite{di-Bernardo:2014oq}. 
Contraction of switched systems, 
potentially with sliding modes, is studied 
in \cite{Bernardo:2014nx} by explicitly considering contraction of the sliding vector field 
in \cite{Fiore:2016fj} via a regularization approach that does not require explicit computation of the sliding vector field. 
The paper \cite{Lu:2016yq} considers contraction of Carath\'{e}odory switched systems for which the time-varying switching signal is piecewise continuous and allows for different norms for each mode of the switched system.

The present paper complements and, in some cases, extends these prior works by considering a more general class
of \emph{hybrid} systems%
\footnote{We exclude Zeno behavior from consideration; %
this exclusion is informed by the applications and examples we seek to study.}
in which time-varying, state-dependent \emph{guards} trigger instantaneous transitions defined by \emph{reset} maps between distinct \emph{modes}. 
Different norms in each mode are allowed, and modes need not even be of the same dimension. 
\rev{
In contrast to previous work generalizing contraction analysis to hybrid systems, 
our generalization of infinitesimal contraction analysis 
does not restrict 
mode sequence as in~\cite{Tang2014-rd}
or 
dwell time as in~\cite{Tang2014-rd,Rifai2006-gt}.
}

At its core, the \emph{infinitesimal} approach to contractivity leverages local dynamical properties of continuous-time flow (or discrete-time reset) to bound the time rate of change of the distance between trajectories.
\rev{
This paper generalizes infinitesimal contraction analysis to hybrid systems by leveraging local dynamical properties of continuous-time flow \emph{and} discrete-time reset to bound the time rate of change of the \emph{intrinsic} distance between trajectories without imposing restrictions on 
mode sequence
or 
dwell time.
The intrinsic distance function we employ is derived 
in~\cite{Burden2015-ip} 
from the natural condition that the distance between a point in a guard and the point it resets to is zero. %
}%
This intrinsic distance function is distinct from the Skorohod~\cite{Gokhman2008-br} or Tavernini~\cite{Tavernini1987-wu} \emph{trajectory metrics}~\cite[Sec.~V-A]{Burden2015-ip} and from the \emph{distance function} introduced in~\cite{Biemond2016-ur}; it is an instantiation of the class of \emph{distance functions} defined in~\cite{Biemond2013-ph} we found particularly useful in the present context.
\rev{
Importantly, the use of this intrinsic distance function avoids restrictive conclusions regarding the closely-related notion of incremental stability~\cite[Prop.~1]{Biemond2018-dx}, a connection that will be discussed more fully in Section~\ref{sec:disc:inc} following the technical developments.
}

The conditions we obtain for infinitesimal contraction
(Theorem~\ref{thm:main} in Section~\ref{sec:main})
have intuitive appeal:
the derivative of the vector field, which captures the infinitesimal dynamics of continuous-time flow, must be infinitesimally contractive with respect to the matrix measure determined by the vector norm used in each mode~\eqref{eq:thm:mu};
the \emph{saltation  matrix}, which captures the infinitesimal dynamics of discrete-time reset%
\footnote{%
\rev{
The saltation matrix's role in contraction analysis of nonsmooth vector fields was first reported in~\cite{Fiore2017-au}.%
}
}, must be contractive with respect to the induced norm determined by the vector norms used on either side of the reset~\eqref{eq:thm:Xi}.
If upper and lower bounds on \emph{dwell time} are available, we can bound the intrinsic distance between trajectories, regardless of whether this distance is expanding or contracting in continuous- or discrete- time 
(Corollary~\ref{cor:main} in Section~\ref{sec:main}).
\rev{Furthermore, if the hybrid system is contractive with respect to the intrinsic distance function (defined in Section~\ref{sec:prelim:distance}), 
we show that the system is necessarily infinitesimally contractive in continuous and discrete time
(Theorem~\ref{thm:necessity} in Section~\ref{sec:necessity}).}
We present several examples 
(in Section~\ref{sec:main}) 
and applications 
(in Section~\ref{sec:appl}) 
to illustrate these theoretical contributions, and conclude with a discussion of our results and possible extensions (Section~\ref{sec:disc}).

\section{Notation}

Given a collection of sets $\set{S_\alpha}_{\alpha\in A}$ indexed by $A$, the \emph{disjoint union} of the collection is defined $\coprod_{\alpha\in A} S_\alpha=\bigcup_{\alpha\in A}(\{\alpha\}\times S_\alpha)$. 
Given $(a,x)\in\coprod_{\alpha\in A}S_\alpha$, we will simply write $x\in\coprod_{\alpha\in A}S_\alpha$ when $A$ is clear from context. 
For a function $\gamma$ with scalar argument, we denote limits from the left and right \rev{(when these exist)} by $\gamma(t^-)=\lim_{\sigma\uparrow t}\gamma(\sigma)$ and $\gamma(t^+)=\lim_{\sigma\downarrow t}\gamma(\sigma)$.
Given a smooth function $f : X \times Y \into Z$, 
we let 
$D_xf : TX \times Y \into TZ$
denote the 
\emph{derivative of $f$ with respect to $x\in X$}
and 
$Df = (D_x f, D_y f) : TX \times TY \into TZ$ 
denote the derivative of $f$ with respect to both $x\in X$ and $y\in Y$.
Here, $TX$ denotes the \emph{tangent bundle} of $X$; when $X\subset\Rbb^d$ we have $TX = X\times\Rbb^d$.
The \emph{induced norm} of a linear function $M:\Rbb^{n_{j}}\into\Rbb^{n_{j'}}$~is %
  $\|M\|_{j,j'}=\sup_{x\in \mathbb{R}^{n_{j}}}{|Mx|_{j'}}/{|x|_{j}}$
where $|\cdot|_j$ and $|\cdot|_{j'}$ denote the vector norms on $\mathbb{R}^{n_j}$ and $\mathbb{R}^{n_{j'}}$, respectively; 
when the norms are clear from context, we omit the subscripts. 
The \emph{matrix measure} of $A\in\Rbb^{n\times n}$, denoted $\mu(A)$, is %
\begin{align}
  \label{eq:51}
  \mu(A)=\lim_{h\downarrow 0}\frac{(\|I+hA\|-1)}{h}.
\end{align}

\section{Preliminaries}
\label{sec:prelim}
A \emph{hybrid system} is a tuple $\H=(\D,\F,\G,\R)$ where:
\begin{itemize}
\item[$\D$] $=\coprod_{j\in \J}\D_j$ 
  is a set of states
  where 
  $\J$ is a finite set of discrete states \rev{or \emph{modes}}
  and
  $\D_j=\mathbb{R}^{n_j}$ is a set of continuous states for each $j\in\J$ equipped with a norm $|\cdot|_j$ for some $n_j\in\mathbb{N}$;
\item[$\F$] $:[0,\infty)\times\D\to T\D$ is a time-varying vector field that we interpret as 
$\F_j=\left.\F\right|_{[0,\infty)\times\D_j}:[0,\infty)\times \D_j\to\mathbb{R}^{n_j}$ for each $j\in\J$;
\item[$\G$] $=\coprod_{j\in \J} \G_j$ is a time-varying guard set with $\G_j\subset [0,\infty)\times \D_j$ for all $j\in\J$;
\item[$\R$] $:\G\to\D$ is a time-varying reset map.
\end{itemize}

\noindent
We let $\G_{j,j'} = \G_{j}\cap\R^{-1}(\D_{j'})$ denote times and states in $\D_j$ that reset to $\D_{j'}$ for all $j,j'\in\J$, and given $t \geq 0$ we define $\G(t) = \G\cap\paren{\set{t}\times\D}$, $\G_j(t) = \G_j\cap\G(t)$, $\G_{j,j'}(t) = \G_{j,j'}\cap\G(t)$.
If a component such as $\F$, $\G$, or $\R$ is time-invariant, then we suppress time dependence in the corresponding notation.

\noindent
Before we assess infinitesimal contractivity, we first impose restrictions on the \emph{components} of the hybrid system as well as its \emph{flow}, that is, the collection of trajectories it accepts.
To help motivate and contextualize the assumptions,
we provide expository remarks 
following each assumption 
that explain how each condition is employed in what follows
and what specific dynamical phenomena it precludes,
and 
in Section~\ref{sec:disc:assum} we discuss why these assumptions are satisfied in several application domains of interest. 

\subsubsection{Hybrid system components and constructions}
\label{sec:prelim:components}
We begin by stating and discussing assumptions on the hybrid system components.

\begin{assum}[hybrid system components]
\label{assum:system}
For any hybrid system $\H = (\D,\F,\G,\R)$:
\begin{itemize}
\item[\ref{assum:system}.1] 
(vector field is Lipschitz and differentiable)
$\F_j = \F|_{\D_j}:[0,\infty)\times\D_j\into\D_j$ is globally Lipschitz continuous and continuously differentiable for all $j\in\J$;
\item[\ref{assum:system}.2] 
(discrete transitions are isolated)
$\R(t,\G(t))\cap\G(t) = \emptyset$ for all $t\in[0,\infty)$;
\item[\ref{assum:system}.3] 
(guards and resets are differentiable)
there exists continuously differentiable and nondegenerate%
\footnote{i.e. $D_x g_{j,j'}(t,x) \ne 0$ for all $t\in[0,\infty)$, $x\in\D_j$}
$g_{j,j'}:[0,\infty)\times\D_j\to \mathbb{R}$ such that $\G_{j,j'}(t)\subseteq \{x\in \D_j: g_{j,j'}(t,x)\leq 0\}\subseteq \G_{j}(t)$ 
and 
there exists continuously differentiable $\R_{j,j'}: \{(t,x) : x \in \D_j,\ g_{j,j'}(t,x)\leq 0\}\to \D_{j'}$ such that $\left.\R_{j,j'}\right|_{\G_{j,j'}(t)}=\left. \R \right|_{\G_{j,j'}(t)}$
for each $j,j'\in\J$ and $t \ge 0$ 
(whenever $\G_{j,j'}(t)\neq \emptyset$); 
\item[\ref{assum:system}.4] 
(vector field is transverse to guard)
$D_t g_{j,j'}(t,x) + D_x g_{j,j'}(t,x)\cdot \F_j(t,x) < 0$
for all $j,j'\in\J$, $t\geq 0$, and $x\in \G_{j,j'}(t)$.
\end{itemize}
\end{assum}

\noindent
Before we proceed, we make a number of remarks about the preceding Assumption.

\begin{rem}[vector fields generate differentiable global flows]
\label{rem:system:F}
Assumption~{\ref{assum:system}.1}
ensures there exists a continuously differentiable \emph{flow} $\phi_j : [0,\infty)\times[0,\infty)\times\D_j \into \D_j$ for $\F_j$.
In other words, if $x:[\tau,\infty)\into\D_j$ denotes the trajectory for $\F_j$ initialized at $x(\tau)\in\D_j$, then $x(t) = \phi(t,\tau,x)$ for all $t\in[\tau,\infty)$.
This condition enables application of classical infinitesimal contractivity analysis for continuous-time flows.
\end{rem}

\begin{rem}[discrete transitions are isolated without loss of generality]
\label{rem:system:R}
Since Assumption~\ref{assum:flow}.1 below will (in particular) prevent an infinite number of discrete transitions from occurring at the same time instant, 
Assumption~\ref{assum:system}.2 is imposed without loss of generality. 
Indeed, given a hybrid system that permitted at most $m$  discrete transitions at the same instant of time (an example with $m = 2$ can be found in~\cite[Thm.~8]{Johnson2016-nh}), the reset map could be replaced with its $m$-fold composition to yield a hybrid system with isolated discrete transitions that has the same set of trajectories (as defined below).
\end{rem}

\begin{rem}[guards are closed]
\label{rem:system:G}
Letting $\ol{\G}_{j,j'}(t) = \{x\in \D_j: g_{j,j'}(t,x)\leq 0\}$, and noting that $\ol{\G}_{j,j'}(t)$ is closed by continuity of $g_{j,j'}$, we observe that
$\G_j(t) = \bigcup_{j'\in\J} \G_{j,j'}(t) \subset \bigcup_{j'\in\J} \ol{\G}_{j,j'}(t) \subset \G_j(t)$,
whence 
Assumption~\ref{assum:system}.3 ensures
$\G_j(t) = \bigcup_{j'\in\J} \ol{\G}_{j,j'}(t)$ is a closed subset of $\D_j$ (once time dependence is dropped); guard closure is a crucial property used in the definition of the time-of-impact map in the next Remark.
(Note that the disjoint components of the guard, $\G_{j,j'}(t)$, are not required to be closed; this affordance will be helpful in the applications considered below.)
\end{rem}

\begin{rem}[time-of-impact is differentiable]
  \label{rem:system:tti}
  For all $j,j'\in\J$, let the \emph{time-of-impact} $\nu_{j,j'} : [0,\infty)\times \D_j\into[0,\infty]$ for the set $\ol{\G}_{j,j'}=\coprod_{t \geq 0} \ol{\G}_{j,j'}(t)$ containing $\G_{j,j'}$ be defined by 
\eqnn{\label{eq:tti}
\nu_{j,j'}(\tau,x) = \inf\set{t \ge \tau : g_{j,j'}(t,\phi_j(t,\tau,x)) \le 0},
}
with the convention that $\inf\emptyset = \infty$.
Assumption~\ref{assum:system}.3 
ensures $\nu_{j,j'}$ is well-defined, 
and
\rev{
Assumption~\ref{assum:system}.4 ensures $\nu_{j,j'}$ is continuously differentiable wherever it is finite;
these properties of the time-of-impact will play a crucial role in the proofs of Propositions~\ref{prop:trajectories} and~\ref{prop:salt} as well as Theorem~\ref{thm:main}.
}
\end{rem}

Before moving on, we use the time-of-impact maps associated with individual guards to define time-of-impact maps applicable to a mode's entire guard; these maps will subsequently be used to construct the hybrid system's trajectories.
For all $j\in\J$, let $\nu_j:[0,\infty)\times\D_j \into [0,\infty]$ be defined by
\eqnn{
\nu_j(\tau,x) = \min\set{\nu_{j,j'}(\tau,x) : j'\in \J}.
}

\subsubsection{Hybrid system trajectories and flow}
\label{sec:prelim:flow}
Informally, a {trajectory}%
\footnote{Since our analysis concerns infinitesimal contraction in continuous time, we deliberately avoid the concept of an \emph{execution}~\cite[Def.~II.3]{Lygeros2003-zx}, which is conventionally defined over a \emph{hybrid time domain}, that is, a set that indexes both continuous and discrete time.  
Formally, our \emph{trajectory} concept can be regarded as the right-continuous time parameterization (uniquely determined by Assumption~\ref{assum:system}.2) of the image of the corresponding execution.}
of a hybrid system is a 
right-continuous function of time that satisfies the continuous-time dynamics specified by $\F$ and the discrete-time dynamics specified by $\G$ and $\R$.
Formally, a function $\chi:[\tau,T)\to \D$ with $\tau\geq 0$ is a \emph{trajectory} of the hybrid system if:
\begin{enumerate}
\item $  D\chi(t)=\F(t,\chi(t))$ for almost all $t\in[\tau,T)$;
\item $\chi(t^+)=\chi(t)$ for all $t\in[\tau,T)$;
\item $\chi(t^-)= \chi(t)$ if and only if $\chi(t)\not\in \G(t)$;
\item whenever $\chi(t^-)\neq \chi(t)$, then $\chi(t^-)\in\G(t)$ and $\chi(t)=\R(t,\chi(t^-))$.
\end{enumerate}

\noindent
Note that it is allowed, but not required, that $T = \infty$ (although we will shortly impose additional assumptions that ensure trajectories are defined for all positive time).
If the domain of $\chi$ cannot be extended in forward time to define a trajectory on a larger time domain, then $\chi$ is termed \emph{maximal}.
The following Proposition ensures that 
maximal trajectories exist and are unique
under the conditions in  Assumption~\ref{assum:system}; its proof is standard~\cite[Thm.~III-1]{Lygeros2003-zx}.

\begin{prop}[existence and uniqueness of trajectories]
\label{prop:trajectories}
Under the conditions in Assumption~\ref{assum:system},
there exists a unique maximal trajectory $\chi:[\tau,T)\into\D$ satisfying $\chi(\tau)=x$ if $x\in\D\sm\G(\tau)$ or $\chi(\tau) = \R(\tau,x)$ if $x\in\G(\tau)$ for any initial state $x\in\D$ and initial time $\tau\geq 0$.
\end{prop}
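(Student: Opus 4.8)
The plan is to build the (necessarily unique) maximal trajectory by alternately flowing along $\F$ and applying the reset $\R$, and to observe that at each stage the trajectory axioms leave no freedom, so that the same construction yields both existence and uniqueness.

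First I would dispose of the dichotomy in the statement: if $x\in\G(\tau)$, replace $x$ by $\R(\tau,x)$, which lies in $\D\sm\G(\tau)$ by Assumption~\ref{assum:system}.2, so it suffices to construct a trajectory from an initial state $x_0 := \chi(\tau)\in\D_{j_0}\sm\G_{j_0}(\tau)$. Setting $t_0=\tau$, I then iterate: given a transition time $t_k$ and a state $x_k:=\chi(t_k)\in\D_{j_k}\sm\G_{j_k}(t_k)$, Assumption~\ref{assum:system}.1 supplies a globally defined $C^1$ flow $\phi_{j_k}$ for $\F_{j_k}$, and I put $\chi(t)=\phi_{j_k}(t,t_k,x_k)$ on $[t_k,t_{k+1})$ where $t_{k+1}=\nu_{j_k}(t_k,x_k)\in(t_k,\infty]$; the strict inequality $t_{k+1}>t_k$ holds because $x_k\notin\G_{j_k}(t_k)$ forces each continuous function $g_{j_k,j'}$ to stay strictly positive along the flow for a short time. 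If $t_{k+1}=\infty$ the construction stops with $T=\infty$. Otherwise, using $\G_{j_k}(t)=\bigcup_{j'}\ol{\G}_{j_k,j'}(t)$ (closedness of guards), the left limit $\chi(t_{k+1}^-)=\phi_{j_k}(t_{k+1},t_k,x_k)$ satisfies $g_{j_k,j'}(t_{k+1},\chi(t_{k+1}^-))\le0$ for the minimizing index $j'$, hence $\chi(t_{k+1}^-)\in\G_{j_k}(t_{k+1})$, and I put $\chi(t_{k+1})=x_{k+1}:=\R(t_{k+1},\chi(t_{k+1}^-))\in\D_{j_{k+1}}\sm\G_{j_{k+1}}(t_{k+1})$, the last membership again by Assumption~\ref{assum:system}.2. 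The transition times are thus strictly increasing, and with $T:=\sup_k t_k\le\infty$ this defines $\chi:[\tau,T)\into\D$.

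Next I would verify that $\chi$ is a trajectory. Conditions (1) and (2) are immediate from the flow property. Conditions (3) and (4) hold because, by construction, $\chi$ is right--continuous with discontinuities exactly at the impact times $t_{k+1}$, where $\chi(t_{k+1}^-)\in\G_{j_k}(t_{k+1})$ and $\chi(t_{k+1})=\R(t_{k+1},\chi(t_{k+1}^-))$, whereas on each interval $[t_k,t_{k+1})$ the definition of $\nu_{j_k}$ as a first hitting time of the \emph{closed} set $\G_{j_k}$ keeps $\chi(t)\notin\G_{j_k}(t)$, so no transition is missed; together with $\chi(\tau)=x_0\notin\G_{j_0}(\tau)$ this accounts for every $t\in[\tau,T)$. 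Maximality is immediate when $T=\infty$; when $T<\infty$, infinitely many transitions accumulate at $T$ and no trajectory on a larger interval $[\tau,T')$, $T'>T$, can restrict to $\chi$ on $[\tau,T)$, so $\chi$ is maximal.

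Finally I would establish uniqueness by induction on $k$: ODE uniqueness (Assumption~\ref{assum:system}.1) pins down any trajectory with the prescribed initial data on each flow interval; condition (3) forces a transition precisely at the first time the trajectory meets the guard, namely at $\nu_{j_k}(t_k,x_k)$ — a quantity determined by $(t_k,x_k)$, which Assumption~\ref{assum:system}.4 moreover renders a transverse crossing (so that the impact time is robustly defined, via differentiability of the time--of--impact); and since $\R$ is a function, condition (4) determines the post--transition state. Hence any two trajectories from the same initial state coincide on the common part of their domains, so the maximal one is unique. The only real work is the bookkeeping above — closedness of $\G$ (Assumption~\ref{assum:system}.3) to rule out transitions missed during flow, transversality (Assumption~\ref{assum:system}.4) to control the impact times, and Assumption~\ref{assum:system}.2 to keep successive transition times apart — and this is exactly what is carried out in~\cite[Thm.~III-1]{Lygeros2003-zx}.
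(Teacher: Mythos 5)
Your construction is essentially the paper's own proof: dispose of the initial guard case via Assumption~\ref{assum:system}.2, then inductively alternate the flow $\phi_{j}$ up to the time--of--impact $\nu_{j}$ with an application of $\R$, concluding that the maximal interval is finite only for Zeno trajectories (the paper phrases the ``never reaches a guard'' case via membership in $\D_j\sm\V_j$ rather than via $\nu_{j_k}(t_k,x_k)=\infty$, but these are the same test). Your write--up is in fact more detailed than the paper's, which defers the verification of the trajectory axioms, maximality, and uniqueness to the standard argument in~\cite[Thm.~III-1]{Lygeros2003-zx}.
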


\begin{proof}
Let $\tau \ge 0$ and $x\in\D_j$; if $x\in\G(\tau)$ then set $x = \R(\tau,x) \not\in\G(\tau)$.
If %
the time-to-impact map $\nu_j(\tau,x)=\infty$,
then the trajectory remains within $\D_j$ for all forward time, 
in which case we let $\chi|_{[\tau,\infty)}(s) = \phi_j(s,\tau,x)$ for all $s\in[\tau,\infty)$.
Otherwise, %
the trajectory flows to $\G(t)$ at time $t = \nu_j(\tau,x) > \tau$,
in which case we let $\chi|_{[\tau,t)}(s) = \phi_j(s,\tau,x)$ for all $s\in[\tau,t)$ and set $\chi(t) = \R(t,\phi_j(t,\tau,x))$.  
Applying the procedure described in the preceding sentences inductively 
from initial state $\chi(t)\in\D\sm\G(t)$ at initial time $t\ge 0$ uniquely determines $\chi$ at all times on a maximal interval $s\in[\tau,T)$, where $T < \infty$ if and only if the trajectory is \emph{Zeno}, that is, undergoes an infinite number of discrete transitions on the interval $[0,T)$.
\end{proof}

We will restrict the class of trajectories exhibited by the hybrid system in Assumption~\ref{assum:flow} below. 
Before imposing these restrictions, we first develop tools that enable analysis of how trajectories vary with respect to initial conditions.
At every time $t \ge 0$,
the restriction of the reset map 
$\Rt = \R|_{\G(t)}$ 
induces an equivalence relation $\Rtsim$ on $\D$ defined as the smallest equivalence relation containing 
\eqn{\set{ (x,y)\in \G(t) \times \D:\R(t,x) = y }\subset \D\times \D,} 
for which we write $x \Rtsim y$ to indicate $x$ and $y$ are related. The equivalence class for $x\in\D$ is defined as $[x]_{\Rt}=\set{y\in \D|x\Rtsim y}$. 
The time-varying \emph{quotient space} induced by the equivalence relation is denoted
\begin{align}
  \mcMt=\{[x]_{\Rt}|x\in \D\}
\end{align}
endowed with the quotient topology~\cite[Appendix~A]{Lee2012-mb};
we note that such quotient spaces have been studied repeatedly in the hybrid systems literature~\cite{Schatzman1998-gt, Simic2005-fv, Ames2006-uj, Burden2015-ip}.

To define a distance function on the quotient $\mcMt$, we will adopt the approach in~\cite{Burden2015-ip} and use the length of paths that are continuous in the quotient.
A path $\gamma:[0,1]\to \D$ is \emph{smoothly $\Rt$-connected} if
there exists an open set $\mathcal{O}\subset [0,1]$ such that:
the relative complement $\co{\mathcal{O}}\rev{\subset[0,1]}$ is countable (so that, in particular, the closure $\cl{\mathcal{O}}=[0,1]$);
$\gamma$ is continuously differentiable on $\mathcal{O}$; 
and 
$\lim_{r'\uparrow r} \gamma(r')\Rtsim \lim_{r'\downarrow r} \gamma(r')$ for all $r\in(0,1)$ and $\gamma(0) = \lim_{r'\downarrow 0}\gamma(r')$, 
$\gamma(1) = \lim_{r'\uparrow 1}\gamma(r')$. 

A set $\mathcal{O}$ satisfying the above conditions is termed a \emph{support set} for $\gamma$ at time $t$.
Intuitively, a smoothly $\Rt$-connected path $\gamma$ is a path through the modes $\{\D_j\}_{j\in \J}$ of the hybrid system that is allowed to jump through the reset map $\Rt$ 
(forward or backward) and is smooth almost everywhere. With a slight abuse of notation,%
\footnote{Formally, $\pi_t\circ\gamma$ is a path in $\mcMt$, where $\pi_t:\D\to\mcMt$ is the \emph{quotient projection}.}
we consider $\gamma$ a path in $\mcMt$. 
With this identification, all $\Rt$-connected paths are (more precisely:  descend to) continuous paths in the quotient space $\mcMt$.
Any support set $\mathcal{O}$ for a smoothly $\Rt$-connected path $\gamma$ 
is a countable union of (disjoint) open intervals; %
let $\mathcal{O} = \bigcup_{i=1}^k (u_i,v_i)$ with possibly $k=\infty$.
Because each segment $\left.\gamma\right|_{(u_i,v_i)}$ is continuously differentiable, 
the segment is (in particular) continuous, 
so its image must necessarily belong to a single $\D_j$ for some $j\in\J$.

\begin{assum}[hybrid system flow]
\label{assum:flow}
For any hybrid system $\H = (\D,\F,\G,\R)$:
\begin{itemize}
\item[\ref{assum:flow}.1] 
(Zeno) 
no trajectory undergoes an infinite number of resets in finite time;
\item[\ref{assum:flow}.2] 
(continuity of hybrid system flow) 
with $\phi:\e{F}\into\D$ denoting the hybrid system flow, i.e. $\phi(t,\tau,x) = \chi(t)$ where%
\footnote{Note that Assumption~\ref{assum:flow}.1 ensures that the maximal time interval in Proposition~\ref{prop:trajectories} is $[\tau,\infty)$, i.e. $T = \infty$.}
$\chi:[\tau,\infty)\into\D$ is the unique trajectory initialized at $\chi(\tau) = x$, 
the projection 
$\pi_t\circ\phi(t,\tau,x)$, 
regarded as a function $[0,\infty)\times\D\into\mcMt$,
is continuous;
\item[\ref{assum:flow}.3] 
(support sets of smoothly $\Rt$-connected paths) 
for all $t\geq \tau \geq 0$ and all smoothly $\Rtau$-connected paths $\gamma\in\Gamma(\tau)$, 
$\phi(t,\tau,\gamma(\cdot))$ is smoothly $\Rt$-connected and has a support set
$\mathcal{O}$ such that, for all $r'\in\mathcal{O}$, 
there exists $\epsilon>0$ 
such that all trajectories $\phi(\cdot,\tau,\gamma(r))$ with $r\in (r'-\epsilon,r'+\epsilon)$ undergo the same sequence of discrete state transitions as $\phi(\cdot,\tau,\gamma(r'))$ on the time interval $[\tau,t]$.
\end{itemize}
\end{assum}

\noindent
Before we proceed, we make a number of remarks about the preceding Assumption.

\begin{rem}[Zeno \rev{and forward completeness}]
\label{rem:zeno}
Since our results below will strongly leverage the fact that the hybrid system flow is everywhere locally a composition of a finite number of differentiable flows and resets, we cannot easily extend our approach to accommodate Zeno trajectories.
\rev{
By excluding Zeno, we ensure that all trajectories exist uniquely on infinite forward time horizons, i.e.\ $T = \infty$ in Proposition~\ref{prop:trajectories}.
}
\end{rem}

\begin{rem}[continuity of hybrid system flow]
Trajectories cannot be infinitesimally contractive wherever the flow is discontinuous.
To see this, note that the construction of the path-length distance function $d_t$ below and its compatibility with the topology of the hybrid quotient space $\mcM_t$ ensures the distance between trajectories on either side of a discontinuity will grow linearly with time.
\end{rem}

\begin{figure}
\centering{

\begin{minipage}[b]{1.0\columnwidth}
  \centering
{
  \def\svgwidth{6cm} 
\resizebox{!}{2.6cm}{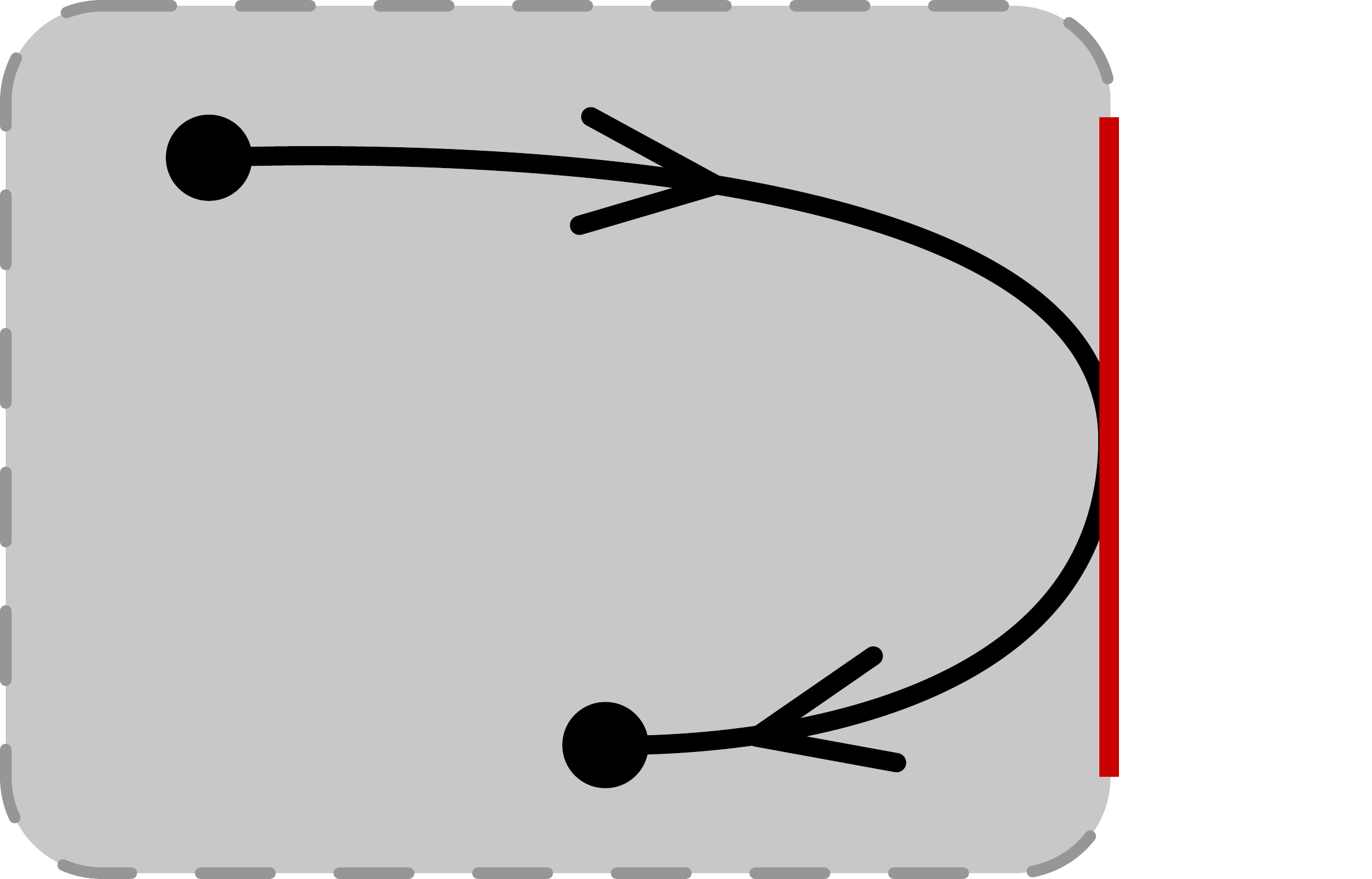}
  }
  \subcaption{{\bf disallowed:}  grazing}\label{fig:assum:grazing}
\end{minipage}%
\\
\begin{minipage}[b]{\columnwidth}
  \centering
{
  \def\svgwidth{13cm} 
  \resizebox{!}{2.6cm}{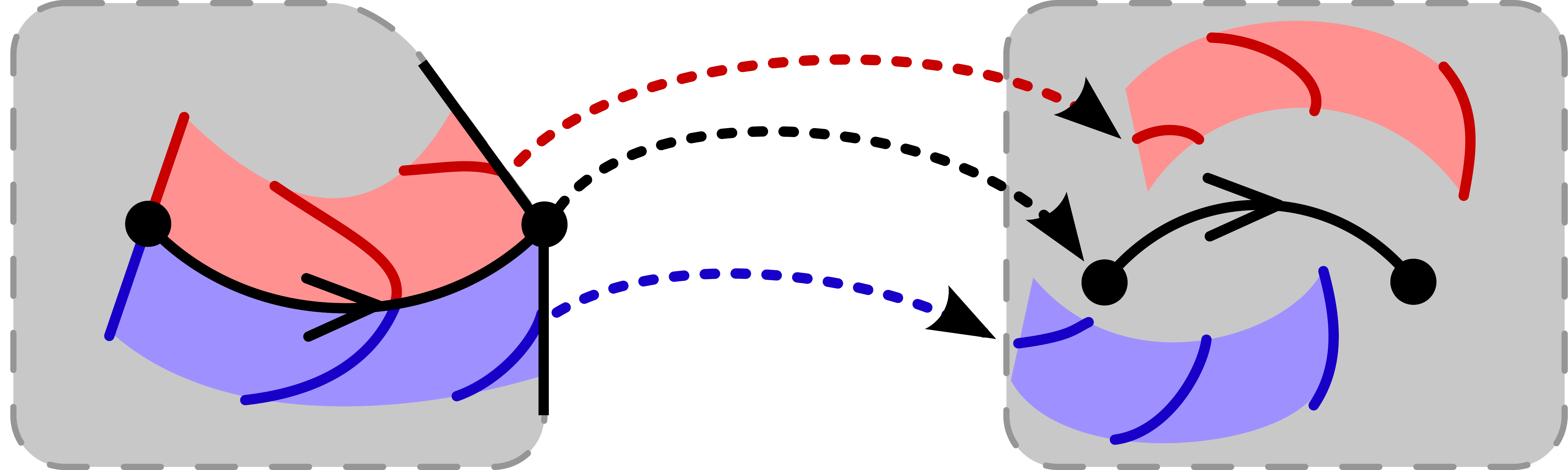}
  }
  \subcaption{{\bf disallowed:}  discontinuous trajectory outcomes}\label{fig:assum:discont}
\end{minipage}%
\\
\begin{minipage}[b]{1.0\columnwidth}
  \centering
{
  \def\svgwidth{10cm} 
  \resizebox{!}{2.6cm}{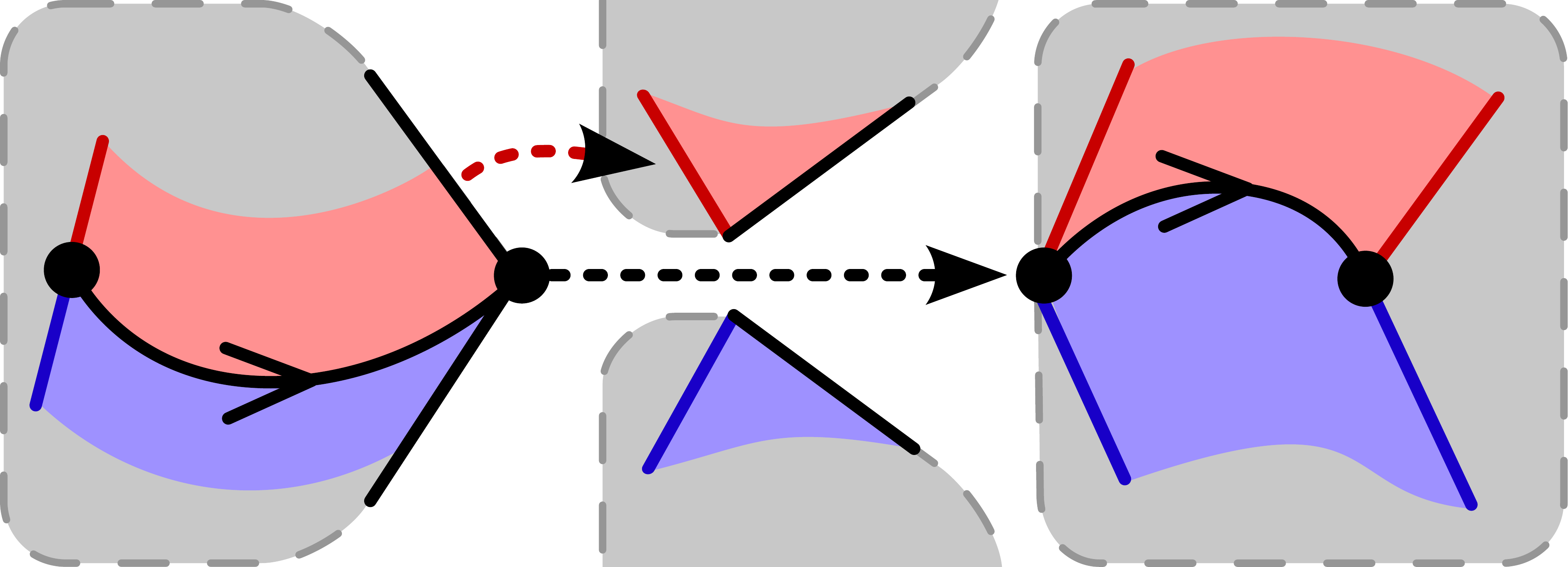}
  }
  \subcaption{{\bf allowed:}  overlapping guards with continuous trajectory outcomes}\label{fig:assum:cont}
\end{minipage}%
}
\caption{%
Illustration of some dynamical phenomena that are (dis)allowed by Assumptions~\ref{assum:system} and~\ref{assum:flow}.
{\bf Disallowed:}  
trajectory intersects guard without undergoing reset, i.e. \emph{grazing} (1.4, \emph{vector field is transverse to guard});
trajectory outcomes depend discontinuously on initial conditions (2.2, \emph{continuity of hybrid system flow}).
{\bf Allowed:}  
nonempty intersection of guard closures, i.e. overlapping guards;
continuous and piecewise-differentiable trajectory outcomes;
\rev{
trajectories that visit different sequences of modes;
}
different norms used in different modes,
which can have different dimensions.%
}
\label{fig:assum}
\end{figure}

\begin{rem}[support sets of $\Rt$-connected paths] 
\label{rem:support}
Note that 
Assumption~\ref{assum:system} already suffices to ensure the conditions in 
Assumptions~\ref{assum:flow}.2-\ref{assum:flow}.3  hold
when and where guards do not ``overlap'',
i.e. where the intersection of their closures is empty, $\cl{\G}_{j,j'}\cap\cl{\G}_{j,j''} = \emptyset$. %
When and where guards do overlap ($\cl{\G}_{j,j'}\cap\cl{\G}_{j,j''} \ne \emptyset$), %
Assumption~\ref{assum:flow}.3
does not require that \emph{all} trajectories along a path undergo the same sequence of discrete state transitions, only that the path's domain contains an open dense subset wherein each connected component undergoes the same sequence of discrete state transitions on finite time horizons. %
\rev{
We emphasize that this condition \emph{does not} require that all trajectories visit the same sequence of modes as illustrated in~Fig.~\ref{fig:assum}(c).
}
\end{rem}

It is well-known~\cite{Aizerman1958-ih} 
that, 
under favorable conditions, %
the hybrid system flow $\phi$ is differentiable almost everywhere and, 
moreover, 
its derivative can be computed by solving a jump-linear-time-varying differential equation.
The preceding assumptions %
are favorable enough to ensure the flow has these properties so that, 
in particular, 
the derivative along a path can be computed using 
the jump-linear-time-varying differential equation.
These facts are summarized in the following Proposition, whose proof is standard~\cite{Aizerman1958-ih}.

\begin{prop}
\label{prop:salt}
Under Assumptions~\ref{assum:system} and~\ref{assum:flow},
given an initial time $\tau\geq 0$ 
and 
a smoothly $\R_\tau$-connected path
$\gamma\in\Gamma_\tau$, 
let $\psi(t,r)=\phi(t,\tau,\gamma(r))$ for all $t\ge \tau$  
and define
\begin{align}
  \label{eq:24}
  w(t,r)=D_r\psi(t,r)
\end{align}
whenever the derivative exists. 
Then 
$w(\tau^-,r)=D_r\gamma(r)$ 
and 
$w(\cdot,r)$ satisfies a linear-time-varying differential equation
\begin{align}
\label{eq:variational}
  D_t{w}(t,r)=D_x \F(t,\psi(t,r))w(t,r),\ \psi(t^-,r)\in \D\backslash\G(t),
\end{align}
with jumps 
\begin{align}
\label{eq:saltation}
  w(t,r)=\Xi(t,\psi(t^-,r)) w(t^-,r),\ \psi(t^-,r)\in \G(t),
\end{align}
where $\Xi(t,x)$ is a \emph{saltation matrix} given by \eqref{eq:saltmat} at the top of the next page for all $t\geq 0$ and all $x\in \G_{j,j'}(t) = \G{_j}(t)\cap\R^{-1}(\D_{j'})$.
\newcounter{MYtempeqncnt}
\begin{figure*}[!t]
\normalsize
\eqnn{
\label{eq:saltmat} \Xi(t,x)=& D_x\R(t,x) +
\frac{\left(\F_{j'}(t,\R(t,x))-D_x\R(t,x)\cdot \F_{j}(t,x) - D_t\R(t,x) \right)\cdot D_xg_{j,j'} (t,x)}{D_t g_{j,j'}(t,x) + D_x g_{j,j'}(t,x)\cdot \F_{j}(t,x)}
}
\hrulefill
\vspace*{4pt}
\end{figure*}
\end{prop}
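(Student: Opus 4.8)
The plan is to localize using the structure of the hybrid flow provided by Assumption~\ref{assum:flow}. Fix the initial time $\tau$ and the path $\gamma\in\Gamma_\tau$, and set $\psi(t,r)=\phi(t,\tau,\gamma(r))$. By Assumptions~\ref{assum:flow}.3--\ref{assum:flow}.4, for each fixed $t\ge\tau$ the map $r\mapsto\psi(t,r)$ is a smoothly $\R_t$--connected path admitting a support set $\mathcal{O}\subseteq[0,1]$ such that, for every $r'\in\mathcal{O}$, all trajectories $\phi(\cdot,\tau,\gamma(r))$ with $r$ in a neighborhood of $r'$ undergo the same finite sequence of discrete transitions on $[\tau,t]$. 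On such a neighborhood $\psi(t,\cdot)$ is therefore an explicit finite composition of the continuously differentiable flows $\phi_j$ (Assumption~\ref{assum:system}.1), the continuously differentiable reset restrictions $\R_{j,j'}$ (Assumption~\ref{assum:system}.3), and the time--of--impact maps $\nu_{j,j'}$, which are continuously differentiable wherever finite by Assumption~\ref{assum:system}.4. Hence $w(t,r')=D_r\psi(t,r')$ exists for $r'$ in the (open, dense, full--measure) set $\mathcal{O}$, and it suffices to verify the three identities there.

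The identity $w(\tau^-,r)=D_r\gamma(r)$ is immediate from $\psi(\tau^-,r)=\gamma(r)$. For the continuous--time evolution, on any time interval containing no reset of nearby trajectories we have $\psi(s,r)=\phi_j(s,s_0,\psi(s_0,r))$ for any fixed $s_0$ in that interval, so the chain rule gives $w(s,r')=D_x\phi_j(s,s_0,\psi(s_0,r'))\,w(s_0,r')$. The fundamental matrix $\Phi(s):=D_x\phi_j(s,s_0,\psi(s_0,r'))$ obeys the classical variational equation $\dot\Phi(s)=D_x\F_j(s,\psi(s,r'))\,\Phi(s)$, $\Phi(s_0)=I$, so $w(\cdot,r')$ satisfies \eqref{eq:variational}; letting $s\uparrow t$ at a reset time $t$ also yields $w(t^-,r')=D_x\phi_j(t,s_0,\psi(s_0,r'))\,w(s_0,r')$.

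For the jump, fix a reset time $t$ at which $\psi(\cdot,r')$ passes from $\D_j$ to $\D_{j'}$ through $\G_{j,j'}(t)$, and choose $s_0<t$ with no reset of nearby trajectories on $[s_0,t)$; since the impact time $t(r)=\nu_{j,j'}(s_0,\psi(s_0,r))$ varies continuously in $r$ with $t(r')=t$, the map $r\mapsto\psi(s,r)$ is continuously differentiable near $r'$ for every $s$ slightly below or above $t$, and we read $w(t^-,r')$ and $w(t,r')=w(t^+,r')$ as the corresponding one--sided limits of the continuous--time variational vector. Write $x^-=\psi(t^-,r')$, $x^+=\psi(t,r')=\R(t,x^-)$, and $\dot t=D_rt(r')$. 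For $s$ slightly above $t$,
\[
\psi(s,r)=\phi_{j'}\bigl(s,\,t(r),\,\R_{j,j'}(t(r),\phi_j(t(r),s_0,\psi(s_0,r)))\bigr);
\]
differentiating at $r=r'$ with the chain rule, using $\tfrac{\partial}{\partial\sigma}\phi_{j'}(s,\sigma,x)=-D_x\phi_{j'}(s,\sigma,x)\,\F_{j'}(\sigma,x)$ (the derivative in the initial--time argument $\sigma$) and $D_r\bigl[\phi_j(t(r),s_0,\psi(s_0,r))\bigr]\big|_{r=r'}=\F_j(t,x^-)\,\dot t+w(t^-,r')$, and letting $s\downarrow t$ (so $D_x\phi_{j'}(s,t,x^+)\to I$) gives
\[
w(t,r')=D_x\R(t,x^-)\,w(t^-,r')-\bigl(\F_{j'}(t,x^+)-D_x\R(t,x^-)\,\F_j(t,x^-)-D_t\R(t,x^-)\bigr)\dot t .
\]
Finally, implicitly differentiating the impact identity $g_{j,j'}\bigl(t(r),\phi_j(t(r),s_0,\psi(s_0,r))\bigr)=0$ at $r=r'$ yields
\[
\dot t=\frac{-\,D_xg_{j,j'}(t,x^-)\,w(t^-,r')}{D_tg_{j,j'}(t,x^-)+D_xg_{j,j'}(t,x^-)\,\F_j(t,x^-)},
\]
with nonzero (indeed negative) denominator by the transversality Assumption~\ref{assum:system}.4. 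Substituting this expression for $\dot t$ into the previous display and factoring out $w(t^-,r')$ collapses it to $w(t,r')=\Xi(t,x^-)\,w(t^-,r')$ with $\Xi$ precisely the saltation matrix \eqref{eq:saltmat}.

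I expect the only genuine obstacle to be the bookkeeping of the jump step: one must propagate the $r$--dependence of the impact time through every factor (flow to impact, reset map, flow from impact), keep careful track of which quantities are evaluated at the pre--reset state $x^-$ versus the post--reset state $x^+$, and correctly carry the explicit time dependence of $\F$, $\R$, and $g$, which is exactly what produces the $D_t\R$ and $D_tg$ terms that would be absent in the time--invariant case. The conceptual point is that the variational vector immediately after the reset must be evaluated at the fixed time $t$: it equals the total $r$--derivative of the post--reset state corrected by $-\F_{j'}(t,x^+)\dot t$ to undo the motion of the impact time $t(r)$. Once this correction is in place, \eqref{eq:saltmat} emerges by mechanical algebra.
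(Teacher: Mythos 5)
Your proposal is correct and follows essentially the same route as the paper's proof: both express the post--reset trajectory as the composition flow--to--impact, reset, flow--from--impact, differentiate through the $r$--dependent impact time using transversality (Assumption~\ref{assum:system}.4), and let the evaluation time tend down to the reset time so that the fundamental matrix of the post--reset flow tends to the identity, which collapses the expression to the saltation matrix~\eqref{eq:saltmat}. The only cosmetic difference is that you obtain $D_r t(r')$ by implicitly differentiating the impact identity, whereas the paper quotes the equivalent formula for $D_2\nu_{j,j'}$ directly.
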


\begin{proof}[Proof]
Given $r\in[0,1]$,
suppose that the trajectory $\chi$ 
initialized at $\chi(\tau) = \gamma(r)$
undergoes exactly one discrete transition 
on the interval $[\tau,t]$
through $\G_{j,j'}(\sigma)$ at time 
$\sigma\in(\tau,t)$ 
so that $\psi(\sigma^-,r) = \chi(\sigma^-)\in\G_{j,j'}(\sigma)$.
Assumption~\ref{assum:flow}.3 implies that, generically, all trajectories initialized sufficiently close to $\chi(\tau) = \gamma(r)$ also undergo exactly one discrete transition through $\G_{j,j'}$ on the interval $[\tau,t]$.
Then for all $s\in[\tau,\sigma)$ we may write
$\psi(s,r) = \phi_j(s,\tau,\gamma(s))$,
whence 
$D_s \psi(s,r) = \F_j(s,\psi(s,r))$
and hence
\begin{align}
D_s D_r \psi(s,r) 
& = 
D_r D_s \psi(s,r) \\
& = D_r \F_j(s,\psi(s,r)) \\
& = D_x \F_j(s,\psi(s,r)) D_r \psi(s,r),
\end{align}
so~\eqref{eq:variational} is satisfied for times in $[\tau,\sigma)$.
Similarly, for all $s\in[\sigma,t]$ we may write
\begin{align}
\nonumber \psi(s,r) =  \phi_{j'}(&s,\nu_{j,j'}(\tau,\gamma(r)), \R(\nu_{j,j'}(\tau,\gamma(r)),\\
\label{eq:psi}&\phi_j(\nu_{j,j'}(\tau,\gamma(r)),\tau,\gamma(r)))
\end{align}
where $\nu_{j,j'}(\tau,x)$ denotes the time-of-impact for guard $\G_{j,j'}$ for the trajectory initialized at state $x$ at time $\tau$,
whence 
$D_s \psi(s,r) = \F_{j'}(s,\psi(s,r))$
and hence
\begin{align}
D_s D_r \psi(s,r) 
& = 
D_r D_s \psi(s,r) \\
& = D_r \F_{j'}(s,\psi(s,r)) \\
& = D_x \F_{j'}(s,\psi(s,r)) D_r \psi(s,r),
\end{align}
so~\eqref{eq:variational} is satisfied for times in $[\sigma,t]$.
Each function that appeared above is continuously differentiable wherever its derivative is evaluated since:
Assumption~\ref{assum:system}.1 implies $\phi_{j'}$ and $\phi_{j}$ are continuously differentiable;
Assumption~\ref{assum:system}.3 implies $\R|_{\G_{j,j'}}$ is continuously differentiable;
and
Assumption~\ref{assum:system}.4 implies the time-to-impact %
is continuously differentiable.

It remains to be shown that $D_r \psi(\sigma,r)$ is related linearly to $\lim_{s\uparrow\sigma} D_r \psi(s,r)$ via a \emph{saltation matrix} of the form in~\eqref{eq:saltmat}.
To see that this is the case, note that
\begin{align}
\nonumber \psi(\sigma+\varepsilon,r) = & \phi_{j'}
(
\sigma+\varepsilon,
\nonumber \nu_{j,j'}(\sigma-\varepsilon,\psi(\sigma-\varepsilon,r)),  \\
\nonumber& \hspace{18pt} \R(\nu_{j,j'}(\sigma-\varepsilon,\psi(\sigma-\varepsilon,r)),\\
\nonumber&\hspace{30pt}\phi_j(\nu_{j,j'}(\sigma-\varepsilon,\psi(\sigma-\varepsilon,r)),\\
&\hspace{45pt}\sigma-\varepsilon,\psi(\sigma-\varepsilon,r))
).
\end{align}
Differentiating both sides of this expression with respect to $r$ and taking the limit as $\varepsilon\downarrow 0$,
\begin{align}
\nonumber &D_r \psi(\sigma,r) \\
\nonumber & = 
\left[
D_2 \phi_{j'} \cdot D_2 \nu_{j,j'} \right. \\
\nonumber & \quad  + D_3 \phi_{j'} \cdot( D_1\R \cdot D_2\nu_{j,j'} \\
&\quad + D_2\R \cdot [ D_1 \phi_j \cdot D_2 \nu_{j,j'} + D_3\phi_j])]
\cdot D_r \psi (\sigma^-,r) \\
\nonumber & = \left[ \vphantom{\frac{1}{2}} D_2\R(\sigma,x) \right. \\
\nonumber & \quad \left. + \big( \vphantom{\frac{1}{2}} \F_{j'}(\sigma,\R(\sigma,x)) - D_2\R(\sigma,x)\cdot \F_{j}(\sigma,x) - D_1\R(\sigma,x)\big) \right. \\
\nonumber & \quad\quad \left. \times \left(\frac{D_2g_{j,j'}(\sigma,x)}{D_1g_{j,j'}(\sigma,x) + D_2g_{j,j'}(\sigma,x) \cdot \F_{j}(\sigma,x)} \right) \right]\\ &\quad \cdot D_r \psi (\sigma^-,r),
\end{align}
where we have made use of the following substitutions that apply when we evaluate both time arguments at $\sigma$ and the state argument at $x = \psi(\sigma^-,r)$:
$D_2\phi_{j'}(\sigma,\sigma,\R(\sigma,x)) = -\F_{j'}(\sigma,\R(\sigma,x))$,
$D_1\phi_{j}(\sigma,\sigma,x) = \F_{j}(\sigma,x)$,
$D_3\phi_{j'}(\sigma,\sigma,\R(\sigma,x)) = I$,
$D_3\phi_{j}(\sigma,\sigma,x) = I$,
\begin{align}
D_2\nu_{j,j'}(\sigma,x) = \frac{-D_2g_{j,j'}(\sigma,x)}{D_1g_{j,j'}(\sigma,x) + D_2g_{j,j'}(\sigma,x) \cdot \F_{j}(\sigma,x)}.
\end{align}
Thus $D_r \psi(\sigma,r)$ is related linearly to $D_r \psi(\sigma^-,r)$ via a \emph{saltation matrix} of the form in~\eqref{eq:saltation}, as desired.

Assumption~\ref{assum:flow}.1 ensures there are a finite number of discrete transitions on the (bounded) interval $[\tau,t]$,
whence the preceding argument can clearly be applied inductively to accommodate all discrete transitions on the interval $[\tau,t]$. 
\end{proof}

\subsubsection{Hybrid system intrinsic distance}
\label{sec:prelim:distance}
As the final preliminary construction,
we define the length of a smoothly $\Rt$-connected path $\gamma:[0,1]\into\D$ as the sum of the lengths of its segments, and use this \emph{length structure}~\cite[Ch.~2]{Burago2001-sj} to derive a distance function on $\mcMt$.
To that end, 
define the length of a (continuously differentiable) path segment $\left.\gamma\right|_{(u_i,v_i)}:(u_i,v_i)\into\D_j$ in the usual way using the norm $|\cdot|_j$ in $\D_j$, 
namely,
$L_j(\left.\gamma\right|_{(u_i,v_i)})=\int_{u_i}^{v_i}|D\gamma_j(r)|_jdr$ (we drop the subscript for $L$ when the mode is clear from context), %
and define length of $\gamma$ as the sum of the lengths of its segments,
\begin{align}
\nonumber L(\gamma)&=\sum_{i=1}^kL(\left.\gamma\right|_{(u_i,v_i)})=\int_{\mathcal{O}}|D\gamma(r)|_{j(r)} dr \\
  \label{eq:19} &=\int_0^1|D\gamma(r)|_{j(r)} dr,
\end{align}
where $j(r)\in \J$ denotes the mode satisfying $\gamma(r)\in \D_{j(r)}$ for each $r\in[0,1]$.
(Note that the value of the expression $L(\gamma)$  does not depend on the particular support set $\mathcal{O}$, 
so the length of $\gamma$ is well-defined.)
With $\Gamma(t)$ denoting the set of  smoothly $\Rt$-connected paths in $\mcMt$, 
and letting
\begin{align}
  \label{eq:31}
  \Gamma(t,x,y)=\{\gamma\in\Gamma(t):\gamma(0)=x\text{ and }\gamma(1)=y,\ x,y\in\D\}
\end{align}
denote the subset of paths that start at $x\in\D$ and end at $y\in\D$,
we define the distance $d_t(x,y)$ between $x$ and $y$ at time $t$ 
by
\begin{align}
  \label{eq:d}
  d_t(x,y)=\inf_{\gamma\in\Gamma(t,x,y)}L(\gamma).
\end{align}
\rev{%
We note that the function $d_t:\mcMt\times \mcMt\to \mathbb{R}_{\geq 0}$
belongs to the class of \emph{distance function} specified in~\cite[Def.~1]{Biemond2013-ph}, but it has a stronger \emph{intrinsic} relationship to the hybrid system:
$d_t$ is a distance function on $\mcMt$ compatible with the quotient topology~\cite[Thm.~13]{Burden2015-ip}.
}

\begin{rem}[distance function]
Note that the distance function defined in~\eqref{eq:d} is time-varying if (and only if) the guard $\G$ and/or the reset $\R$ are time-varying.
Although this property may initially seem counter-intuitive, we argue that it is in fact natural that the intrinsic distance varies with time in hybrid systems that have time-varying guards and/or resets.%
\footnote{Of course, time-varying metrics have long been employed in contraction analysis~\cite{Lohmiller1998-xj}.}
Indeed, consider a toy example with 
$\D = \D_1\coprod\D_2$
where
$\D_1 = \D_2 = \Rbb$, 
i.e. the system state consists of two distinct copies of the real line $\Rbb$, 
and
suppose the state only transitions from $\D_1$ to  $\D_2$ through the guard $\G_{1,2}(t) = \set{(1,x)\in\D_1 : x = t}$,
at which point the reset function leaves the continuous state unaffected, 
i.e. $(1,x) \mapsto (2,x)$.
Then, according to~\eqref{eq:d}, the distance between state $(1,x)\in\D_1$ and $(2,x)\in\D_2$ equals 
$d_t\paren{(1,x),(2,x)} = 2(x-t)$ 
for $x > t$ and zero for $x \le t$.
This calculation captures the intuition that the distance between $(1,x)$ and $(2,x)$ decreases as the guard translates up the number line.
\end{rem}

\section{Main result}
\label{sec:main}
The main contribution of this paper is the provision of local (or \emph{infinitesimal}) conditions under which the distance between any pair of trajectories in a hybrid system
(as measured by the intrinsic function defined in \eqref{eq:d})
is globally bounded by an exponential envelope. 
These conditions are made precise in Theorem \ref{thm:main} and Corollary \ref{cor:main}. 
In what follows, we will only check infinitesimal contractivity conditions on a 
\emph{contraction region} $\C\subset\D$,
that is,
a forward-invariant subset ($\xi\in\C$ implies $\phi(t,0,\xi)\in\C$ for all $t\ge 0$)
that descends to a simply-connected subset in the quotient. 

\begin{thm}
\label{thm:main}
Under Assumptions~\ref{assum:system} and~\ref{assum:flow},
if 
there exists $c\in\mathbb{R}$ %
\rev{and contraction region $\C\subset\D$ %
} 
such that
  \begin{align}
    \label{eq:thm:mu}
\mu_j\left(D_x \F_j(t,x)\right)\leq c
  \end{align}
  for all $j\in \J$, $x\in\rev{\C\cap}\D_j\backslash \G_j$, $t\geq 0$,
and
  \begin{align}
    \label{eq:thm:Xi}
\|\Xi(t,x)\|_{j,j'}\leq 1
  \end{align}
  for all $j,\rev{j'}\in\J$, $x\in\rev{\C\cap}\G_{j,j'}$, $t\geq 0$, 
then
  \begin{align}
    \label{eq:thm:d}
 d_t( \phi(t,0,\xi),\phi(t,0,\zeta))\leq e^{ct} d_0(\xi,\zeta)
  \end{align}
for all $t\geq 0$ and $\xi,\zeta\in \C\cap\D$.
\end{thm}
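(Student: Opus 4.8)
The plan is to reduce the hybrid statement to two classical one-step estimates — one for continuous flow, one for the saltation jump — and then integrate them along a smoothly $\R_t$-connected path. First I would fix $\xi,\zeta\in\D$ and a time horizon $t\geq 0$, take an arbitrary smoothly $\R_0$-connected path $\gamma\in\Gamma(0,\xi,\zeta)$, and push it forward by the hybrid flow: by Assumption~\ref{assum:flow}.3, $\gamma_t(r) := \phi(t,0,\gamma(r))$ is again smoothly $\R_t$-connected, so $d_t(\phi(t,0,\xi),\phi(t,0,\zeta))\leq L_t(\gamma_t)$. Since taking the infimum over $\gamma$ at the end gives $d_t(\cdot,\cdot)\leq e^{ct}L_0(\gamma)$ for every competitor, and then infimizing over $\gamma$ yields \eqref{eq:thm:d}, it suffices to prove the pointwise bound $L_t(\gamma_t)\leq e^{ct}L_0(\gamma)$.

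To get that, I would study the variation field $w(t,r) = D_r\psi(t,r)$ from Proposition~\ref{prop:salt}, whose norm $|w(t,r)|_{j(t,r)}$ (measured in the domain currently containing $\psi(t,r)$) is exactly the integrand of $L_t(\gamma_t)$. On any open time interval where $\psi(\cdot,r)$ stays in a single domain $\D_j$, $w$ obeys the linear variational equation \eqref{eq:variational}, and the standard matrix-measure estimate (the continuous-time heart of contraction analysis, \cite{Lohmiller1998-xj}) gives the upper Dini derivative bound $D_t^+|w(t,r)|_j \leq \mu_j(D_x\F_j(t,\psi(t,r)))\,|w(t,r)|_j \leq c\,|w(t,r)|_j$, using \eqref{eq:thm:mu}; Grönwall then yields exponential growth at rate $c$ between jumps. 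At each discrete transition time $\sigma$ — of which there are finitely many on $[0,t]$ by Assumption~\ref{assum:flow}.1, occurring through some $\G_{j,j'}(\sigma)$ — Proposition~\ref{prop:salt} gives $w(\sigma,r) = \Xi(\sigma,\psi(\sigma^-,r))\,w(\sigma^-,r)$, so $|w(\sigma,r)|_{j'} \leq \|\Xi(\sigma,\psi(\sigma^-,r))\|_{j,j'}\,|w(\sigma^-,r)|_j \leq |w(\sigma^-,r)|_j$ by \eqref{eq:thm:Xi}; the jump does not increase the (domain-appropriate) norm. Concatenating the continuous growth bounds across the finitely many jump-free subintervals and the non-expansive jumps between them gives $|w(t,r)|_{j(t,r)} \leq e^{ct}|w(0^-,r)|_{j(0,r)} = e^{ct}|D_r\gamma(r)|_{j(0,r)}$ for each fixed $r$ in a suitable support set.

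Finally I would integrate this $r$-pointwise bound over $[0,1]$. Assumption~\ref{assum:flow}.4 supplies a support set $\mathcal{O}$ of $\gamma_t$ on which, locally in $r$, nearby trajectories share the same discrete transition sequence on $[0,t]$, which is what makes $w(t,r)$ well-defined and the preceding per-$r$ argument valid; since $\mathcal{O}^C$ has measure zero, $L_t(\gamma_t) = \int_{\mathcal{O}}|w(t,r)|_{j(t,r)}\,dr \leq e^{ct}\int_0^1 |D_r\gamma(r)|_{j(0,r)}\,dr = e^{ct}L_0(\gamma)$, completing the reduction.

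The main obstacle I expect is not the contraction estimates themselves — those are routine once the variational structure of Proposition~\ref{prop:salt} is in hand — but the bookkeeping needed to make the per-$r$ argument legitimately integrable: one must ensure that for (Lebesgue-)almost every $r$ the trajectory $\psi(\cdot,r)$ has only finitely many, non-grazing transitions so that $w(t,r)$ exists and the finite concatenation of estimates applies, and that the number and nature of these transitions is locally constant in $r$ on a full-measure set. This is exactly the content packaged into Assumptions~\ref{assum:flow}.1 and~\ref{assum:flow}.4 together with the transversality Assumption~\ref{assum:system}.4 (which rules out grazing, cf.\ Fig.~\ref{fig:assum}), so the work is in citing and stitching these correctly rather than in any hard new estimate; care is also needed at the endpoints $r\in\{0,1\}$ and at times $\sigma$ where the matrix measure bound is applied only on the open complement of the guard, as in the hypothesis \eqref{eq:thm:mu}.
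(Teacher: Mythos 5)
Your proposal is correct and follows essentially the same route as the paper's proof: push an (eventually infimized) $\R_0$-connected path forward by the flow, bound the variation field $w(t,r)$ between jumps via the matrix-measure/Coppel estimate and across jumps via the saltation-norm condition, concatenate over the finitely many transitions, and integrate over the support set to bound the path length. The only cosmetic difference is that the paper works with an $\epsilon$-near-optimal path and lets $\epsilon\downarrow 0$ rather than infimizing over all competitors at the end, which is equivalent.
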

\begin{proof}
Given $x(0)=\xi$, $z(0)=\zeta$, \rev{and $s\in[0,t]$}, for fixed $\epsilon>0$, let $\gamma:[0,1]\to D$ be a piecewise-differentiable $\Rs$-connected path satisfying $\gamma(0)=\xi$, $\gamma(1)=\zeta$, and  $L(\gamma)<d_0(\xi,\zeta)+\epsilon$, and let $\psi(t,r)=\phi(t,0,\gamma(r))$. Since $\phi(t,0,\cdot)$ is piecewise-differentiable, it follows from Assumption \ref{assum:flow}.2 and \ref{assum:flow}.3 that $\psi(t,\cdot)$ is a piecewise-differentiable $\Rt$-connected path for all $t\geq 0$. Let $w(t,r)=D_r\psi(t,r)$ whenever the derivative exists. By Proposition~\ref{prop:salt}, $w(t,r)$ satisfies the jump-linear-time-varying equations
\begin{align}
\label{eq:800}
  \dot{w}(t,r)=\frac{\partial f}{\partial x}(t,\psi(t,r))w(t,r),\ \psi(t,r)\in\D\sm\G,
\end{align}
\begin{align}
\label{eq:7}
  w(t^+,r)=\Xi(t,\psi(t^-,r))w(t^-,r),\ \psi(t^-,r)\in \G.
\end{align}

We claim that 
\begin{align}
\label{eq:thm:w}
  |w(t,r)|\leq e^{ct}|w(0^-,r)|
\end{align}
for all $t\geq 0$ and for all $r\in[0,1]$ whenever $w(t,r)$ exists. To prove the claim, for fixed $r$, let $\{t_i\}_{i=1}^k\subset [0,\infty)$ with $t_0\leq t_1\leq \cdots$ and  possibly $k=\infty$ be the set of times at which the trajectory $\phi(t,s,\gamma(r))$ intersects a guard so that $\left.\psi(\cdot,r)\right|_{[t_{i},t_{i+1})}$ is continuous for all $i\in\{0,1,\ldots,k-1\}$ where $t_0=s$ by convention, and, additionally, $\left.\psi(\cdot,r)\right|_{[t_{k},\infty)}$ is continuous if $k<\infty$. Note that if $k=\infty$ then $\lim_{i\to\infty} t_i=\infty$ since Zeno trajectories are not allowed.
Now consider some fixed time $T>0$. If $k<\infty$ and $t_k\leq T$, let $i=k$; otherwise, let $i$ be such that $t_i\leq T <t_{i+1}$. 
Let $j$ be the active mode of the system during the interval $[t_i,t_{i+1})$, i.e. $\psi(t,r)\in \D_j$ for all $t\in[t_i,t_{i+1})$.  
With $J(t)=D_x\F_j(\psi(t,r))$ for $t\in[t_i,t_{i+1})$  
we have
\begin{align}
\label{eq:4}
  |w(T,r)|&\leq e^{\int_{t_{i}}^T\mu(J(\tau))d\tau}|w(t_{i}^+,r)|\\
\label{eq:4-2} &\leq e^{c(T-t_{i})}|w(t_{i}^+,r)|\\
\label{eq:4-3}&\leq e^{c(T-t_{i})}\|\Xi(t,\psi(t_i^-,r))\||w(t_{i}^-,r)|\\
\label{eq:4-4}&\leq e^{c(T-t_{i})}|w(t_{i}^-,r)|
\end{align}
where 
\eqref{eq:4} follows from Coppel's inequality applied to \eqref{eq:800},  
\eqref{eq:4-2} follows from \eqref{eq:thm:mu}, 
\eqref{eq:4-3} follows from \eqref{eq:7}, 
and 
\eqref{eq:4-4} follows from \rev{\eqref{eq:thm:Xi}}; 
see, e.g., \cite[p. 34]{Desoer:2008bh} for a characterization of Coppel's inequality. 
Since \eqref{eq:4}--\eqref{eq:4-4} holds for any $T<t_{i+1}$, we further conclude that $|w(t_{i+1}^-,r) |\leq e^{c(t_{i+1}-t_{i})}|w(t_{i}^-,r)|$ whenever $i\leq k$. Then, by recursion, $  |w(T,r)|\leq e^{cT}|w(s^-,r)|$. Since $T$ was arbitrary,~\eqref{eq:thm:w} holds. 

Again fix $T> 0$. Because $\psi(T,\cdot)$ is a piecewise-differentiable $\RT$-connected path, there exists a support set $\mathcal{O}=\bigcup_{i=1}^k (u_i,v_i)$ of $\psi(T,\cdot)$  
such that $\left.\psi(T,\cdot)\right|_{(u_i,v_{i})}$ is continuously-differentiable for all $i\in\{0,1,\ldots, k\}$. 
It follows that
\begin{align}
  \label{eq:10}
  L\left(\left.\psi(T,\cdot)\right|_{(u_i,v_i)}\right)= \int_{u_i}^{v_i} |w(T,\sigma)| d\sigma.
\end{align}

Then 
\begin{align}
\label{eq:12}
  L(\psi(T,\cdot))&=\sum_{i=1}^{k}   L\left(\left.\psi(T,\cdot)\right|_{(u_i,v_i)}\right)\\
&\leq \int_0^1|w(T,\sigma)|d\sigma\\
\label{eq:12-2}&\leq e^{cT}\int_0^1|w(0^-,\sigma)|d\sigma\\
\label{eq:12-3}&=e^{cT}L(\gamma)\\
&\leq e^{cT}(d_0(\xi,\zeta) + \epsilon)
\end{align}
where \eqref{eq:12-2} follows from \eqref{eq:thm:w}, and \eqref{eq:12-3} follows because $w(0^-,r)=D_r\gamma(r)$. In addition, observe
\begin{align}
  \label{eq:14}
 d_T( \phi(T,0,\xi),\phi(T,0,\zeta))\leq L(\psi(T,\cdot)).
\end{align}
Since $T$ was arbitrary and $\epsilon$ can be chosen arbitrarily small, \eqref{eq:thm:d} holds.
\end{proof}

\begin{example}
\label{ex:1}
Consider a \rev{planar} hybrid system with two modes in the positive orthant of the plane so that $\D=\D_{L}\coprod\D_{R}$ with $\D_{L}=\D_{R}=\{x\in\mathbb{R}^2|x_1\geq 0\text{ and }x_2\geq 0\}$, and further take $g _{R,L} (x)=x_1-1$ and $g _{L,R} (x)=1-x_1$ so that the system is in the left (resp., right) mode $\D_{L}$ (resp., $\D_R$) when $x_1< 1$ (resp., $x_1>1$). Assume the reset map $\R$ is the identity map and $\dot{x}=\F_j(x)=A_jx$ for $j\in\{L,R\}$ with
\begin{align}
  \label{eq:54}
  A_j=
  \begin{bmatrix}
    -a_j&0\\
0&-b_j
  \end{bmatrix},\quad a_j,b_j>0\quad \text{for }j\in\{L,R\}.
\end{align}
All trajectories initialized in $\D$ flow to $\D_L$ and converge to the origin.  
Equip both modes with the standard Euclidean 2-norm so that $|x|_L=|x|_R=|x|_2$ and consider two trajectories $x(t)=\phi(t,0,\xi)$, $z(t)=\phi(t,0,\zeta)$ with initial conditions $\xi,\zeta\in\D$. 
Then $d(x(t),z(t))=|x(t)-z(t)|_2=|e(t)|_2$ for $e(t)=x(t)-z(t)$. When both trajectories are in the same mode so that $x,z\in\D_j$ for some $j\in\{L,R\}$, the error dynamics obey the dynamics of that mode. It therefore follows that $D_td(x,z)\leq \max\{-a_j,-b_j\} d(x,z)$
so that the distance decreases at exponential rate $\max\{-a_j,-b_j\} $.

Now suppose $x$ and $z$ are in different modes at some time $t$ and, without loss of generality, assume $x\in \D_L$ and $z\in \D_R$. 
Writing
\begin{align}
  \label{eq:1}
  x=\begin{bmatrix}
1-\epsilon_L\\
x_2
\end{bmatrix},\quad z=
  \begin{bmatrix}
    1+\epsilon_R\\
x_2+\delta
  \end{bmatrix}
\end{align}
for some $\epsilon_L,\epsilon_R>0$ and $\delta\in\mathbb{R}$, we have
\begin{align}
  \label{eq:3}
D_t(d(x,z)^2)&= D_t((x-z)^T(x-z))\\
\nonumber&=2(\epsilon_L+\epsilon_R)(a_L-a_R)+2\delta x_2\rev{(b_L-b_R)}\\
&\quad +\text{H.O.T.}
\end{align}
where the higher order terms H.O.T. are quadratic in $\epsilon_L$, $\epsilon_R$, and $\delta$. Then $D_t(d(x,z)^2)<0$  for all $x_2\geq 0$ and all sufficiently small $\epsilon_L>0$, $\epsilon_R>0$, $\delta\in\mathbb{R}$ if and only if $a_L<a_R$ and $b_L=b_R$. In other words, contraction between any two arbitrarily close trajectories transitioning from $\D_R$ to $\D_L$ occurs only if trajectories ``slow down'' in the direction normal to the guard surface when transitioning modes, and the dynamics orthogonal to the guard are unaffected. This example is illustrated in Fig.~\ref{fig:ex1}.

\end{example}

\begin{figure}
  \centering
{\footnotesize
    \begin{tikzpicture}
  \begin{axis}[width=2.2in,
    height=1.5in,
   axis y line=left,
   axis x line=bottom,
every axis x label/.style={at={(rel axis cs:1,0)},anchor=west},
every axis y label/.style={at={(rel axis cs:0,1)},anchor=south},
   xmin=.5, xmax=1.5, ymax=1,ymin=0, 
   ytick=\empty,
   xtick={.5,1},
   xticklabels={$1-\epsilon$,1},
    xlabel={$x_1$},
    ylabel={$x_2$},
    clip=true,
    at={(0,0)},
    ]
    \draw[dotted] (1,0) -- (1,.8);
    \node at (.8,.1) {$\D_L$};
    \node at (1.2,.1) {$\D_R$};
    \node[circle, fill=black, draw=black,inner sep=1pt] (a) at (.7,.6) {};
    \node[circle, fill=black, draw=black,inner sep=1pt] (b) at (1.3,.5) {};
    \node[circle, fill=black, draw=black,inner sep=1pt] (a2) at ($(.63, {.41*(exp(1.3*.63)-1)})$) {};
    \node[circle, fill=black, draw=black,inner sep=1pt] (b2) at ($(1.1, {.73*(exp(.4*1.1)-1)})$) {};

    \node[inner sep=1pt,label=90:$x(t)$] (a3) at ($(.77, {.41*(exp(1.3*.77)-1)})$) {};
    \node[,inner sep=1pt,label=90:$z(t)$] (b3) at ($(1.4, {.73*(exp(.4*1.4)-1)})$) {};
    \node[inner sep=1pt] (a4) at ($(.76, {.41*(exp(1.3*.76)-1)})$) {};
    \node[,inner sep=1pt] (b4) at ($(1.39, {.73*(exp(.4*1.39)-1)})$) {};
    \addplot[black,domain=0:.82,dashed] (\x,{.41*(exp(1.3*\x)-1)});
    \addplot[black,domain=0:1.5,dashed] (\x,{.73*(exp(.4*\x)-1)});
    \draw[->,>=latex,line width=1pt] (b) --node[above,pos=.3]{$e(t)$} (a);
    \draw[->,>=latex,line width=1pt] (b2) --node[below=1pt,pos=.8]{$e(t+\tau)$} (a2);
    \draw[->,line width=1pt] (b3)--(b4);
    \draw[->,line width=1pt] (a3)--(a4);
\end{axis}
\end{tikzpicture}
}
\caption{An illustration of two trajectories $x(t)$ and $z(t)$ of the hybrid system in Example \ref{ex:1} in different modes $\D_L$ and $\D_R$. The distance between trajectories is the Euclidean length of $e(t)=x(t)-z(t)$. When $x(t)$ and $z(t)$ are close, $|e(t)|$ decreases over a short time window $[t,t+\tau]$ if and only if $a_L<a_R$ and $b_L = b_R$, that is, the horizontal component of $x$ decreases at a slower rate than the horizontal component of $z$ and the rates of change of the vertical components are equal.}
  \label{fig:ex1}
\end{figure}
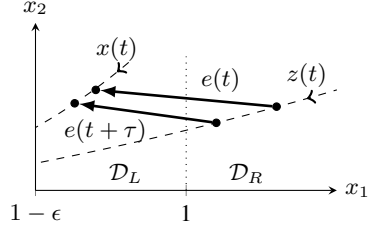

Now suppose that uniform upper or lower bounds on the \emph{dwell time} between successive resets are known. 
Then the number of discrete state transitions is upper or lower bounded on any compact time horizon, and the proof of Theorem \ref{thm:main} can be adapted to derive an exponential bound on the intrinsic distance between any pair of trajectories.
\rev{
Specifically, the bound obtained in~\eqref{eq:thm:w} must be augmented by the quantity $\max\{K^{\lceil t/\underline{\tau}\rceil},K^{\lfloor (t-s)/\overline{\tau}\rfloor}\}$ that accounts for the effect of discrete resets,
}%
as in the following Corollary.

\begin{corollary}
\label{cor:main}
Under Assumptions~\ref{assum:system} and~\ref{assum:flow},
suppose the dwell time between resets is at most $\overline{\tau}\in (0,\infty]$ and at least $\underline{\tau}\in[0,\infty)$
\rev{and there exists forward-invariant $\C\subset\D$ %
}
such that
  \begin{equation}\label{eq:cor:mu}
  \mu_j(D_x\F_j(t,x))\leq c
  \end{equation}
for some $c\in\mathbb{R}$ for all $j\in\J$, $x\in \rev{\C\cap}\D_j\backslash \G_j$, $t\geq 0$, and
\begin{equation}\label{eq:cor:Xi}
  	\|\Xi(t,x)\|_{j,j'}\leq K	
\end{equation}
for some $K\in\mathbb{R}_{\geq 0}$ and all $j\in\J$, $x\in\rev{\C\cap}\G_{j,j'}$, $t\geq 0$.
Then, for all $t\geq s\geq 0$ \rev{and $\xi,\zeta\in\C\cap\D$,}
\begin{align}
\nonumber & d_t( \phi(t,0,\xi),\phi(t,0,\zeta))\\
&\leq \max\{K^{\lceil t/\underline{\tau}\rceil},K^{\lfloor (t-s)/\overline{\tau}\rfloor}\}e^{c(t-s)} d_s(\xi,\zeta).
\end{align}
In particular, if $\max\{Ke^{c\underline{\tau}},Ke^{c\overline{\tau}}\}<1$ then 
\begin{equation}
\label{eq:cor:zero}
\lim_{t\to\infty}d_t(\phi(t,s,\xi),\phi(t,s,\zeta))=0.
\end{equation}
\end{corollary}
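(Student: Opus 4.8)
The plan is to re-run the proof of Theorem~\ref{thm:main} almost verbatim, the only changes being that each reset now contributes a factor $K$ rather than $1$ to the variational bound, and that one must control how many resets a trajectory can undergo on a bounded horizon. Fix $\epsilon>0$ and pick a smoothly $\R_s$--connected path $\gamma:[0,1]\to\D$ with $\gamma(0)=\xi$, $\gamma(1)=\zeta$ and $L_s(\gamma)<d_s(\xi,\zeta)+\epsilon$; put $\psi(t,r)=\phi(t,s,\gamma(r))$ and $w(t,r)=D_r\psi(t,r)$, which by Proposition~\ref{prop:salt} obeys \eqref{eq:variational}--\eqref{eq:saltation}. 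Running Coppel's inequality on each reset--free segment with \eqref{eq:cor:mu}, and \eqref{eq:cor:Xi} at each reset, exactly as in \eqref{eq:4}--\eqref{eq:4-4} but with $\|\Xi\|\le K$ in place of $\|\Xi\|\le 1$, the same recursion yields
\begin{align}
|w(t,r)|\le K^{N_r(s,t)}\,e^{c(t-s)}\,|w(s^-,r)|
\end{align}
whenever $w(t,r)$ exists, where $N_r(s,t)$ is the number of resets of $\phi(\cdot,s,\gamma(r))$ on $(s,t]$.

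Next I would bound $N_r(s,t)$ uniformly in $r$. The dwell--time hypotheses are properties of the hybrid system, hence apply to every trajectory $\phi(\cdot,s,\gamma(r))$: since the dwell time is at least $\underline{\tau}$, the reset times of any such trajectory form a $\underline{\tau}$--separated subset of $(s,t]\subseteq(0,t]$, so $N_r(s,t)\le\lceil t/\underline{\tau}\rceil$; since the dwell time is at most $\overline{\tau}$, no trajectory can flow for a duration exceeding $\overline{\tau}$ without resetting, so $N_r(s,t)\ge\lfloor(t-s)/\overline{\tau}\rfloor$. Because $K\ge 0$, the map $n\mapsto K^n$ on the integers is monotone: nondecreasing if $K\ge 1$, nonincreasing if $0\le K<1$; in the first case $K^{N_r(s,t)}\le K^{\lceil t/\underline{\tau}\rceil}$, in the second $K^{N_r(s,t)}\le K^{\lfloor(t-s)/\overline{\tau}\rfloor}$, so in all cases
\begin{align}
K^{N_r(s,t)}\le\max\{K^{\lceil t/\underline{\tau}\rceil},\,K^{\lfloor(t-s)/\overline{\tau}\rfloor}\}
\end{align}
\emph{independently of $r$} (the degenerate values $\overline{\tau}=\infty$, $\underline{\tau}=0$, $K=0$ being covered by the conventions $\lfloor\,\cdot\,/\infty\rfloor=0$ and $K^0=1$). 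Substituting this into the previous display and then repeating the support--set/integration argument of \eqref{eq:10}--\eqref{eq:14} in the proof of Theorem~\ref{thm:main} — integrate $|w(t,\sigma)|$ over $\sigma\in[0,1]$, use $d_t(\phi(t,s,\xi),\phi(t,s,\zeta))\le L_t(\psi(t,\cdot))$ and $w(s^-,\sigma)=D_\sigma\gamma(\sigma)$, and let $\epsilon\downarrow 0$ — gives the stated inequality.

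For the final assertion, I would set $\rho=\max\{Ke^{c\underline{\tau}},Ke^{c\overline{\tau}}\}<1$ and show the prefactor $\max\{K^{\lceil t/\underline{\tau}\rceil},K^{\lfloor(t-s)/\overline{\tau}\rfloor}\}e^{c(t-s)}$ is bounded by a constant (depending only on $K$ and $s$) times a power of $\rho$ that tends to $0$ as $t\to\infty$: using $\lceil t/\underline{\tau}\rceil\le 1+t/\underline{\tau}$, $\lfloor(t-s)/\overline{\tau}\rfloor\ge(t-s)/\overline{\tau}-1$, and again the monotonicity of $n\mapsto K^n$, a short case split on $K\ge 1$ versus $K<1$ reduces the prefactor to a multiple of $(Ke^{c\underline{\tau}})^{t/\underline{\tau}}$ in the former case and of $(Ke^{c\overline{\tau}})^{(t-s)/\overline{\tau}}$ in the latter; both vanish because $\rho<1$, so multiplying by the fixed constant $d_s(\xi,\zeta)$ gives \eqref{eq:cor:zero}. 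I expect the only point needing care to be exactly this interaction between the admissible range of reset counts and the sign of $\log K$ — one must check that the $\max$ in the statement is the correct worst case for \emph{every} count between $\lfloor(t-s)/\overline{\tau}\rfloor$ and $\lceil t/\underline{\tau}\rceil$, and that the degenerate parameter choices are handled by the obvious conventions; every other step is a transcription of the proof of Theorem~\ref{thm:main}.
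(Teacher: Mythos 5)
Your proposal is correct and is exactly the adaptation the paper has in mind: the paper gives no separate proof of Corollary~\ref{cor:main}, stating only that the proof of Theorem~\ref{thm:main} ``can be adapted'' by tracking a factor $K$ per reset and bounding the reset count via the dwell-time hypotheses, which is precisely what you do (including the correct endpoint/monotonicity argument showing $K^{N}\le\max\{K^{\lceil t/\underline{\tau}\rceil},K^{\lfloor(t-s)/\overline{\tau}\rfloor}\}$ uniformly in $r$). No gaps.
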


We now address an important special case, namely, when modes have the same dimension, are equipped with the same norm, and resets are simply translations (e.g., identity resets). 
The first part of the following Proposition establishes that the induced norm of the saltation matrix is lower bounded by unity; 
in the particular case of the standard Euclidean 2-norm, 
the second part of the  Proposition shows that 
the induced norm of the saltation matrix is equal to unity if and only if the difference between the vector field evaluated at $x$ and $\R(x)$ lies in the direction of the gradient of the guard function.

\begin{prop}
\label{prop:main}
  Under Assumptions~\ref{assum:system} and \ref{assum:flow}, 
  suppose:
  the guard set $\G$ and the reset map $\R$ are time-invariant;
  $\R_{j,j'}$ is a translation for some $j,j'\in\J$ 
  (i.e., $D\R_{j,j'}(x)=I$ for all $x\in \G_{j,j'}$); 
  and 
  $|\cdot|_j=|\cdot|_{j'}$. 
  Then the following properties hold:
  \begin{enumerate}
  \item $\|\Xi(t,x)\|_{j,j'}\geq 1$ for all $x\in\G_{j,j'}$, and all $t\geq 0$;
\item If $|\cdot|_j=|\cdot|_{j'}=|\cdot|_2$ where $|\cdot|_2$ denotes the standard Euclidean 2-norm, then $\|\Xi(t,x)\|_{j,j'}= 1$ if and only if  
\begin{align}
   \label{eq:26}
\F_{j'}(t,\R(x))-\F_{j}(t,x) = \alpha(t,x) Dg_{j,j'} (x)^T
 \end{align}
for all $t\geq 0$ and all $x\in\G_{j,j'}$ for some $\alpha:[0,\infty)\times \G_{j,j'}\to \mathbb{R}$ satisfying 
\begin{align}
  \label{eq:450}
0\leq \alpha(t,x)\leq \frac{-2Dg_{j,j'}(x)\cdot\F(t,x)}{|Dg_{j,j'} (x)|_2^2}.
\end{align}
  \end{enumerate}
\end{prop}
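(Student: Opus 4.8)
The plan is to exploit the fact that the three extra hypotheses collapse the saltation matrix in~\eqref{eq:saltmat} to a rank--one perturbation of the identity. Since $\R$ is time--invariant we have $D_t\R = 0$; since $\R_{j,j'}$ is a translation, $D_x\R = I$; and since $\G$ is time--invariant, $D_t g_{j,j'} = 0$. Writing $n = D_x g_{j,j'}(x)^\tr$ (nonzero, by the nondegeneracy in Assumption~\ref{assum:system}.3), $v = \F_{j'}(t,\R(x)) - \F_j(t,x) \in \Rbb^{n_j}$ (legitimate since $n_j = n_{j'}$), and $c = D_x g_{j,j'}(x)\cdot\F_j(t,x)$ (strictly negative, by the transversality in Assumption~\ref{assum:system}.4), the formula~\eqref{eq:saltmat} reduces to $\Xi(t,x) = I + c^{-1} v\, n^\tr$. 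Moreover, since $|\cdot|_j = |\cdot|_{j'}$, the quantity $\|\Xi(t,x)\|_{j,j'}$ is just the operator norm of $\Xi(t,x)$ on $\Rbb^{n_j}$ with respect to this common norm, so everything below concerns a single normed space.

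For part~(1) I would note that $\Xi(t,x)$ fixes the hyperplane $H = \set{u\in\Rbb^{n_j} : n^\tr u = 0} = \ker D_x g_{j,j'}(x)$ pointwise: if $n^\tr u = 0$ then $\Xi(t,x)u = u + c^{-1}v\,(n^\tr u) = u$. Taking any unit vector $u\in H$ gives $|\Xi(t,x)u| = 1$, so $\|\Xi(t,x)\|_{j,j'}\geq 1$. (This is the substantive content when $n_j\geq 2$; the case $n_j = 1$ is degenerate and can be treated separately or excluded.)

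For part~(2), with the Euclidean norm, $\|\Xi\|_2\leq 1$ is equivalent to $\Xi^\tr\Xi\preceq I$, i.e.\ to
\[ M \;:=\; c^{-1}\paren{v n^\tr + n v^\tr} + c^{-2}|v|_2^2\, n n^\tr \;\preceq\; 0 . \]
Now $M$ has $(\operatorname{span}\set{v,n})^\perp$ in its kernel, and $u^\tr M u = 2c^{-1}(v^\tr u)(n^\tr u) + c^{-2}|v|_2^2(n^\tr u)^2$ depends on $u$ only through the pair $(v^\tr u, n^\tr u)$; hence $M\preceq 0$ iff this quadratic form in the pair is $\leq 0$. If $v$ and $n$ were linearly independent, that pair would range over all of $\Rbb^2$; fixing $n^\tr u = 1$ would leave the affine, nonconstant expression $2c^{-1}(v^\tr u) + c^{-2}|v|_2^2$, which takes positive values --- contradiction. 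Hence $\|\Xi\|_2 = 1$ forces $v = \alpha\, n$ for a scalar $\alpha = \alpha(t,x)$, which is exactly~\eqref{eq:26}. Substituting $v = \alpha n$ makes $\Xi = I + (\alpha/c)\,n n^\tr$ symmetric, whence $\Xi^\tr\Xi = \Xi^2 = I + \bigl(2\alpha c^{-1} + \alpha^2 c^{-2}|n|_2^2\bigr) n n^\tr$, and $\Xi^2\preceq I$ holds iff the scalar $\tfrac{\alpha}{c}\bigl(2 + \tfrac{\alpha|n|_2^2}{c}\bigr)\leq 0$; since $c<0$ this is precisely $0\leq\alpha\leq -2c/|n|_2^2$, i.e.~\eqref{eq:450}. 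Running this equivalence pointwise in $(t,x)$ and combining with the bound $\|\Xi\|_2\geq 1$ from part~(1) gives the stated ``if and only if.''

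All of this is elementary rank--one algebra; the one step that takes a moment of thought is the ``only if'' direction --- realizing that $\Xi^\tr\Xi\preceq I$ need only be checked on the two--dimensional subspace $\operatorname{span}\set{v,n}$, and that on that subspace the cross term is linear (not quadratic) in the $v$--direction, which is what defeats the linearly independent case. The other thing to watch is the sign of $c$, since it is exactly what converts the coefficient inequality into the stated interval for $\alpha$.
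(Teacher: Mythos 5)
Your proof is correct and follows essentially the same route as the paper's: part (1) via a unit vector in $\ker D_x g_{j,j'}(x)$ that $\Xi(t,x)$ fixes, and part (2) by first ruling out linear independence of $\F_{j'}(t,\R(x))-\F_{j}(t,x)$ and $D_xg_{j,j'}(x)^\tr$ and then reducing the symmetric rank--one case to the scalar coefficient inequality that yields \eqref{eq:450}. The only cosmetic difference is in the ``only if'' step, where the paper exhibits a test vector $z$ with $v^\tr z=0$ and $n^\tr z\neq 0$ and applies Pythagoras to get $|\Xi z|_2>|z|_2$, while you argue directly on the quadratic form $\Xi^\tr\Xi-I$; these are equivalent elementary arguments (and your version correctly carries the $c^{-2}$ factor that the paper's displayed chain drops as a typo).
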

\begin{proof} 
\ 

  \begin{enumerate}
  \item Under the hypotheses of the proposition, we have that
  \begin{align}
\Xi(t,x)
= 
I+\frac{\left(\F_{j'}(t,\R(x))-\F_{j}(t,x)\right)\cdot Dg_{j,j'} (x)}{Dg_{j,j'}(x)\cdot \F_{j}(t,x)}.
  \end{align}
Fix $x\in \G_{j,j'}$ and let $z\in\text{Null}(Dg_{j,j'}(x))$. Then $\Xi(t,x) z=z$ so that always $\|\Xi(t,x)\|_{j,j'}\geq 1$.
\item 
We first consider the case $\F_{j'}(t,\R(x))-\F_{j}(t,x)\not\in \text{span}\{Dg_{j,j'} (x)^T\}$ for some $t\geq 0$, $x\in\G_{j,j'}$. Choose $z\in\{z:(\F_{j'}(t,\R(x))-\F_{j}(t,x))^Tz=0 \text{ and }(Dg_{j,j'}(x)) z\not =0\}$. Then
\begin{align}
  \label{eq:8}
  \Xi(t,x) z= z+\beta (\F_{j'}(t,\R(x))-\F_{j}(t,x))
\end{align}
where $\beta=\frac{1}{\dot{g}_{j,j'}(x)}Dg_{j,j'}(x) z$. But then $  \Xi(t,x) z$ is the hypotenuse of a right triangle with legs $z$ and $\beta (\F_{j'}(t,\R(x))-\F_{j}(t,x))$ with nonzero length so that $|\Xi(t,x) z|_2>|z|_2$ and therefore  $\|\Xi(t,x)\|_2>1$.

Thus we have shown that $\|\Xi(t,x)\|_{j,j'}\leq 1$ implies $\F_{j'}(t,\R(x))-\F_{j}(t,x)\in \text{span}\{Dg_{j,j'} (x)^T\}$, \emph{i.e.}, $\eqref{eq:26}$ holds for some $\alpha(t,x)$. We now show that, in particular, \eqref{eq:450} holds and, moreover, \eqref{eq:26}-\eqref{eq:450} implies $\|\Xi(t,x)\|_{j,j'}\leq 1$.

Suppose \eqref{eq:26} holds. Then, for all $t\geq 0$ and all $x\in\G_{j,j'}$,
\begin{align}
  \label{eq:48}
  \Xi(t,x)=I+\frac{\alpha(t,x)}{\dot{g}_{j,j'}(t,x)}Dg_{j,j'} (x)^T Dg_{j,j'}(x)
\end{align}
and 
  \end{enumerate}
\begin{align}
\nonumber &\|\Xi(t,x)\|_2\leq 1\\
  \label{eq:53}&\iff   \|\Xi(t,x)\|^2_2\leq 1\\
&\iff \Xi(t,x)^T \Xi(t,x)\preceq I\\
\nonumber &\iff \Big(I+\left(2 \frac{\alpha(t,x)}{\dot{g}_{j,j'}(t,x)}+\alpha(t,x)^2|Dg_{j,j'}(x)|_2^2\right) \\
&\qquad \qquad \qquad \cdot Dg_{j,j'} (x)^T Dg_{j,j'}(x)\Big)\preceq I\\
&\iff \left(2 \frac{\alpha(t,x)}{\dot{g}_{j,j'}(t,x)}+\alpha(t,x)^2|Dg_{j,j'}(x)|_2^2\right)\leq 0\\
&\iff \eqref{eq:450}\text{ holds.}
\end{align}
\end{proof}

\begin{example}
\newcommand{\jm}{-}
\newcommand{\jp}{+}
 A common special class of hybrid systems for which Proposition \ref{prop:main} is applicable is the class of \emph{piecewise-smooth systems} for which the flow is continuous but is governed by a vector field that is only piecewise-smooth. 
Contraction of such systems has been previously analyzed in \cite{Fiore:2016fj}. 
In this example, we show how Theorem \ref{thm:main} and Proposition \ref{prop:main} apply to such systems and compare to results reported in \cite{Fiore:2016fj}.

Consider a hybrid system with two modes $\J=\{\jm,\jp\}$ with $\D_{\jm}=\D_{\jp}=\mathbb{R}^n$ for some $n$, time-invariant vector field $\F$, and time-invariant guard sets defined as $\G_{\jp,\jm}=\{x:g(x)\leq 0\}$ and $\G_{\jm,\jp}=\{x:-g(x)\leq 0\}$ for some continuously differentiable $g$. Notice that $\D_{{\pm}}\backslash \G=\{x:\pm g(x)>0\}$. Further suppose $\R(x)=x$ for all $x\in\G$, i.e., identity reset map, and consider $|\cdot|_{\jm}=|\cdot|_{\jp}=|\cdot|$ for some norm $|\cdot|$. Here, $\{x:g(x)=0\}$ is the \emph{switching surface} or \emph{switching manifold}. While we study the case with only two modes, the basic idea extends to a state space partitioned into a collection of disjoint open sets separated by codimension-1 switching surfaces.

Suppose the system satisfies the conditions of Theorem \ref{thm:main} and is therefore contractive.  Consider some $x\in\G_{\jm,\jp}$ and note that by Assumption  \ref{assum:flow}.1, we must have $D g(x) \cdot F_{\jp}(x)> 0$ so that the trajectory initialized in $\D_\jm$ at $x$ transitions through the switching surface and flows away from the surface in mode $\jp$, avoiding \emph{sliding}~\rev{\cite{Jeffrey2014-nt}}. 
\rev{(Section~\ref{sec:disc:sliding} discusses how our approach applies in the presence of sliding.)}

Define $\beta(x)=\frac{1}{Dg(x)\cdot \F_{\jm}(x)}$ and $M(x)=(\F_{\jp}(x)-\F_{\jm}(x))\cdot D g(x))$, and observe that $\|\Xi(x)\| = \|I+\beta(x) M(x)\|$. Further, note that  Proposition \ref{prop:main}, part 1  coupled with condition \eqref{eq:thm:Xi} of Theorem~\ref{thm:main} implies that, necessarily, $\|\Xi(x)\|=1$. 

We now show that, also, necessarily $\mu(M(x)) = 0$. To see this, consider  $  \|I+r\beta(x) M(t,x)\|$ for $r\in[0,1]$.  Then
\begin{align}
  \label{eq:6}
  1\leq  \|I+r\beta(x) M(x)\|&=\|(1-r)I+r(I+\beta(x) M(x))\|\\
&\leq \|(1-r)I\|+\|r(I+\beta(x) M(x))\|\\
&=1
\end{align}
where the first inequality holds by the same argument of Proposition \ref{prop:main} part 1. Therefore, $\|I+hM(x)\|=1$ for all sufficiently small $h>0$. Then $\mu(M(x))=\lim_{h\to 0^+}\frac{1}{h}(\|I+hM(x)\|-1)=0$.

Comparing to \cite{Fiore:2016fj}, we see that $\mu(M(x))=0$ along with $\mu(D_x\F_{\pm}(x))\leq c$ for some $c<0$, \emph{i.e.}, \eqref{eq:thm:mu} of Theorem \ref{thm:main}, are sufficient conditions for contraction as given in \cite [Theorem 6]{Fiore:2016fj}. 
Thus, the conditions of Theorem \ref{thm:main}, when specialized to piecewise-smooth systems with no sliding modes, implies the conditions provided in \cite [Theorem 6]{Fiore:2016fj}.
\end{example}

\begin{rem}[summary of main result]
Our main contribution is Theorem~\ref{thm:main}, where we generalize infinitesimal contractivity analysis to the class of hybrid systems satisfying Assumptions~\ref{assum:system} and~\ref{assum:flow}.
This generalization has intuitive appeal, since it combines infinitesimal conditions on continuous-time flow (via the matrix measure of the vector field derivative,~\eqref{eq:thm:mu}) and discrete-time reset (via the induced norm of the saltation matrix,~\eqref{eq:thm:Xi}) that parallel the conditions imposed separately in prior work on smooth continuous-time and discrete-time systems, and establishes contraction with respect to the hybrid system's intrinsic distance function,~\eqref{eq:thm:d}.
With bounds on \emph{dwell time}, i.e. the time between discrete transitions, our tools yield a bound in Corollary~\ref{cor:main} on the intrinsic distance between trajectories regardless of whether the dynamics are contractive or expansive.
Proposition~\ref{prop:main} provides specialized results when the reset is simply a translation, a case that relates to prior work on infinitesimal contraction of nonsmooth vector fields.
We say a hybrid system is \emph{contractive} if~\eqref{eq:cor:zero} holds for all trajectories.
\end{rem}

\section{Applications}
\label{sec:appl}
\renewcommand{\SS}{\texttt{S}}
\newcommand{\CC}{\texttt{C}}

In this section, we study the implications of our results in two application domains.
First, Section~\ref{sec:appl:traffic} considers a hybrid system with identity resets that arises in the study of vehicular traffic flow. 
Traffic congestion can cause a discontinuous change in the speed of traffic flow on a segment of road, 
and the speed does not recover immediately as congestion decreases, so the traffic flow exhibits a hysteresis effect; this hysteresis is important for accurately capturing traffic flow patterns and requires a hybrid model of the dynamics.
Second, Section~\ref{sec:appl:mech} presents a class of hybrid systems with non-identity resets that arise when modeling mechanical systems subject to unilateral constraints.
Several variations are considered that illustrate application of our approach to systems wherein the state dimension and applicable norm changes through reset, 
\rev{and to a system that does not restrict dwell time or mode sequence.
In the interest of readability, we focus on specifying hybrid models and assessing infinitesimal contractivity for these systems, postponing a discussion of why these systems satisfy Assumptions~\ref{assum:system} and~\ref{assum:flow} to Section~\ref{sec:disc:assum}.
}

\subsection{Traffic flow with capacity drop}
\label{sec:appl:traffic}

  Consider a length of freeway divided into two segments or \emph{links}. The state of the system is the traffic \emph{density} on the two links.  Traffic flows from the first segment to the second. The second link has a finite \emph{jam density} $\xjam_2>0$, and we consider link 1 to have infinite capacity so that always the state $x$ satisfies $x\in\mathcal{X}=[0,\infty)\times [0,\xjam_2]\subset \mathbb{R}^2$.

The system has two modes, an \emph{uncongested} (resp., \emph{congested}) mode for which the flow between the two links depends only on the density of the upstream (resp., downstream) link. The dynamics of the uncongested mode is
  \begin{align}
    \dot{x}_1&=u(t)-\Delta_1(x_1)\\
    \dot{x}_2&= \Delta_1(x_1)-\Delta_2(x_2)
  \end{align}
for which we write $\dot{x}=\F_\text{uncon}(x,t)$ assuming a fixed $u(t)$, and for the congested mode is
\begin{align}
  \label{eq:17}
    \dot{x}_1&=u(t)-S_2(x_2)\\
    \dot{x}_2&=S_2(x_2)-\Delta_2(x_2)  
\end{align}
for which we write $\dot{x}=\F_\text{con}(x,t)$ where $\Delta_1$ and $\Delta_2$ are continuously differentiable and strictly increasing \emph{demand} functions satisfying $\Delta_1(0)=\Delta_2(0)=0$, and $S_2$ is a continuously differentiable and strictly decreasing \emph{supply} function satisfying $S_2(\xjam_2)=0$; see \cite{coogan2015compartmental} for further details of the model.

The system is in the congested mode only (but not necessarily) if $  \Delta_1(x_1)\geq  S_2(x_2)$. 
Moreover, empirical studies suggest 
that traffic flow exhibits a hysteresis effect such that traffic remains in the uncongested mode until $x_2\geq \bar{x}_2$ for some $\bar{x}_2$ and does not return to the uncongested mode until $x_2\leq \ulx_2$ for some $\ulx_2<\bar{x}_2$ \cite{Cassidy:1999vl, Laval:2006zp}. Here, we assume $\bar{x}_2\in [0,\xcrit_2)$ where $\xcrit_2$ is the unique density satisfying 
$\Delta_2(\xcrit_2)=S_2(\xcrit_2)$; see Fig.~\ref{fig:traffic2}. This effect is called \emph{capacity drop}.

We model the traffic flow as a hybrid system with four modes $\D_{\SS\CC}, \D_{\bar{\SS}\CC}, \D_{\SS\bar{\CC}},  \D_{\bar{\SS}\bar{\CC}}$ where 
\begin{align}
  \label{eq:49}
  \D_{\SS\CC}&=\mathcal{X}\cap\{x:\Delta_1(x_1)\leq S_2(x_2)\}\cap\{x:x_2\geq \ulx_2\},\\
\D_{\bar{\SS}\CC}&=\mathcal{X}\cap\{x:\Delta_1(x_1)\geq S_2(x_2)\}\cap\{x:x_2\geq \ulx_2\},\\
\D_{\SS\bar{\CC}}&=\mathcal{X}\cap\{x:\Delta_1(x_1)\leq S_2(x_2)\}\cap\{x:x_2\leq \bar{x}_2\},\\
\D_{\bar{\SS}\bar{\CC}}&=\mathcal{X}\cap\{x:\Delta_1(x_1)\geq S_2(x_2)\}\cap\{x:x_2\leq \bar{x}_2\},
\end{align}
and the index set is given by $\mathcal{J}=\{\SS{\CC},\bar{\SS}\CC,\rev{{\SS}\bar{\CC}},\bar{\SS}\bar{\CC}\}$. Furthermore, $\F_{\SS\CC}= \F_{\SS\bar{\CC}}= \F_{\bar{\SS}\bar{\CC}}=\F_\text{uncon}$ and $\F_{\bar{\SS}\CC}=\F_\text{con}$. As a mnemonic, $\SS$ indicates that $\Delta_1(x_1)\leq S_2(x_2)$ so that adequate downstream supply is available, and $\bar{\SS}$ indicates the opposite. Similarly, $\CC$ indicates the status of the hysteresis effect so that the congestion mode is impossible for modes with $\bar{\CC}$.

\begin{figure}
  \centering

   \begin{tikzpicture}[scale=.08]
\fill[gray] (15,30) rectangle (38,38);
\fill[gray] (37.8,32) rectangle (70,38);
\fill[gray] (37.9,30) .. controls +(8,0) and (42,32) .. (50,32) -- (38,38);
\draw[dashed, white] (15,32) -- (40,32);
\draw[dashed, white] (15,34) -- (70,34);
\draw[dashed, white] (15,36) -- (70,36);
  \end{tikzpicture}  

\tikzstyle{link}=[line width=1pt, ->,>=latex, black]
\tikzstyle{junc}=[draw,circle,inner sep=1pt,minimum width=5pt,draw=black]
  \begin{tikzpicture}
\node[junc] (a) at (0,0)  {};
\draw[link,->] (-2,0) -- node[above]{\footnotesize  Link 1} node[below]{\footnotesize $x_1$, density}(a);
\draw[link,->] (a) -- node[above]{\footnotesize  Link 2} node[below]{\footnotesize $x_2$, density} (2,0);
  \end{tikzpicture}
\begin{tabular}{ c c}
\hspace{-10pt}
  \begin{tikzpicture}
\begin{axis}[width=1.8in,
     axis y line=left,
   axis x line=bottom,
every axis x label/.style={at={(rel axis cs:1,0)},anchor=west},
every axis y label/.style={at={(rel axis cs:0,1)},anchor=south},
   xmin=0, xmax=180, ymax=3400,ymin=0, 
   xtick={160},
   xticklabels={$\xjam_1$},
   ytick={\empty},
    xlabel={$x_1$},
    ylabel={flow rate},
    clip=false,
    legend style={at={(.99,.99)},anchor=north east,,cells={align=left}},
    legend cell align=left,
    at={(-260,0)},     ]
    \addplot[name path=A,draw=cfblue,line width=2pt,domain=0:160,samples=50] (\x,{2400*(1-exp(-\x/33))});
    \node[anchor=north] at (160,2400) {\footnotesize $\Delta_1(x_1)$};
\end{axis}
\end{tikzpicture}
&
\hspace{-20pt}
    \begin{tikzpicture}
\begin{axis}[width=1.8in,
     axis y line=left,
   axis x line=bottom,
every axis x label/.style={at={(rel axis cs:1,0)},anchor=west},
every axis y label/.style={at={(rel axis cs:0,1)},anchor=south},
   xmin=0, xmax=180, ymax=3400,ymin=0, 
   xtick={35,65,75,160},
   xticklabels={$\ul{x}_2$,$\bar{x}_2$,, $\xjam_2$},
   ytick={\empty},
    xlabel={$x_2$},
    ylabel={flow rate},
    clip=false,
    legend style={at={(.99,.99)},anchor=north east,,cells={align=left}},
    legend cell align=left,
    at={(-260,0)},     ]
    \addplot[name path=A,draw=cfblue,line width=2pt,domain=0:160,samples=50] (\x,{1900*(1-exp(-\x/33))});
    \addplot[name path=B,draw=dorange,line width=2pt,domain=0:160,samples=2] (\x,{20*(160-x)});
    \path[name path=axis] (axis cs:35,0) -- (axis cs:65,0);
\addplot[fill=dcompb!50]
    fill between[
        of=B and axis,
        soft clip={domain=35:65},
    ];
    \addplot[dashed] coordinates {(75,0) (75,1700)};
    \node[inner sep=0pt] (hyst) at (110,2600) {\footnotesize Hysteresis};
    \draw[->] (hyst.west) to [bend right] +(-25,-650);
    \node[anchor=north] at (160,1900) {\footnotesize $\Delta_2(x_2)$};
    \node[anchor=south] at (160,300) {\footnotesize  $S_2(x_2)$};
    \node[inner sep=0pt, anchor=west] (xcrit) at (90,400) {\footnotesize $\xcrit_2$};
    \draw[->] (xcrit.west) to [bend right] +(-15,0);
\end{axis}
\end{tikzpicture}
\end{tabular}
\caption{Traffic flows from link 1 to link 2. Flow at the interface of link 1 and link 2 depends on the demand $\Delta_1(x_1)$ of link 1 and the supply $S_2(x_2)$ of link 2 and exhibits a hysteresis effect. Traffic exits the network at a flow rate equal to the demand $\Delta_2(x_2)$ of link 2.}
  \label{fig:traffic2}
\end{figure}
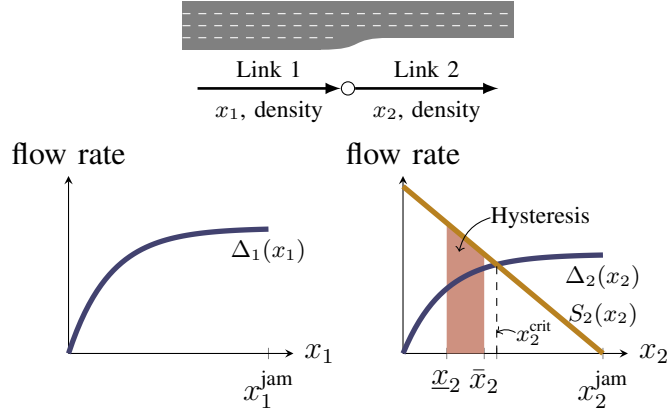

\begin{figure}
  \centering
\begin{tabular}{c c}
\hspace{-12pt}
  \begin{tikzpicture}
  \begin{axis}[width=1.8in,
    height=1.8in,
   axis y line=left,
   axis x line=bottom,
every axis x label/.style={at={(rel axis cs:1,0)},anchor=west},
every axis y label/.style={at={(rel axis cs:0,1)},anchor=south},
   xmin=0, xmax=220, ymax=170,ymin=0, 
   ytick={40,100,160},
   yticklabels={$\ul{x}_2$,$\bar{x}_2$, $\xjam_2$},
   xtick={\empty},
    xlabel={$x_1$},
    ylabel={$x_2$},
    clip=false,
    legend style={at={(.99,.99)},anchor=north east,,cells={align=left}},
    legend cell align=left,
    at={(0,0)}
    ]
    \addplot[name path=A,draw=dorange,line width=2pt,domain=0:200,samples=50] (\x,{-(2200*(1-exp(-\x/33))-3200)/20});
    \addplot[name path=B,draw=cfblue,line width=2pt,domain=0:200,samples=2] (\x,{40});
    \addplot[name path=C,draw=dorange,line width=0pt,domain=0:200,samples=2] (\x,{160});
    \addplot[llblue!60] fill between[of=A and B];
    \addplot[lorange!60] fill between[of=A and C];
    \node[anchor=south] (a) at (130,75) {\footnotesize $\Delta_1(x_1)=S_2(x_2)$};
    \draw [->] ($(a.north west)+(10,-15)$) to [bend right] +(-22,12);
    \node at (100,130) {$\D_{\bar{\SS}\CC}$};
    \node at (25,60) {$\D_{{\SS}\CC}$};
\end{axis}
\end{tikzpicture}
&
\hspace{-20pt}
  \begin{tikzpicture}
  \begin{axis}[width=1.8in,
    height=1.8in,
   axis y line=left,
   axis x line=bottom,
every axis x label/.style={at={(rel axis cs:1,0)},anchor=west},
every axis y label/.style={at={(rel axis cs:0,1)},anchor=south},
   xmin=0, xmax=220, ymax=170,ymin=0, 
   ytick={40,100,160},
   yticklabels={$\ul{x}_2$,$\bar{x}_2$, $\xjam_2$},
   xtick={\empty},
    xlabel={$x_1$},
    ylabel={$x_2$},
    clip=false,
    legend style={at={(.99,.99)},anchor=north east,,cells={align=left}},
    legend cell align=left,
    at={(300,0)},
    ]
    \addplot[name path=A,draw=cfblue,line width=2pt,domain=25:200,samples=50] (\x,{-(2200*(1-exp(-\x/33))-3200)/20});
    \addplot[name path=B,draw=cfblue,line width=0pt,domain=0:200,samples=2] (\x,{00});
    \addplot[name path=C,draw=cfblue,line width=2pt,domain=0:200,samples=2] (\x,{100});
    \addplot[name path=D,draw=cfblue,line width=2pt,domain=0:25,samples=2] (\x,{100});
    \addplot[llblue!60] fill between[of=C and B];
    \addplot[llblue!60] fill between[of=A and C];
    \node[anchor=north] (a) at (85,38) {\footnotesize $\Delta_1(x_1)=S_2(x_2)$};
    \draw [->] ($(a.north east)+(-10,-15)$) to [bend right] +(22,25);
    \node at (150,80) {$\D_{\bar{\SS}\bar{\CC}}$};
    \node at (25,60) {$\D_{{\SS}\bar{\CC}}$};
\end{axis}
\end{tikzpicture}
\end{tabular}
\caption{The traffic network is modeled as a hybrid system with four modes, $\mathcal{J}=\{\SS\bar{\CC},\bar{\SS}\CC,\rev{{\SS}\bar{\CC}},\bar{\SS}\bar{\CC}\}$. The only non-identity \rev{saltation matrix occurs along the transition from $\mathcal{D}_{\bar{\SS}\bar{\CC}}$ to $\mathcal{D}_{\bar{\SS}\CC}$}.} %
  \label{fig:traffic}
\end{figure}

Define the guard functions
\begin{align}
  g_{\SS\CC,\bar{\SS}\CC}(x)&=  g_{\SS\bar{\CC},\bar{\SS}\bar{\CC}}(x)=S_2(x_2)-\Delta_1(x_1),\\
  g_{\bar{\SS}\CC,\SS\CC}(x)&=  g_{\bar{\SS}\bar{\CC},\SS\bar{\CC}}(x)=\Delta_1(x_1)-S_2(x_2),\\
g_{\SS\bar{\CC},\SS\CC}(x)&=g_{\bar{\SS}\bar{\CC},\bar{\SS}\CC}(x)=\bar{x}_2 -x_2,\\
g_{\SS\CC,\SS\bar{\CC}}(x)&=x_2-\ulx_2.
\end{align}

If no guard function is specified between two modes, then no transition is possible between those modes. \rev{In all cases, the reset map is the identity map.} For all $j,j'\in \mathcal{J}$ such that $g_{j,j'}$ is defined, let $\G_{j,j'}=\{x:g_{j,j'}(x)\leq 0\}\cap \D_j$, and let $\G_j=\cup_{j'\in \J}\G_{j,j'}$ for each $j\in\J$.

We have that
\begin{align}
  \label{eq:28}
  J_\text{uncon}(x)&=D_x\F_\text{uncon}(x,t)=
  \begin{bmatrix}
    -D\Delta_1(x_1)&0\\
D\Delta_1(x_1)&-D\Delta_2(x_2)
  \end{bmatrix},\\
  J_\text{con}(x)&=D_x\F_\text{con}(x,t)=
                     \begin{bmatrix}
                       0&-DS_2(x_2)\\
                       0&DS_2(x_2)-D\Delta_2(x_2)
                     \end{bmatrix}.
\end{align}

Let $|\cdot|_1$ be the standard one-norm and $\mu_1$ the corresponding matrix measure. It can be verified that
\begin{align}
  \label{eq:30}
\mu_1(  J_\text{uncon}(x))&\leq 0\ \forall x\in\mathcal{X},\quad \text{and}\\
\quad \mu_1(J_\text{con}(x))&\leq 0\ \forall x\in\mathcal{X}.
\end{align}
Now consider a trajectory in mode $\D_{\bar{\SS}\bar{\CC}}$ transitioning to $\D_{\bar{\SS}{\CC}}$ so that $S_2(x_2)\leq \Delta_1(x_1)$ and the system experiences a capacity drop so that the dynamics transition from uncongested to congested. Computing the saltation matrix $\Xi$ for $x$ such that $g_{{\bar{\SS}\bar{\CC}}, {\bar{\SS}{\CC}}}(x)=0$, we have
\begin{align}
  \label{eq:traffic:salt}
 & \Xi_{{\bar{\SS}\bar{\CC}}, {\bar{\SS}{\CC}}}(t,x)=I+\frac{(\F_\text{con}(t,x)-\F_\text{uncon}(t,x))\cdot D_x g_{{\bar{\SS}\bar{\CC}}, {\bar{\SS}{\CC}}} (x)}{D_x g_{{\bar{\SS}\bar{\CC}}, {\bar{\SS}{\CC}}} (x)\cdot \F_\text{uncon}(t,x)}\\
&=I+  \frac{-1}{\Delta_1(x_1)-\Delta_2(\bar{x}_2)}\begin{bmatrix}
    \Delta_1(x_1)-S_2(\bar{x}_2)\\
S_2(\bar{x}_2)-    \Delta_1(x_1)
  \end{bmatrix}\cdot
  \begin{bmatrix}
    0&-1
  \end{bmatrix}
\end{align}
for all $x\in \{x:x_2=\bar{x}_2\}=\G_{{\bar{\SS}\bar{\CC}}, {\bar{\SS}{\CC}}}(x)$. Let $ \rho(x_1)=\frac{\Delta_1(x_1) -S_2(\bar{x}_2)}{\Delta_1(x_1)-\Delta_2(\bar{x}_2)}$ so that
\begin{align}
  \label{eq:52}
    \Xi_{{\bar{\SS}\bar{\CC}}, {\bar{\SS}{\CC}}}(t,x)=
  \begin{bmatrix}
    1&\rho(x_1)\\
    0&1-\rho(x_1)
  \end{bmatrix}
\end{align}
for all $x\in \{x:x_2=\bar{x}_2\}$. Because $\bar{x}_2<\xcrit_2$, it holds that $\Delta_2(\bar{x}_2)<S_2(\bar{x}_2)$ and therefore
\begin{align}
  \label{eq:47}
0\leq  \rho(x_1)<1\quad \forall x_1\in\{x_1:  \Delta_1(x_1) \geq S_2(\bar{x}_2)\}.
\end{align}
Therefore, $\|    \Xi_{{\bar{\SS}\bar{\CC}}, {\bar{\SS}{\CC}}}(t,x)\|_1=1$ for all $x\in \{x:x_2=\bar{x}_2\}=\G_{{\bar{\SS}\bar{\CC}}, {\bar{\SS}{\CC}}}$.

For all $(j,j')\neq ({{\bar{\SS}\bar{\CC}}, {\bar{\SS}{\CC}}})$ such that $\G_{j,j'}$ is nonempty, it can be verified that $\F_{j'}(x)=\F_j(x)$ for all $x\in \G_{j,j'}$ so that $\Xi_{j,j'}(t,x)=I$ and trivially $\|\Xi_{j,j'}(t,x)\|_1=1$. Applying Theorem \ref{thm:main}, we conclude that
\begin{align}
  \label{eq:25}
  |y(t)-x(t)|_1\leq |y(0)-x(0)|_1
\end{align}
for any pair of trajectories $x(t), y(t)$ of the traffic flow system with initial conditions $y(0),x(0)$ subject to any input $u(t)$, that is, the system is nonexpansive.

\rev{
\subsection{Mechanical systems subject to unilateral constraints}
\label{sec:appl:mech}

Consider a \emph{mechanical} system with $d\in\Nbb$ degrees-of-freedom (DOF) subject to $n\in\Nbb$ \emph{unilateral} constraints~\cite{Johnson2016-nh},
\begin{subequations}\label{eq:ddq}
\begin{align}
  M(q)\ddot{q} & = 
  f(t,q,\dot{q})\ \text{s.t.}\ a(q) \ge 0,
  \label{eq:ddq:cont}\\
  \dot{q}^+ & = \Delta(q)\dot{q}^-,\ a(q) = 0,
  \label{eq:ddq:disc}
\end{align}
\end{subequations}
where:
$q\in Q = \Rbb^d$ are generalized configuration coordinates;
$M(q)\in\Rbb^{d\times d}$ is the mass matrix;
$f(t,q,\dot{q})\in\Rbb^d$ is the time-varying vector of Coriolis, potential, and applied forces;
$a(q)\in\Rbb^n$ is the constraint function;
and
we regard the inequality $a(q) \ge 0$ as being enforced componentwise.
In what follows we will restrict our attention to linear ``MCK'' systems subject to linear unilateral constraints%
\footnote{If the constraint function is affine, $a(q) = A\cdot(q - q_0)$, we translate coordinates as $\ol{q} = q - q_0$; 
this shift offsets the potential energy by an affine term $q^\tr K q_0 + \frac{1}{2}q_0^\tr K q_0$; assuming the additional constant force generated by this offset $-K q_0$ lies in the span of the columns of $B$ so $-K q_0 = B \mu$ for some $\mu\in\mathbb{R}^m$, we could incorporate it into the open--loop input, $\ol{u}(t) = u(t) + \mu$.}%
,
\eqnn{\label{eq:appl:mech:viscodn}
M \ddot{q} + C \dot{q} + K q = B u(t) \ \text{s.t.}\ a(q) = A q \ge 0,
}
and restrict to impact with the constraints that is (\emph{perfectly}) \emph{plastic},
where:
$M = M^\top > 0$,
$C = C^\top \geq 0$,
and
$K = K^\top > 0$
are constant (inertia, damping, and elasticity) $d\times d$ matrices,
$B \in \Rbb^{d\times m}$ is a constant matrix that maps inputs $u\in\Rbb^m$ to forces,
and
$A \in \Rbb^{n\times d}$ is a constant matrix that determines the constraints.
To satisfy Assumption~{\ref{assum:flow}.2} (continuity of hybrid system flow) in this class of systems, it is necessary~\cite[Thm.~20]{Ballard2000-ui} that the constraint normals are orthogonal%
\footnote{Strictly speaking,~\eqref{eq:appl:mech:orth} specifies that the constraints are ortho\emph{normal}, whereas orthogonality requires only that $A M^{-1} A^\top$ is diagonal; without loss of generality, we assume coordinates have been chosen so that the diagonal matrix is the identity to simplify the (already rather complex) formulas in what follows.}%
,
\eqnn{
\label{eq:appl:mech:orth}
A\ M^{-1}  A^\top = I_{n\times n};
}
this algebraic fact will be leveraged repeatedly in what follows.
Given a subset of constraint indices $J\subset\set{1,\dots,n}$,
we will let 
$n_J = \card{J}$ denote the number of elements in $J$ and
$A_J\in\Rbb^{n_J\times d}$ denote the matrix obtained by extracting rows indexed by $J$ from $A$; 
note that~\eqref{eq:appl:mech:orth} implies that 
$A_J\ M^{-1} A_J^\top = I_{n_J\times n_J}$. 
When the
(perfect, holonomic, scleronomic)%
\footnote{A constraint is:
\emph{perfect} if it only generates force in the direction normal to the constraint surface; 
\emph{holonomic} if it varies with configuration but not velocity;
\emph{scleronomic} if it does not vary with time.
}
constraints $J\subset\set{1,\dots,n}$ activate
(so $a_J(q) = A_J q = 0$),
they apply impulses
$\dot{q}^+ = \Delta_J \dot{q}^-$
where
\eqnn{
\label{eq:appl:mech:impulse}
\Delta_J 
= I - M^{-1}  A_J^\top  (A_J M^{-1}  A_J^\top)^{-1}  A_J
= I - M^{-1} A_J^\top A_J
}
and forces
$A_J^\top\lambda_J(t,q,\dot{q})$
where
\eqnn{
\label{eq:appl:mech:forces}
\lambda_J(t,q,\dot{q}) 
= - (A_J M^{-1}  A_J^\top)^{-1}  A_J  M^{-1} (B u(t) - C \dot{q} - K q) 
= A_J M^{-1} (B u(t) - C \dot{q} - K q).
}
The sum of potential and kinetic energy,
\eqnn{
e(q,\dot{q}) = \frac{1}{2} q^\tr K q + \frac{1}{2} \dot{q}^\tr M \dot{q},
}
can be used to determine a weighting matrix for a $2$--norm%
\footnote{That is, $\norm{x} = \sqrt{\frac{1}{2} x^\tr E\, x}$.}
\eqnn{
\label{eq:appl:mech:E}
E = D^2 e = \mat{cc}{K & 0 \\ 0 & M}.
}
This norm is attractive for infinitesimal contraction analysis of~\eqref{eq:ddq:cont} since,
with $x = (q,\dot{q})$ denoting the system state and $\dot{x} = \F_\emptyset(t,x)$ denoting the continuous-time dynamics in the unconstrained mode ($J = \emptyset$), 
and recalling that the matrix measure for a 2-norm determined by weighting matrix $E$ is given by
\eqnn{
\mu(X) = \max\spec\frac{1}{2}\paren{X^\tr \cdot E + E \cdot X},
}
we find that using the energy--induced $2$--norm yields a negative semidefinite matrix
\eqnn{
\label{eq:appl:mech:contract}
\frac{1}{2}\paren{D_x \F_\emptyset^\tr\cdot E + E \cdot D_x \F_\emptyset} & 
= \mat{cc}{0 & 0 \\ 0 & -C} \leq 0,
}
whence the unconstrained flow is non-expansive overall and infinitesimally contractive in velocity coordinates affected by damping.
The collection of constraint modes $\set{J\subset\set{1,\dots,n}}$ with corresponding vector fields and velocity resets together determine a hybrid system in the framework from Section~\ref{sec:prelim}; we refer the interested reader to~\cite{Johnson2016-nh} for more details.

Note that the sum of potential and kinetic energy decreases monotonically in~\eqref{eq:ddq}
since%
~\eqref{eq:appl:mech:contract} implies that $\dot{e} \leq 0$ in~\eqref{eq:ddq:cont}
and%
~\eqref{eq:appl:mech:impulse} implies that $e^+ = e(q,\dot{q}^+) \leq e(q,\dot{q}^-) = e^-$ in~\eqref{eq:ddq:disc}. 
These observations lead us to intuit that the hybrid system specified by~\eqref{eq:ddq} ought to be (infinitesimally) contractive (at least, non-expansive).
The sequence of examples in the remainder of this section illustrate cases where this intuition holds and cases where it is violated.
Specifically, we assess infinitesimal contractivity in simple systems: 
with few (1, 2, or 3) degrees-of-freedom (DOF), 
\emph{hard} or \emph{soft} constraints, 
and 
\emph{elastic} or \emph{viscoelastic} spring-dampers, 
to highlight obstacles to infinitesimal contractivity in~\eqref{eq:ddq}:
\begin{enumerate}[label=\ref{sec:appl:mech}\arabic*,leftmargin=0.4in]
  \item (Fig.~\ref{fig:mech:1dof}) is a \emph{linear impact oscillator}~\cite{Nordmark1991-ui} with $1$ DOF ($Q = \Rbb^1$), $1$ hard constraint ($a:Q\into\Rbb^1$), and elastic spring-dampers where continuous-time flow and discrete-time reset are contractive;
  \item (Fig.~\ref{fig:mech:2dof}) extends~\ref{sec:appl:mech:1dof} to $2$ DOF ($Q = \Rbb^2$), 
  finding that continuous-time flow is contractive
  but 
  discrete-time reset is expansive when constraints activate;
  \item (Fig.~\ref{fig:mech:soft}) modifies~\ref{sec:appl:mech:1dof} by relaxing the hard constraint using the penalty method from~\cite{Tornambe1999-dr},
  finding that 
  continuous-time flow is contractive
  but
  discrete-time reset is expansive when constraints \emph{de}activate;
\item (Fig.~\ref{fig:mech:visco}) modifies~\ref{sec:appl:mech:2dof} by employing visco-elastic spring-dampers, 
  finding that continuous-time flow is 
  contractive %
  and 
  discrete-time reset is non-expansive. 
  \item (Fig.~\ref{fig:mech:visco32}) extends~\ref{sec:appl:mech:visco} to $3$ DOF ($Q = \Rbb^3$) and $2$ constraints ($a:Q\into\Rbb^2$), demonstrating application of our approach without restricting mode sequence or dwell time;
  \item extends~\ref{sec:appl:mech:visco32} to $d$ DOF ($Q = \Rbb^d$) and $n$ constraints ($a:Q\into\Rbb^n$), demonstrating application of our approach to a system with an arbitrary number of degrees-of-freedom and constraints.
\end{enumerate}
\noindent
These mechanical systems are
depicted
in Fig.~\ref{fig:mech}.
Note that 
$\C = \set{(q,\dot{q}) : a(q) \ge 0}\subset TQ$ is forward-invariant under~\eqref{eq:ddq} for all but~\ref{sec:appl:mech:soft}, where $\C = TQ$ is forward-invariant;
these subsets will be used as the \emph{contraction regions} in each application of our results in the remainder of this section.

} %

\begin{figure}
\begin{minipage}[b]{\columnwidth}
  {
  \def\svgwidth{0.9\columnwidth} 
  \resizebox{7cm}{!}{\begingroup%
  \makeatletter%
  \providecommand\color[2][]{%
    \errmessage{(Inkscape) Color is used for the text in Inkscape, but the package 'color.sty' is not loaded}%
    \renewcommand\color[2][]{}%
  }%
  \providecommand\transparent[1]{%
    \errmessage{(Inkscape) Transparency is used (non-zero) for the text in Inkscape, but the package 'transparent.sty' is not loaded}%
    \renewcommand\transparent[1]{}%
  }%
  \providecommand\rotatebox[2]{#2}%
  \newcommand*\fsize{\dimexpr\f@size pt\relax}%
  \newcommand*\lineheight[1]{\fontsize{\fsize}{#1\fsize}\selectfont}%
  \ifx\svgwidth\undefined%
    \setlength{\unitlength}{534.02862549bp}%
    \ifx\svgscale\undefined%
      \relax%
    \else%
      \setlength{\unitlength}{\unitlength * \real{\svgscale}}%
    \fi%
  \else%
    \setlength{\unitlength}{\svgwidth}%
  \fi%
  \global\let\svgwidth\undefined%
  \global\let\svgscale\undefined%
  \makeatother%
  \begin{picture}(1,0.22753887)%
    \lineheight{1}%
    \setlength\tabcolsep{0pt}%
    \put(0,0){\includegraphics[width=\unitlength,page=1]{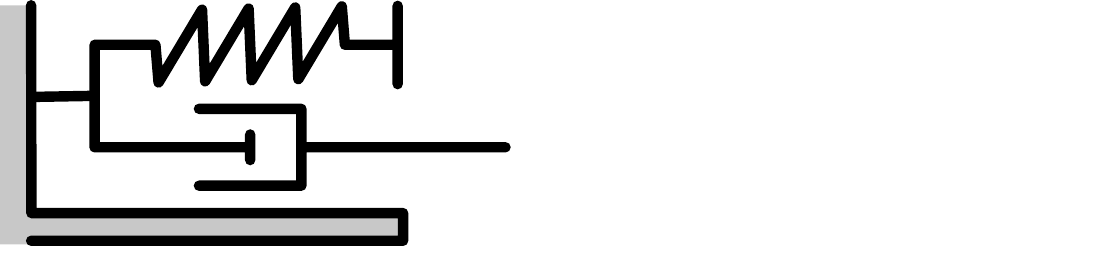}}%
    \put(0.11173893,0.20643902){\color[rgb]{0,0,0}\makebox(0,0)[t]{\lineheight{0}\smash{\begin{tabular}[t]{c}$\kappa$\end{tabular}}}}%
    \put(0.13982731,0.05326364){\color[rgb]{0,0,0}\makebox(0,0)[t]{\lineheight{0}\smash{\begin{tabular}[t]{c}$\beta$\end{tabular}}}}%
    \put(0.50074733,0.08985233){\color[rgb]{0,0,0}\makebox(0,0)[t]{\lineheight{0}\smash{\begin{tabular}[t]{c}$m$\end{tabular}}}}%
    \put(0,0){\includegraphics[width=\unitlength,page=2]{1dof.pdf}}%
    \put(0.40525891,0.1275075){\color[rgb]{0,0,0}\makebox(0,0)[t]{\lineheight{0}\smash{\begin{tabular}[t]{c}$u(t)$\end{tabular}}}}%
  \end{picture}%
\endgroup%
}
  }
  \subcaption{Sec.~\ref{sec:appl:mech:1dof}: 1 DOF, 1 hard constraint, elastic; contractive}\label{fig:mech:1dof}
\end{minipage}%
\\

\begin{minipage}[b]{\columnwidth}
  {
  \def\svgwidth{0.9\columnwidth} 
  \resizebox{!}{1.65cm}{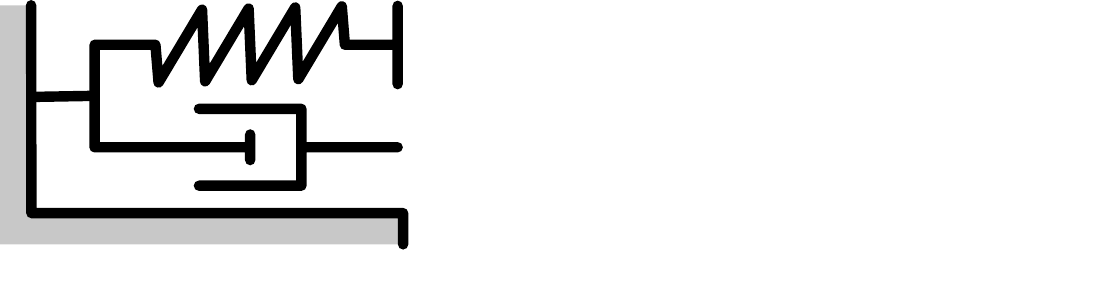}
  }
  \subcaption{Sec.~\ref{sec:appl:mech:2dof}: 2 DOF, 1 hard constraint, elastic; expansive constraint activation}\label{fig:mech:2dof}
\end{minipage}%
\\

\begin{minipage}[b]{\columnwidth}
  {
  \def\svgwidth{0.9\columnwidth} 
  \resizebox{!}{1.65cm}{\begingroup%
  \makeatletter%
  \providecommand\color[2][]{%
    \errmessage{(Inkscape) Color is used for the text in Inkscape, but the package 'color.sty' is not loaded}%
    \renewcommand\color[2][]{}%
  }%
  \providecommand\transparent[1]{%
    \errmessage{(Inkscape) Transparency is used (non-zero) for the text in Inkscape, but the package 'transparent.sty' is not loaded}%
    \renewcommand\transparent[1]{}%
  }%
  \providecommand\rotatebox[2]{#2}%
  \newcommand*\fsize{\dimexpr\f@size pt\relax}%
  \newcommand*\lineheight[1]{\fontsize{\fsize}{#1\fsize}\selectfont}%
  \ifx\svgwidth\undefined%
    \setlength{\unitlength}{534.02862549bp}%
    \ifx\svgscale\undefined%
      \relax%
    \else%
      \setlength{\unitlength}{\unitlength * \real{\svgscale}}%
    \fi%
  \else%
    \setlength{\unitlength}{\svgwidth}%
  \fi%
  \global\let\svgwidth\undefined%
  \global\let\svgscale\undefined%
  \makeatother%
  \begin{picture}(1,0.26135796)%
    \lineheight{1}%
    \setlength\tabcolsep{0pt}%
    \put(0.33773271,0.05676306){\color[rgb]{0,0,0}\makebox(0,0)[t]{\lineheight{0}\smash{\begin{tabular}[t]{c}$\kappa''$\end{tabular}}}}%
    \put(0,0){\includegraphics[width=\unitlength,page=1]{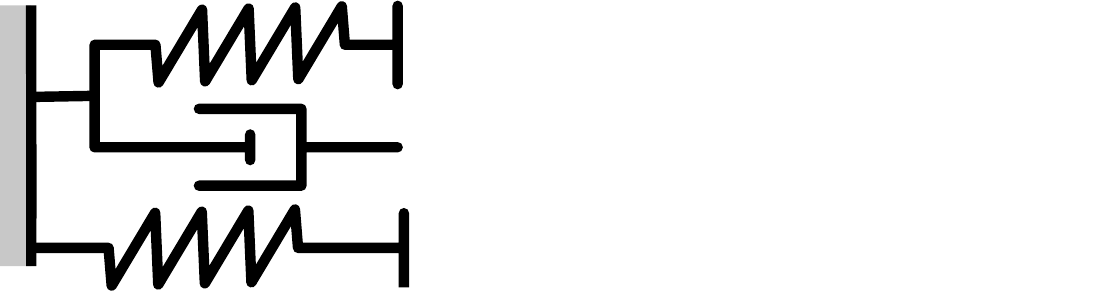}}%
    \put(0.11173894,0.24025835){\color[rgb]{0,0,0}\makebox(0,0)[t]{\lineheight{0}\smash{\begin{tabular}[t]{c}$\kappa$\end{tabular}}}}%
    \put(0.13982732,0.08708297){\color[rgb]{0,0,0}\makebox(0,0)[t]{\lineheight{0}\smash{\begin{tabular}[t]{c}$\beta$\end{tabular}}}}%
    \put(0.50074735,0.12367165){\color[rgb]{0,0,0}\makebox(0,0)[t]{\lineheight{0}\smash{\begin{tabular}[t]{c}$m$\end{tabular}}}}%
    \put(0,0){\includegraphics[width=\unitlength,page=2]{soft.pdf}}%
    \put(0.40525893,0.16694451){\color[rgb]{0,0,0}\makebox(0,0)[t]{\lineheight{0}\smash{\begin{tabular}[t]{c}$u(t)$\end{tabular}}}}%
  \end{picture}%
\endgroup%
}
  }
  \subcaption{Sec.~\ref{sec:appl:mech:soft}: 1 DOF, 1 soft constraint, elastic; expansive constraint deactivation}\label{fig:mech:soft}
\end{minipage}%
\\

\begin{minipage}[b]{\columnwidth}%
  \vspace{.2cm}
  {%
  \def\svgwidth{0.9\columnwidth}%
  \resizebox{7cm}{!}{\input{visco.pdf_tex}}%
  }%
  \subcaption{Sec.~\ref{sec:appl:mech:visco}: 2 DOF, 1 hard constraint, viscoelastic; contractive}\label{fig:mech:visco}
\end{minipage}%
\\
\begin{minipage}[b]{\columnwidth}
  \vspace{.2cm}
  {
  \def\svgwidth{1.0\columnwidth} 
  \resizebox{7cm}{!}{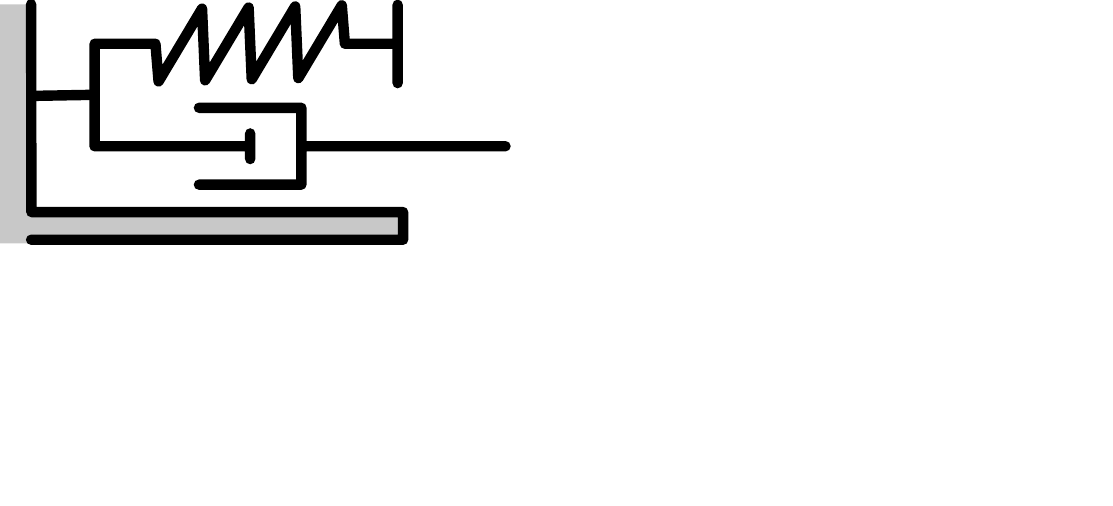}
  }
  \subcaption{Sec.~\ref{sec:appl:mech:visco32}: 3 DOF, 2 hard constraints, viscoelastic; contractive}\label{fig:mech:visco32}
\end{minipage}%
\caption{%
Mechanical systems subject to unilateral constraints considered in Sec.~\ref{sec:appl:mech}.%
}
\label{fig:mech}
\end{figure}

\subsubsection{Elastic spring-damper with $1$ DOF, $1$ constraint}
\label{sec:appl:mech:1dof}
Point mass $m$ moves along a frictionless rail to the right of a mechanical stop positioned at the origin,
impacting plastically if it reaches the stop with negative velocity as illustrated in Fig.~\ref{fig:mech:1dof}.
The mass
is connected to the stop with a parallel spring-damper: 
the viscosity of the damper is denoted $\beta$; 
the stiffness of the spring is denoted $\kappa$;
the spring's rest length is adjusted by a time-varying input $u(t)$.
This mechanical system subject to a unilateral constraint can be modeled in the hybrid systems framework from Section~\ref{sec:prelim} with:
\begin{itemize}
\item[$\D$] $= 
\D_{\uncon}\coprod
\D_{\con}$
where 
$\D_{\uncon} = \set{(q,\dot{q}) \in \Rbb^2}$ 
is the set of continuous states wherein 
the mass is unconstrained
and
$\D_{\con} = \set{0\in \Rbb^0}$ 
is the (singleton) set of continuous states wherein 
the mass is constrained;

\item[$\F$] $: [0,\infty)\times\D\into T\D$ 
defined by 
$\F_{\uncon}(t,q,\dot{q}) = (\dot{q},\ddot{q})$
and
$\F_{\con}(t,0) = 0$,
where
$m \ddot{q} =  
\kappa \left(u(t) - q\right)
- \beta \dot{q};$

\item[$\G$] $= \G_{\TD}\coprod\G_{\LO}$ where 
\eqnn{\label{eq:TD}
\G_{\TD} = \set{(t,q,\dot{q})\in[0,\infty)\times\D_\uncon : q \le 0,\ \dot{q} < 0}
}
is the set of 
states where the mass impacts the stop 
and
\eqnn{\label{eq:LO}
\G_{\LO} = \set{(t,0)\in[0,\infty)\times\D_\con : u(t) \ge 0}
}
is the set of 
states where the mass accelerates off the stop;
 
\item[$\R$] $:\G\into\D$ defined by 
$\R|_{\G_{\TD}}(t,q,\dot{q}) = 0\in\Rbb^0$ (i.e. the position and velocity of the mass are ``forgotten'' since they are both equal to zero following plastic impact with the stop) and
$\R|_{\G_{\LO}}(t,0) = (0,0)\in\Rbb^2$ (i.e. the position and velocity of the mass are reinitialized at zero when the mass accelerates off the stop).
\end{itemize}
\noindent
Note that it was parsimonious but not necessary to remove the mass's position and velocity from the continuous state in the constrained mode; 
we elect to remove these states here and in what follows primarily to illustrate application of this paper's theoretical results in systems wherein the dimension of the continuous state varies in different modes.

Letting the sum of the potential and kinetic energy 
$e = \frac{1}{2} \kappa (u(t) - q)^2 + \frac{1}{2} m \dot{q}^2$ 
determine the weighting matrix for a $2$-norm 
as in~\eqref{eq:appl:mech:E},
we find that
\eqnn{
\frac{1}{2}\paren{D_x \F_\uncon^\tr \cdot E + E \cdot D_x \F_\uncon} = \mat{cc}{0 & 0 \\ 0 & -\beta},
}
whence the unconstrained flow is non-expansive overall and contractive in velocity.
Since $D_x \F_{\con} = 0$, the constrained flow is non-expansive.
The saltation matrices for constraint activation and deactivation are both zero operators, hence their induced norms are zero,
whence discrete-time reset is contractive.
Theorem~\ref{thm:main} implies that the distance between trajectories does not increase over time.
Corollary~\ref{cor:main} yields the intuitive conclusion that the distance between any two trajectories is zero if both trajectories have undergone at least one discrete transition.

\subsubsection{Elastic spring--dampers with $2$ DOF, $1$ hard constraint}
\label{sec:appl:mech:2dof}
Two point masses $m$, $m'$ move along a frictionless rail to the right of a stop positioned at the origin as illustrated in Fig.~\ref{fig:mech:2dof};
mass $m$ impacts plastically if it reaches the stop with negative velocity.
Parallel linear spring--dampers connect the masses to one another, 
and mass $m$ to the stop;
the viscosity of the damper connecting $m$ to the stop is denoted
$\beta$ (resp. $\beta'$ for the damper connecting the two masses);
stiffness of the spring connecting $m$ to the stop is denoted by 
$\kappa$ 
(resp. 
$\kappa'$);
and 
the rest length of this spring is adjusted by a time--varying input $u(t)$.
This mechanical system subject to a unilateral constraint can be modeled in the hybrid systems framework from Section~\ref{sec:prelim} with:
\begin{itemize}
\item[$\D$] $=  \D_{\uncon}\coprod \D_{\lft}$ 
where 
$\D_{\uncon} = \set{(q,q',\dot{q},\dot{q}') \in[0,\infty)\times\Rbb\times\Rbb\times\Rbb}$ 
is the set of continuous states wherein 
mass $m$ is unconstrained
and
$\D_{\lft} = \set{(q',\dot{q}') \in \Rbb\times\Rbb}$ 
is the set of continuous states wherein 
mass $m$ is constrained;

\item[$\F$] $: [0,\infty)\times\D\into\D$ 
defined by 
$\F_{\uncon}(t,q,q',\dot{q},\dot{q}') = (\dot{q},\dot{q}',\ddot{q},\ddot{q}')$
and
$\F_{\lft}(t,q',\dot{q}') = (\dot{q}',\ddot{q}')$
where
\eqnn{
\ddot{q} & = \frac{1}{m} 
\left(
+\kappa \left(q' - q\right)
+ \beta \left(\dot{q}' - \dot{q}\right) 
+ \kappa (u(t) - q) - \beta\dot{q}
\right), \\
\ddot{q}' & = \frac{1}{m'} 
\left(
-\kappa \left(q' - q\right)
- \beta \left(\dot{q}' - \dot{q}\right)\right);
}

\item[$\G$] $= \G_{\TD}\coprod\G_{\LO}$ 
where 
\eqnn{\G_{\TD} = \set{(t,q,q',\dot{q},\dot{q}')\in[0,\infty)\times\D_\uncon : q \le 0,\ \dot{q} < 0}}
is the set of \emph{touchdown} (TD) states wherein mass $m$ impacts the stop 
and
\eqnn{\G_{\LO} = \set{(t,q',\dot{q}')\in[0,\infty)\times\D : \ddot{q} \ge 0}}
is the set of \emph{liftoff} (LO) states wherein mass $m'$ accelerates off the stop;
 
\item[$\R$] $:\G\into\D$ defined by 
$\R|_{\G_{\TD}}(t,q,q',\dot{q},\dot{q}') = (q',\dot{q}')$ (i.e. the position and velocity of mass $m$ are ``forgotten'' since they are both equal to zero when the stop constraint activates) and
$\R|_{\G_{\LO}}(t,q',\dot{q}') = (0,q',0,\dot{q}')$ (i.e. the position and velocity of mass $m$ are reinitialized at zero when stop constraint deactivates).

\end{itemize}

As with the previous example, we let the sum of potential and kinetic energy determine the weighting matrices for $2$--norms that will be used to assess infinitesimal contractivity of continuous--time flow and discrete--time reset.
In the unconstrained mode, the energy is
\eqnn{
e_{\uncon}(q,\dot{q}) &=
\frac{1}{2} \kappa (u(t) - q)^{2} 
+ 
\frac{1}{2} \kappa' \left(q' - q\right)^{2} 
+ 
\frac{1}{2} m \dot{q}^{2}  
+ 
\frac{1}{2} m' \dot{q}'^{2},
}
yielding the metric
\eqnn{
E_{\uncon} = D^2 e_{\uncon} = 
E_{\uncon} &=\left[\begin{matrix}\kappa + \kappa' & - \kappa' & 0 & 0\\- \kappa' & \kappa & 0 & 0\\0 & 0 & m & 0\\0 & 0 & 0 & m'\end{matrix}\right].
}
In the constrained mode, the energy simplifies to 
\eqnn{
e_{\lft}(q,\dot{q}) &=
\frac{1}{2} \kappa u(t)^{2}
+ 
\frac{1}{2} \kappa' q'^{2} 
+ 
\frac{1}{2} m'\dot{q}'^{2},
}
yielding the metric
\eqnn{
E_{\lft} =  D^2 e_{\lft} &=\left[\begin{matrix}\kappa' & 0\\0 & m'\end{matrix}\right].
}

We first consider infinitesimal contractivity of continuous--time flow.
Letting $x = (q,q',\dot{q},\dot{q}')$ denote the continuous state vector in the unconstrained mode and $\dot{x} = \F_{\uncon}(t,x)$ denote its time derivative yields
\eqnn{
\frac{1}{2}\paren{ D_x \F_{\uncon}^\tr \cdot E_{\uncon} + E_{\uncon} \cdot D_x \F_{\uncon} } &=
\diag\paren{
\left[\begin{matrix}0 & 0 \\0 & 0 \end{matrix}\right]
,
\left[\begin{matrix} - (\beta + \beta') & \beta'\\\beta' & - \beta'\end{matrix}\right]},
}
Similarly, letting $x = (q',\dot{q}')$ denote the continuous state vector in the constrained mode and $\dot{x} = \F_{\lft}(t,x)$ denote its time derivative yields
\eqnn{
\frac{1}{2}\paren{D_x \F_{\lft}^\tr \cdot E_{\lft} + E_{\lft} \cdot D_x \F_{\lft}} &=\diag\paren{0,-\beta'}.
}
The spectrum of 
\begin{equation}\frac{1}{2}\paren{D_x \F_{\uncon}^\tr \cdot E_{\uncon} + E_{\uncon} \cdot D_x \F_{\uncon}}\end{equation}
is 
$\set{0,-\frac{1}{2}\paren{\beta+2\beta'\pm\sqrt{\beta^2+4\beta'^2}}}$
and
that of
\begin{equation}\frac{1}{2}\paren{D_x \F_{\lft}^\tr \cdot E_{\lft} + E_{\lft} \cdot D_x \F_{\lft}}\end{equation}
is 
$\set{0,-\beta'}$, 
so the matrix measures of $D_x\F_\uncon$ and $D_x\F_\lft$ are both equal to 0 (zero),
whence continuous--time flow is non--expansive%
\footnote{Since some of the eigenvalues are negative, the flow is actually
\emph{semi--}~\cite[Sec.~2]{Lohmiller1998-xj}
or 
\emph{horizontally}~\cite[Sec.~VII]{Forni2014-pq}
contractive.}
in both the constrained and unconstrained mode.

We now consider infinitesimal contractivity of discrete--time reset, i.e. we evaluate induced norms of saltation matrices.
The \emph{liftoff} (\LO) saltation matrix is
\eqnn{
\Xi_{\LO} &=\left[\begin{matrix}0 & 0\\1 & 0\\0 & 0\\0 & 1\end{matrix}\right];
}
since this matrix is an energy--preserving embedding, it is an isometry with respect to the $2$--norms determined by the energy metric, hence its induced norm is equal to unity.
The \emph{touchdown} (\TD) saltation matrix is
\eqn{
\Xi_{\TD} &=\left[\begin{matrix}0 & 1 & 0 & 0\\- \frac{1}{m'}\beta' & 0 & 0 & 1 \end{matrix}\right].
}
We seek to evaluate the matrix norm $\norm{\Xi_{TD}}$ induced by the vector 2--norms determined by $E_{\uncon}$ in the unconstrained mode and $E_{\lft}$ in the constrained mode:
\eqn{
\norm{\Xi_{\TD}} = \max\set{\norm{\Xi_{\TD} x}_{\lft} : \norm{x}_{\uncon} = 1}.
}
Unfortunately, although an expression for this induced norm is readily obtained using symbolic computer algebra, we were unable to analytically determine when this expression is larger than unity.
However, a straightforward calculation shows that the induced norm is larger than unity for all $\beta'$ sufficiently large: 
noting that the vector 
$v = \paren{\frac{\sqrt{2}}{\sqrt{\kappa+\kappa'}},0,0,0}^\tr\in\D_\uncon$ 
has norm
$\norm{v}_\uncon = 1$
and that the vector
$w = \Xi_{\TD} v = \paren{0,-\frac{\sqrt{2}}{m'\sqrt{\kappa+\kappa'}}\beta'}^\tr\in\D_\lft$
has norm
$\norm{w}_\lft = \frac{1}{\sqrt{m'\paren{\kappa+\kappa'}}}\beta'$,
we conclude that
$\norm{\Xi_{\TD}} \ge \norm{w}_\lft$
and hence
$\norm{\Xi_{\TD}} > 1$ for all $\beta'$ sufficiently large.
(Numerical experiments%
\footnote{We sampled 
$m'\in(0,10)$, 
$\kappa\in(0,1000)$, 
$\kappa'\in(0,1000)$, 
$\beta'\in(0,10)$
uniformly at random \numprint{100000} times.}
indicate that the induced norm is larger than unity for all $\beta' > 0$.)
We conclude that constraint activation is generally expansive.

\subsubsection{Elastic spring--damper with soft constraints}
\label{sec:appl:mech:soft}
The result in the previous section indicates that spring--damper networks subject to \emph{hard} unilateral constraints generally 
do not satisfy the discrete--time infinitesimal contractivity condition~\eqref{eq:thm:Xi} in Theorem~\ref{thm:main} 
when the system has more than a single degree--of--freedom
(the 1--DOF example from Sec.~\ref{sec:appl:mech:1dof} is contractive only because the constrained mode is zero--dimensional).
The reset, or restitution law, used to model impacts against hard constraints coarsely approximates the actual mechanics of the interaction between bodies, which consist of elastic and plastic deformation in the contact zone. An alternative approach is to explicitly model this deformation using additional forces.
In this section, we consider infinitesimal contractivity of the class of mechanical systems subject to \emph{soft} unilateral constraints studied in~\cite{Tornambe1999-dr}.
Specifically, rather than exactly enforcing unilateral constraints $a(q) \ge 0$ in~\eqref{eq:ddq}, we will \emph{penalize} constraint violation using a potential function that applies forces ``as though a linear elastic spring were located at the point of contact''~\cite[Sec~II-B]{Tornambe1999-dr} as illustrated in Fig.~\ref{fig:mech:soft}, 
yielding the modified potential energy
\eqnn{
\label{eq:soft:pe}
v_J(q) = \frac{1}{2} q^\tr K q + \frac{1}{2}\sum_{j\in J} k_j a_j(q)^2 = \frac{1}{2} q^\tr \paren{K + K_J} q 
}
where $K_J = Da_J^\tr \diag\set{k_j}_{j\in J} Da_J$ is a positive semidefinite stiffness matrix for the potential $v_J$ associated with the subset of constraints $J\subset\set{1,\dots,n}$ that are active, i.e. for which $a_J(q) \le 0$.
With this modification, the system's equations of motion become
\eqnn{
\label{eq:ddq:soft}
M\ddot{q} & = 
B u(t)
- (K + K_J) q
- C \dot{q},\ a_J(q) \le 0.
}
The dynamics in~\eqref{eq:ddq:soft} are classical in the sense that the right--hand side of the equation specifies a Lipschitz continuous and piecewise--differentiable vector field.
However, the natural distance metric determined by the sum of potential and kinetic energy,
\eqnn{
e_J(q,\dot{q}) = \frac{1}{2} q^\tr \paren{K + K_J} q + \frac{1}{2} \dot{q}^\tr M \dot{q},
}
now depends on the set of active constraints $J$,
whence the weighting matrix for the energy--induced $2$--norm also depends on the set of active constraints,
\eqnn{
E_J = D^2 e_J = \mat{cc}{K + K_J & 0 \\ 0 & M}.
}
Letting $x = (q,\dot{q})$ and $\dot{x} = \F_J(t,x)$ for $a_J(q) \le 0$, using the energy--induced $2$--norm yields a negative semidefinite matrix
\eqnn{
\frac{1}{2}\paren{D_x \F_J^\tr\cdot E_J + E_J \cdot D_x \F_J} & 
= \mat{cc}{0 & 0 \\ 0 & -B},
}
whereas using, e.g., the $2$--norm from the unconstrained mode yields
\eqnn{
\mat{cc}{q^\tr & \dot{q}^\tr} \paren{D_x \F_J^\tr\cdot E_\uncon + E_\uncon \cdot D_x \F_J} \mat{c}{q \\ \dot{q}} & 
= \mat{cc}{q^\tr & \dot{q}^\tr} \mat{cc}{0 & -\frac{1}{2} K_J  \\ -\frac{1}{2} K_J & -B} \mat{c}{q \\ \dot{q}} \\ 
& = -\dot{q}^\tr K_J q - 2\dot{q}^\tr B \dot{q},
}
which is not 
a negative semidefinite quadratic form in any constrained mode (i.e. when $J \neq \uncon$).
In the remainder of this section we will assess infinitesimal contractivity of~\eqref{eq:ddq:soft} using the energy--induced norms that depend on the set of active constraints. 

Infinitesimal contractivity for continuous and piecewise--differentiable vector fields where different norms (hence, matrix measures) are associated with each differentiable ``piece'' of the vector field have previously been considered in~\cite{Fiore:2016fj}, but only for \emph{switched} systems where the discrete transition between ``pieces'' is triggered by an exogenous input, i.e. does not depend on the continuous state. 
As the previous sections illustrate, 
the interaction between continuous--time~\eqref{eq:ddq:cont} and discrete--time~\eqref{eq:ddq:disc} dynamics can yield expansion even when the continuous and discrete components are individually non--expansive.
We will assess infinitesimal contractivity of this system 
by treating~\eqref{eq:ddq:soft} as a hybrid system and applying our results;
for ease of exposition, we will restrict our attention to the system from Section~\ref{sec:appl:mech:1dof} with $1$ DOF and $1$ constraint.

Noting that positions, velocities, and accelerations are continuous in~\eqref{eq:ddq:soft} when constraints (de)activate,
we conclude that both saltation matrices are the $2$--dimensional identity, $\Xi_{\uncon,\lft} = \Xi_{\lft,\uncon} = \I_2$.
Thus, the induced norms can be computed 
as
$\norm{\Xi_{\uncon,\lft}} = \sigma_{\max}\paren{S_\lft \Xi_{\uncon,\lft} S_\uncon^{-1}}$,
$\norm{\Xi_{\lft,\uncon}} = \sigma_{\max}\paren{S_\lft \Xi_{\uncon,\lft} S_\uncon^{-1}}$
with
\eqnn{
S_\uncon = \mat{cc}{\sqrt{\kappa} & 0 \\ 0 & \sqrt{m}},\ 
S_\lft = \mat{cc}{\sqrt{\kappa + \kappa''} & 0 \\ 0 & \sqrt{m}}
}
denoting the square roots of $E_\uncon$, $E_\lft$, respectively,
yielding
\eqnn{
\norm{\Xi_{\uncon,\lft}} &= \sigma_{\max}\paren{S_\lft S_\uncon^{-1}} = \sqrt{\frac{\kappa}{\kappa+\kappa''}} 
< 1,\\ 
\norm{\Xi_{\lft,\uncon}} &= \sigma_{\max}\paren{S_\uncon S_\lft^{-1}} 
= \norm{\Xi_{\uncon,\lft}}^{-1}
> 1.
}
We conclude constraint activation is contractive and constraint deactivation is expansive.

\subsubsection{Viscoelastic spring-dampers with $2$ DOF, $1$ constraint}
\label{sec:appl:mech:visco}
We now modify the example from Section~\ref{sec:appl:mech:2dof} by 
\rev{
adding a spring in series with the damper that connects the two masses, yielding a
\emph{visco}elastic spring-damper illustrated in Fig.~\ref{fig:mech:visco}; this choice is motivated by the following considerations. 
Coupling the two masses directly through a damper (i.e.\ without the series spring that appears in the viscoelastic model) causes discontinuous changes in force at the moment of impact.
Theoretically, although this discontinuity in force is unproblematic for contraction analysis of the impacting mass as seen in the preceding section, we saw in Section~\ref{sec:appl:mech:2dof} that the force discontinuity causes infinitesimal \emph{expansion} for the other mass.
Practically, the discontinuous change in force is physically implausible since the structures connecting bodies in real mechanical systems deform.
These theoretical and practical considerations justify our focus on viscoelastic connections between masses.
}%
This mechanical system subject to a unilateral constraint can be modeled in the hybrid systems framework from Section~\ref{sec:prelim}:
\begin{itemize}
\item[$\D$] $=  \D_{\uncon}\coprod \D_{\con}$ 
where 
$\D_{\uncon} = \set{(q,q',\ell,\dot{q},\dot{q}') \in\Rbb^5}$ 
is the set of continuous states wherein 
mass $m$ is unconstrained
and
$\D_{\con} = \set{(q',\ell,\dot{q}') \in \Rbb^3}$ 
is the set of continuous states wherein 
mass $m$ is constrained 
\rev{-- note the inclusion of $\ell$, the length state of the viscoelastic spring-damper};

\item[$\F$] $: [0,\infty)\times\D\into T\D$ 
defined by 
$\F_{\uncon}(t,q,q',\ell,\dot{q},\dot{q}') = (\dot{q},\dot{q}',\dot{\ell},\ddot{q},\ddot{q}')$
and
$\F_{\con}(t,q',\ell,\dot{q}') = (\dot{q}',\dot{\ell},\ddot{q}')$
where,
\rev{
with
\eqn{
v %
= 
\frac{1}{2} \left(
\kappa \left(u(t) - q\right)^{2} 
+ 
\kappa' \left(q' - q\right)^{2} 
+ 
\kappa'' \left(q' - q - \ell\right)^{2}
\right)
}
denoting the potential energy stored in springs,
\eqn{
\label{eq:ddq:visco}
\beta'\dot{\ell} = -D_\ell v,\ m\ddot{q} = -D_q v - \beta\dot{q},\ m'\ddot{q}' = -D_{q'} v;%
}%
}%
\item[$\G$] $= \G_{\TD}\coprod\G_{\LO}$ 
where 
\eqn{\G_{\TD} = \set{(t,q,q',\ell,\dot{q},\dot{q}')\in[0,\infty)\times\D_\uncon : q \le 0,\ \dot{q} < 0}}
is the set of 
states where mass $m$ impacts the stop 
and
\eqn{\G_{\LO} = \set{(t,q',\ell,\dot{q}')\in[0,\infty)\times\D : \ddot{q} \ge 0}}
is the set of 
states where mass $m$ accelerates off the stop;
 
\item[$\R$] $:\G\into\D$ defined by 
$\R|_{\G_{\TD}}(t,q,\dot{q},q',\dot{q}') = (q',\dot{q}')$ (i.e. the position and velocity of mass $m$ are ``forgotten'' since they are both equal to zero when the stop constraint activates) and
$\R|_{\G_{\LO}}(t,q',\dot{q}') = (0,q',0,\dot{q}')$ (i.e. the position and velocity of mass $m$ are reinitialized at zero when stop constraint deactivates).

\end{itemize}

As with the previous example, we let the sum of potential and kinetic energy determine the weighting matrices for $2$-norms that will be used to assess infinitesimal contractivity of continuous-time flow and discrete-time reset.
In the unconstrained mode, the energy is
\eqnn{
\nonumber e_{\uncon} &=
\rev{v}
+
\frac{\dot{q}^{2} m}{2} 
+
\frac{\dot{q}'^{2} m'}{2} 
}
yielding metric
\eqnn{
E_{\uncon} = D^2 e_{\uncon}
= 
\left[\begin{matrix}\kappa + \kappa' + \kappa'' & - \kappa' - \kappa'' & \kappa'' & 0 & 0\\- \kappa' - \kappa'' & \kappa' + \kappa'' & - \kappa'' & 0 & 0\\\kappa'' & - \kappa'' & \kappa'' & 0 & 0\\0 & 0 & 0 & m & 0\\0 & 0 & 0 & 0 & m'\end{matrix}\right].
}
\rev{
The energy $e_{\con}$ in the constrained mode does not vary with $q$ or $\dot{q}$, 
\eqnn{
e_{\con} &=
\frac{\kappa u(t)^{2}}{2} 
+ 
\frac{\kappa' q'^{2}}{2} 
+ 
\frac{\kappa''}{2} \left(q' - \ell\right)^{2}
+
\frac{\dot{q}'^{2} m'}{2} 
}
yielding metric
\eqnn{
E_{\con} =  D^2 e_{\con}
&=
\left[\begin{matrix}\kappa' + \kappa'' & - \kappa'' & 0\\- \kappa'' & \kappa'' & 0\\0 & 0 & m'\end{matrix}\right].
}
}

We first consider infinitesimal contractivity of continuous-time flow.
Letting $x = (q,q',\ell,\dot{q},\dot{q}')$ denote the continuous state vector in the unconstrained mode and $\dot{x} = \F_{\uncon}(t,x)$ denote its time derivative yields
\eqnn{
\nonumber &\frac{1}{2}\paren{ D_x \F_{\uncon}^\tr \cdot E_{\uncon} + E_{\uncon} \cdot D_x \F_{\uncon} } =\\
&\diag\paren{
\left[\begin{matrix}- \frac{\kappa''^{2}}{\beta'} & \frac{\kappa''^{2}}{\beta'} & - \frac{\kappa''^{2}}{\beta'}\\\frac{\kappa''^{2}}{\beta'} & - \frac{\kappa''^{2}}{\beta'} & \frac{\kappa''^{2}}{\beta'}\\- \frac{\kappa''^{2}}{\beta'} & \frac{\kappa''^{2}}{\beta'} & - \frac{\kappa''^{2}}{\beta'}\end{matrix}\right]
,
\left[\begin{matrix}- \beta & 0\\0 & 0\end{matrix}\right]
}.
}
Similarly, letting $x = (q',\ell,\dot{q}')$ denote the continuous state vector in the constrained mode and $\dot{x} = \F_{\con}(t,x)$ denote its time derivative yields
\eqnn{
\nonumber &\frac{1}{2}\paren{D_x \F_{\con}^\tr \cdot E_{\con} + E_{\con} \cdot D_x \F_{\con}} \\
&=
\diag\paren{
\left[\begin{matrix}- \frac{\kappa''^{2}}{\beta'} & \frac{\kappa''^{2}}{\beta'}\\\frac{\kappa''^{2}}{\beta'} & - \frac{\kappa''^{2}}{\beta'}\end{matrix}\right]
,
0
}.
}
The spectrum of 
$\frac{1}{2}\paren{D_x \F_{\uncon}^\tr \cdot E_{\uncon} + E_{\uncon} \cdot D_x \F_{\uncon}}$
is 
$\set{0,-\beta,-\frac{3\kappa''^2}{\beta'}}$
and
that of
$\frac{1}{2}\paren{D_x \F_{\con}^\tr \cdot E_{\con} + E_{\con} \cdot D_x \F_{\con}}$
is 
$\set{0,-\frac{2\kappa''^2}{\beta'}}$, 
so the matrix measures of $D_x\F_\uncon$ and $D_x\F_\con$ are both equal to 0 (zero),
whence continuous-time flow is non-expansive%
\footnote{Since some eigenvalues are negative, 
the flow is 
\emph{semi-}~\cite[Sec.~2]{Lohmiller1998-xj}
or 
\emph{horizontally}~\cite[Sec.~VII]{Forni2014-pq}
contractive.}
in both constrained and unconstrained modes.

We now consider infinitesimal contractivity of discrete-time reset, i.e. we evaluate induced norms of saltation matrices.
The \emph{liftoff} (\LO) saltation matrix is
\eqnn{
\Xi_{\LO} &=
D_x \R|_{\G_{\LO}} =
\left[\begin{matrix}0 & 0 & 0\\1 & 0 & 0\\0 & 1 & 0\\0 & 0 & 0\\0 & 0 & 1\end{matrix}\right];
}
since this matrix is an energy-preserving embedding, it is an isometry with respect to the 2-norms determined by the energy metric, hence its induced norm is equal to unity.
Similarly, the \emph{touchdown} (\TD) saltation matrix,
\eqnn{
\Xi_{\TD} &=
D_x \R|_{\G_{\TD}} =
\left[\begin{matrix}0 & 1 & 0 & 0 & 0\\0 & 0 & 1 & 0 & 0\\0 & 0 & 0 & 0 & 1\end{matrix}\right],
}
is an orthogonal projection with respect to the 2-norms determined by the energy metric, hence its induced norm is equal to unity.

Combining our analyses of infinitesimal contractivity of continuous-time flow and discrete-time reset from the two preceding paragraphs, we conclude that Theorem~\ref{thm:main} applies to this system with 
$c = 0$ in~\eqref{eq:thm:mu} 
and 
$K = 1$ in~\eqref{eq:thm:Xi},
i.e. the system's dynamics are non-expansive.

\subsubsection{\mbox{Viscoelastic spring-dampers with $3$ DOF, $2$ constraints}}
\label{sec:appl:mech:visco32}

\rev{
We now extend the example from the preceding section by adding a second identical viscoelastic ``limb'' attached to the same ``body'' mass as in Fig.~\ref{fig:mech:visco32}.
Since the two limbs can impact at arbitrary times with respect to one another, this model restricts neither dwell time nor mode sequence.
Nevertheless, the following analysis will show that the hypotheses of our Theorem~\ref{thm:main} are satisfied,
demonstrating the key contributions of our Theorem~\ref{thm:main} with respect to prior literature on contraction analysis for hybrid systems.
This mechanical system subject to unilateral constraints can be modeled in the hybrid systems framework from Section~\ref{sec:prelim} as a straightforward extension of the viscoelastic model in the preceding section.
Since the system has 8 states, 4 modes, and 12 guard/reset pairs, we will not tax the reader's patience by explicitly specifying each component of the hybrid system.
Instead, in the interest of readability, we will specify the essential features of the model and summarize the results of applying our infinitesimal contraction analysis.

This system has four modes, 
$\J = \set{\nth,\lft,\rght,\bth}$; 
with 
$q\in\Rbb^3$ 
and
$\ell\in\Rbb^2$ 
denoting configurations of 3 masses and 2 viscoelastic dampers,
$\D_{\nth} = \set{(q,\ell,\dot{q}) \in\Rbb^8}$ 
is the set of continuous states wherein 
both limb masses are unconstrained
and 
$\D_{\lft}$,
$\D_{\rght}$,
$\D_{\bth}$
are the sets of continuous states wherein the the left, right, or both limb masses are constrained (i.e. zero position and velocity), respectively.

The vector field is defined by 
$(\dot{q},\dot{\ell},\ddot{q})$
where,
with $v$ denoting potential spring energy 
and
$M = \diag\paren{m',m,m}$,
$C = \diag\paren{0,\beta,\beta}$
denoting mass and damping matrices,
\eqnn{
\label{eq:ddq:visco32}
\beta'\dot{\ell} = -D_\ell v,\ M\ddot{q} = B u(t) -D_q v - C\dot{q}.
}

Since the limb masses are decoupled through the body mass, each of the 4 modes can transition to each of the 3 other modes, so there are $4\times 3 = 12$ distinct pieces of the guard set and reset function corresponding to every pair $j,j'\in\J$ with $j\ne j'$.
However, we note that the inequalities and functions that define constraint activation and deactivation for one limb are unaffected by the state of the other limb:
each limb mass transitions from unconstrained to constrained when it impacts the stop as in~\eqref{eq:TD} and transitions from constrained to unconstrained when it accelerates off the stop as in~\eqref{eq:LO}.
This inertial decoupling is necessary to ensure the system's flow is continuous, i.e.\ to satisfy Assumption~\ref{assum:flow}.2; we refer the interested reader to~\cite[Thm.~20]{Ballard2000-ui} for more details.

Using the total energy $e_j = v_j + \frac{1}{2} \dot{q_j}^\top M \dot{q_j}$ to determine the weighting matrix $E_j = D^2 e_j$ for a 2-norm in each mode $j\in\J$, 
we find%
\eqnn{%
\begin{tabular}{c|c}
$j$ & 
$\spec\frac{1}{2}\paren{D_x \F_{j}^\tr \cdot E_{j} + E_{j} \cdot D_x \F_{j}}$ 
\\
\hline
$\nth$ & 
$\set{0, -\beta, -\frac{4 \kappa''^2}{\beta'}, -\frac{2 \kappa''^2}{\beta'}}$ 
\\
$\lft,\rght$ & 
$\set{0, -\beta, -\frac{(5 \pm \sqrt{5})\kappa''^2}{2\beta'}}$ 
\\
$\bth$ & 
$\set{0, -\frac{3 \kappa''^2}{\beta'}, -\frac{\kappa''^2}{\beta'}}$ 
\end{tabular}
}%
so the matrix measure of each vector field derivative is equal to 0 (zero), whence continuous-time flow is non-expansive.%
\footnote{\rev{As with the example in Section~\ref{sec:appl:mech:visco}, 
the flow is 
\emph{semi-}~\cite[Sec.~2]{Lohmiller1998-xj}
or 
\emph{horizontally}~\cite[Sec.~VII]{Forni2014-pq}
contractive.}}
Evaluating the saltation matrices for every pair $j,j'\in\J$ with $j\neq j'$, we find that these matrices are energy-preserving embeddings or projections (depending on whether $j$ has more or fewer active constraints than $j'$), hence their induced norms are equal to unity, whence discrete-time reset is non-expansive.

Combining our analyses of infinitesimal contractivity of continuous-time flow and discrete-time reset from the preceding paragraph, we conclude that Theorem~\ref{thm:main} applies to this system with $c = 0$ in~\eqref{eq:thm:mu} and $K = 1$ in~\eqref{eq:thm:Xi}, i.e.\ the system's dynamics are non-expansive.
Since the two limbs can impact at arbitrary times with respect to one another, this model restricts neither dwell time nor mode sequence,
demonstrating the key contributions of our Theorem~\ref{thm:main} with respect to prior literature on contraction analysis for hybrid systems.
}

\subsubsection{Viscoelastic spring-dampers with $d$ DOF, $n$ constraints}
\label{sec:appl:mech:viscodn}

We now generalize the example from the preceding section 
to the class of linear ``MCK'' systems subject to linear unilateral constraints specified as above at the outset of Sec.~\ref{sec:appl:mech}.
Specifically, we begin with dynamics
\eqnn{
M \ddot{q} + C \dot{q} + K q = B u(t)\ \text{s.t.}\  
a(q) = A q \ge 0
}
and assume, as above, that
$M = M^\top > 0$,
$C = C^\top \geq 0$,
$K = K^\top \geq 0$,
and
$A\ M^{-1} A^\top = I$.
To generalize the construction from Sec.~\ref{sec:appl:mech:visco}, where elastic spring-dampers were replaced with viscoelastic elements, we introduce new 
``damper length'' states $l\in\mathbb{R}^{d\cdot(d+1)/2}$ -- one for each damper.
The new springs we introduce (in series with the dampers) yield a modified potential energy
\eqnn{\label{sec:appl:mech:viscodn:v}
\md{v} = \frac{1}{2} \mat{cc}{\ell^\top & q^\top} \mat{cc}{\md{K}_L & \md{K}_{LQ} \\ \md{K}_{QL} & \md{K}_Q} \mat{c}{\ell \\ q} = \frac{1}{2} \md{q}^\top \md{K} \md{q} 
}
where $\md{K} = \md{K}^\top > 0$ so that the modified total energy $\md{e} = \frac{1}{2} q^\top M q + \md{v}$ determines a positive-definite weighting matrix for a 2-norm,
\eqnn{\label{eq:appl:mech:viscodn:E}
\md{E} = D^2 \md{e} = \mat{cc}{\md{K} & 0 \\ 0 & M} = \mat{ccc}{\md{K}_L & \md{K}_{LQ} & 0 \\ \md{K}_{QL} & \md{K}_Q & 0 \\ 0 & 0 & M}.
}
Replacing spring-dampers with viscoelastic elements yields the modified dynamics
\eqnn{\label{eq:appl:mech:viscodn:ddq}
\md{C} \dot{\ell} &= - D_\ell \md{v}^\top = - \md{K}_L \ell - \md{K}_{LQ} q, \\
M \ddot{q} &= B u(t) - D_q \md{v}^\top = B u(t) - \md{K}_{QL} - \md{K}_Q q \ell,
}
where $\md{C} = \diag\set{\beta_{ij} : 1 \leq i \leq j \leq d}$ is a diagonal matrix constructed from the (positive) viscosity parameters $\beta_{ij} > 0$; note, in particular, that $\md{C} = \md{C}^\top > 0$.

We first consider infinitesimal contractivity of continuous-time flow.
Letting $x = (\ell,q,\dot{q})$ denote the continuous state vector in the unconstrained mode and $\dot{x} = \F_\uncon(t,x)$ denote its time derivative yields
\eqnn{
\frac{1}{2}\paren{ D_x \F_{\uncon}^\tr \cdot \md{E} + \md{E} \cdot D_x \F_{\uncon} } 
&= \mat{ccc}{ -\md{K}_L \md{C}^{-1} \md{K}_L & -\md{K}_L \md{C}^{-1} \md{K}_{LQ} & 0 \\ - \md{K}_{QL} \md{C}^{-1} \md{K}_L & -\md{K}_{QL} \md{C}^{-1} \md{K}_{LQ} & 0 \\ 0 & 0 & 0} \\
& = -\mat{c}{\md{K}_L \\ \md{K}_{QL} \\ 0} \md{C}^{-1} \mat{ccc}{\md{K}_L & \md{K}_{LQ} & 0} \leq 0.
}
Since this matrix is negative semidefinite, the matrix measure of $D_x\F_\uncon$ is equal to $0$ (zero),
whence continuous-time flow is non-expansive%
\footnote{Since some eigenvalues are negative, 
the flow is 
\emph{semi-}~\cite[Sec.~2]{Lohmiller1998-xj}
or 
\emph{horizontally}~\cite[Sec.~VII]{Forni2014-pq}
contractive.}.

We now consider infinitesimal contractivity of discrete-time reset, i.e.\ we evaluate induced norms of saltation matrices.
Evaluating the saltation matrices for every pair $j,j'\in\J$ with $j\neq j'$, we find that these matrices are energy-preserving embeddings or projections (depending on whether $j$ has more or fewer active constraints than $j'$), hence their induced norms are equal to unity, whence discrete-time reset is non-expansive.

Combining our analyses of infinitesimal contractivity of continuous-time flow and discrete-time reset from the two preceding paragraphs, we conclude that Theorem~\ref{thm:main} applies to this system with 
$c = 0$ in~\eqref{eq:thm:mu} 
and 
$K = 1$ in~\eqref{eq:thm:Xi},
i.e. the system's dynamics are non-expansive.

\section{\rev{Necessity}}
\label{sec:necessity}
We now present a %
result %
indicating that the infinitesimal continuous-time~\eqref{eq:thm:mu} and discrete-time~\eqref{eq:thm:Xi} contraction conditions of Theorem \ref{thm:main} are 
\rev{necessary for strict contraction with respect to the intrinsic distance function~\eqref{eq:thm:d}.}

\begin{thm}
\label{thm:necessity}
Consider a hybrid system satisfying Assumptions~\ref{assum:system} and~\ref{assum:flow} and further assume 
(a)
no guards overlap so that 
$\ol{\G}_{j,j'} \cap \ol{\G}_{j,j''} = \emptyset$ for all $j'\ne j''$
and 
(b) no mode is ever entirely guard so that $\D_j\backslash \G_j(t)\neq \emptyset$ for all $j$ and all $t\geq 0$.
If the hybrid system is contractive, 
i.e., 
if there exists a 
constant $c\in\mathbb{R}$ such that%
\footnote{$d_t:\mcMt\times\mcMt\into[0,\infty)$ 
is the specific (time-varying) intrinsic distance function defined in Sec.~\ref{sec:prelim:distance} on the hybrid system's quotient space $\mcMt$.}
\begin{align}
  \label{eq:2}
d_t(\phi(t,s,\xi),\phi(t,s,\zeta))\leq e^{c(t-s)}d_s(\xi,\zeta) \end{align} 
for all $\xi,\zeta\in \D$, $t \ge s\ge 0$, 
then the 
continuous-time~\eqref{eq:thm:mu} 
and 
discrete-time~\eqref{eq:thm:Xi} 
contraction conditions in Theorem~\ref{thm:main} are satisfied:
\begin{align}
  \label{eq:22}
  \mu_j\left(D_x \F_j(t,x)\right)\leq c
\end{align}
for all $t\geq 0$ and all $x\in\D\backslash \G(t)$, and 
\begin{align}
  \label{eq:15}
\|\Xi(t,x)\|_{j,j'}\leq 1  
\end{align}
for all $j,j'\in\J$, all $t> 0$, and all $x\in\G_{j,j'}(t)$.%

\end{thm}
\begin{proof}
Fix a time $\sigma\geq 0$ and consider $x\in\D_j\backslash \G(\sigma)$ for some $j\in \J$, and recall that $\D_j\backslash\G(\sigma)$ is open. Let $\delta>0$ be such that $\mathcal{B}_\delta=\{\zeta:|\zeta-x|_j<\delta\}\subset \D_j\backslash \G(\sigma)$, and notice that $d_\sigma(x,\zeta)=|x-\zeta|_j$ for all $\zeta\in\mathcal{B}_\delta$. 
By standard converse results for continuously differentiable contracting systems, \emph{e.g.}, \cite[Proposition 3]{Aminzare:2014rm}, $\mu_j\left(D_x \F_j(\sigma,x)\right)\leq c$. Since $\sigma$, $x$, and $j$ where arbitrary, \eqref{eq:22} holds.

Now fix a time $\sigma >0$ and consider $x^*\in \G_{j,j'}(\sigma)$ for some $j,j'\in \D$. By Assumption \ref{assum:system}.4, $D_tg_{j,j'}(\sigma,x^*)+D_xg_{j,j'}(\sigma,x^*)\cdot \F_j(\sigma,x^*)<0$. Since also $\D_j\backslash \G_{j}(\sigma)\neq \emptyset$, there exists $x_0\in \D_j$ and time $\tau<\sigma$ satisfying $\phi_j(\sigma^-,\tau,x_0)=x^*$. Further, because guards do not overlap, there exists an open neighborhood $\mathcal{O}\ni x_0$ such that for all $\xi\in\mathcal{O}$, there exists $\theta(\xi)>\tau$ such that $\phi(\theta(\xi)^-,\tau,\xi)\in \G_{j,j'}(\theta(\xi))$ and $\phi(t,\tau,\xi)\in D_j\backslash \G(t)$ for all $t\in[\tau,\theta(\xi))$.  We write $x(t)=\phi(t,\tau,x_0)$ so that, in particular, $x(\tau)=x_0$ and $x(\sigma^-)=x^*$. 

Assume the neighborhood $\mathcal{O}$ is chosen small enough so that there exists $\bar{\tau}>0$ such that $\phi(t,\tau,\xi)\in \D_{j'}$ for all $t\in[\theta(\xi),\sigma+\bar{\tau})$ for all $\xi\in \mathcal{O}$; %
existence of such a $\bar{\tau}$ for small enough $\mathcal{O}$ is guaranteed by Assumption \ref{assum:system}.2. For each $\epsilon\in(0,\min\{\sigma-\tau, \bar{\tau}-\sigma\})$, %
 let $\delta(\epsilon)>0$ be small enough so that, for all $\xi\in \mathcal{B}_{\delta(\epsilon)}(\sigma-\epsilon, x(\sigma-\epsilon))=\{\xi\in \D: d_{\sigma-\epsilon}(\xi, x(\sigma-\epsilon))<\delta(\epsilon)\}$, it holds that %
\begin{align}
  \label{eq:11}
&  d_{\sigma-\epsilon}(\xi, x(\sigma-\epsilon))=|\xi-z(\sigma-\epsilon)|_j,\\
\nonumber &d_{\sigma+\epsilon}(\phi(\sigma+\epsilon,\sigma-\epsilon,\xi),\phi(\sigma+\epsilon,\sigma-\epsilon, x(\sigma-\epsilon)))\\
\label{eq:11-2}&=|\phi(\sigma+\epsilon,\sigma-\epsilon,\xi)-\phi(\sigma+\epsilon,\sigma-\epsilon, x(\sigma-\epsilon))|_{j'}.
\end{align}
Then the proof is complete by observing that
\begin{align}
\nonumber   1&\geq \limsup_{\epsilon\to 0^+}\sup_{\xi\in\mathcal{B}_{\delta(\epsilon)}(x(\sigma-\epsilon))}\\
  \label{eq:27}&\qquad \frac{|\phi(\sigma+\epsilon,\sigma-\epsilon,\xi)-\phi(\sigma+\epsilon,\sigma-\epsilon, x(\sigma-\epsilon))|_{j'}}{|\xi-x(\sigma-\epsilon)|_j}\\
\nonumber &\geq \limsup_{\epsilon\to0^+}\sup_{\xi\in\mathbb{R}^{n_j}}\\
&\qquad \frac{|D_3\phi(\sigma+\epsilon,\sigma-\epsilon,x(\sigma-\epsilon))\cdot \xi|_{j'}}{|\xi|_j}\\
&=|\Xi(\sigma,x^*)|_{j,j'},
\end{align}
where the second inequality follows by the definition of directional derivative and the first inequality follows from the following observation combined with \eqref{eq:11} and \eqref{eq:11-2}:
from \eqref{eq:2}, for any $\xi,\zeta \in \D$ and all $\epsilon\in(0,\sigma]$,
\begin{align}
  \label{eq:5}
\frac{ d_{\sigma+\epsilon}(\phi(\sigma+\epsilon,\sigma-\epsilon,\xi),\phi(\sigma+\epsilon,\sigma-\epsilon,\zeta))}{d_{\sigma-\epsilon}(\xi,\zeta)}\leq e^{2c\epsilon}
\end{align}
whenever $d_{\sigma-\epsilon}(\xi,\zeta)\neq 0$, so that, in particular,
\begin{align}
\nonumber &\limsup_{\epsilon\to 0^+}\sup_{\xi,\zeta\in\D}\frac{ d_{\sigma+\epsilon}(\phi(\sigma+\epsilon,\sigma-\epsilon,\xi),\phi(\sigma+\epsilon,\sigma-\epsilon,\zeta))}{d_{\sigma-\epsilon}(\xi,\zeta)}\\
  \label{eq:9}&\leq \lim_{\epsilon\to 0^+}e^{2c\epsilon}=  1.
\end{align}

\end{proof}

\begin{rem}[summary of necessity result]
By restricting our analysis to  hybrid systems whose guards are codimension-1 submanifolds in the above Theorem, we find that contraction with respect to the intrinsic distance function defined in Section~\ref{sec:prelim:distance} implies infinitesimal contraction in continuous~\eqref{eq:thm:mu} and discrete~\eqref{eq:thm:Xi} time.
Extensions of this result to more general guard structures are discussed in Section~\ref{sec:disc:necessity}.
\end{rem}

\section{Discussion}
\label{sec:disc}
Before concluding, we discuss additional connections and possible extensions or applications of the preceding results.

\subsection{Assumptions~\ref{assum:system} and~\ref{assum:flow}}
\label{sec:disc:assum}
We acknowledge that the Assumptions imposed in Section~\ref{sec:prelim} are %
not obviously satisfied in all applications of interest.
However, as discussed in the sequence of Remarks that followed the Assumptions, each condition plays a crucial role in the Proof of Theorem~\ref{thm:main}; if any one condition is violated, then there exist systems that cannot be shown contractive using our approach.
Furthermore, recent results provide a broad class of systems that satisfy these Assumptions.
For instance,~\cite{Burden2016-bb} considers a class of 
discontinuous vector fields that yield continuous and piecewise-differentiable flows.
Hybrid systems obtained from such vector fields satisfy 
Assumption~\ref{assum:system} by construction~\cite[Def.~2]{Burden2016-bb} 
and
Assumption~\ref{assum:flow} by piecewise-differentiability of the flow~\cite[Thm.~5]{Burden2016-bb};
these observations justify application of our results in 
Section~\ref{sec:appl:traffic}.
Related work~\cite{Pace2017-tt} established conditions under which the hybrid system model of a mechanical system subject to unilateral constraints has a continuous and piecewise-differentiable flow.
The conditions on the hybrid system~\cite[Assumps.~1-4]{Pace2017-tt} and properties of its flow~\cite[Thm.~1]{Pace2017-tt} ensure that Assumptions~\ref{assum:system} and~\ref{assum:flow} are satisfied for this class of systems,
justifying application of our results to the examples in Section~\ref{sec:appl:mech}.
Broadening the class of systems that are known to satisfy these Assumptions is the subject of ongoing work.

\rev{%
\subsection{Sliding modes}
\label{sec:disc:sliding}
A piecewise-smooth vector field exhibits \emph{sliding} when it is discontinuous along a submanifold and points toward the submanifold from all sides~\cite{Jeffrey2014-nt}; in this case, trajectories can be defined within this \emph{sliding mode} using Carath{\'{e}}odory solutions~\cite{Filippov1988-nh}.
Assumption~{\ref{assum:system}.1} precludes sliding in the vector field component of the hybrid systems studied here.
However, if the sliding mode is regarded as a distinct mode in a hybrid system that satisfies Assumptions~{\ref{assum:system}} and~{\ref{assum:flow}}, then our techniques could be applied.
We conclude by noting that~\cite{Fiore:2016fj} provides an alternative approach to assessing infinitesimal contractivity of sliding vector fields.
}

\rev{
\subsection{Periodic orbits}
\label{sec:disc:periodic}
Establishing the existence or stability of periodic orbits is a common application of contraction analysis.
When a system is contractive as in~\eqref{eq:thm:d} on a complete forward-invariant set for a given periodic input, invoking the Banach contraction mapping principle establishes the existence of a stable periodic orbit~\cite[Thm.~2]{Sontag2010-qg}.
When contraction analysis is applied in this way, the existence of the periodic orbit and its stability are both consequences of contractivity.
For instance, numerical simulations indicate that periodic forcing yields stable periodic orbits in the mechanical systems from Section~\ref{sec:appl:mech}.

When the existence of a periodic orbit is established or assumed \emph{a priori} in a time-invariant system, contraction analysis can be applied to assess stability of the orbit.
However, in this case, it is important to note that perturbations in \emph{phase} persist, so the system can only be contractive in directions \emph{transverse} to the periodic orbit.
Transverse contractivity criteria for periodic orbits in hybrid systems was developed in~\cite{Tang2014-rd} with restrictions on mode sequence and dwell time; these restrictions arise implicitly from the assumption that the periodic orbit passes transversally through a sequence of codimension-1 guards~\cite[Sec.~II and Fig.~1]{Tang2014-rd}.
Generalizing our results to accommodate
\emph{semi-}~\cite[Sec.~2]{Lohmiller1998-xj}
or 
\emph{horizontal}~\cite[Sec.~VII]{Forni2014-pq}
contraction as discussed in the following section provides one possible approach to derive transverse contractivity criteria for periodic orbits in hybrid systems without restricting dwell time or mode sequence.
}

\subsection{Generalizations of infinitesimal contraction}
\label{sec:disc:generalizations}

This paper focused on the strongest possible notion of contraction, requiring that the distance between trajectories decrease in all directions at every instant.
Recent work has considered relaxations of these strict requirements, 
indicating possible routes to extend our approach.
Importantly, although we had to introduce new and non-trivial analysis techniques to generalize infinitesimal contraction to hybrid systems (specifically, the intrinsic distance function and quotient space from Section~\ref{sec:prelim:distance}), 
the core of our approach leverages the same intuition 
utilized in classical systems (i.e. differential or difference equations, exclusive ``or''):
a system is contractive if path lengths decrease in time\rev{~\cite{Lohmiller1998-xj,Sontag2010-qg,Forni2014-pq}}.
This close parallel may facilitate generalization of a variety of classical techniques to the hybrid setting
\rev{including state-varying distance metrics defined by Riemannian~\cite{Lohmiller1998-xj} or Finsler~\cite{Forni2014-pq} structures}.

\rev{
\subsection{Incremental stability}
\label{sec:disc:inc}

In systems governed by a single differential or difference equation, the concept of \emph{contraction} is intimately related to the concept of \emph{incremental stability} --
both concepts formalize how discrepancies in initial conditions propagate through time.
In hybrid systems, where both state and time generally have continuous and discrete elements, the technical definition of these concepts varies depending on how one chooses to quantify the propagation of state discrepancies through time.
For instance, approaches based on graphical analysis of trajectory closeness implicitly disallow discrepancies in 
discrete elements of state~\cite{Li2016-zr}%
\footnote{\rev{This issue is discussed in more detail elsewhere~\cite[Sec.~V-A]{Burden2015-ip}.}},
or focus on propagation through discrete \emph{or} continuous time due to the restrictiveness of considering sumultanous propagation in discrete \emph{and} continuous time~\cite[Prop.~1]{Biemond2018-dx}.
Our use of an intrinsic distance function on the hybrid system's state space yields a simple connection between the concepts: 
Theorem~\ref{thm:main} establishes that
\emph{infinitesimal contractivity} (\eqref{eq:thm:mu} and~\eqref{eq:thm:Xi})
implies \emph{contractivity} \eqref{eq:thm:d},
which is simply a strong (exponential) form of \emph{incremental global asymptotic stability}~\cite[Def.~2.1]{Angeli2002-zx} stated in terms of the intrinsic distance function rather than the distance metric induced by a single vector space norm.
}

\subsection{Extending \rev{necessity to higher codimension}}
\label{sec:disc:necessity}
In Theorem~\ref{thm:necessity}, by restricting our attention to infinitesimal contraction through a codimension-1 guard, we found that the \rev{infinitesimal} continuous-time~\eqref{eq:thm:mu} and discrete-time~\eqref{eq:thm:Xi} contraction conditions of Theorem~\ref{thm:main} are \rev{necessary for strict contraction with respect to the intrinsic distance function~\eqref{eq:thm:d}}.
In systems with overlapping guards so that $\ol{\G}_{j,j'}\cap\ol{\G}_{j,j''} \ne \emptyset$ for some $j'\ne j''$, 
this necessity result applies on any codimension-1 portions of $\G_{j,j'}$ and $\G_{j,j''}$; continuous differentiability of the vector field, guard, and reset ensures the conclusions of the necessity result extend to the closure of these codimension-1 sets.
If a guard $\G_{j,j'''}$ is contained entirely within a codimension-$k$ set where $k > 1$, then the proof of Theorem~\ref{thm:necessity} can be adapted to establish contraction conditions on the vector field derivative and saltation matrix operator in directions tangent to $\G_{j,j'''}$ \rev{as in~\cite{Tang2014-rd}}.

\section{Conclusion}
\label{sec:conclusion}
We generalized infinitesimal contraction analysis to hybrid systems by leveraging local properties of continuous-time flow and discrete-time reset to bound the time rate of change of the intrinsic distance between trajectories.
\rev{Furthermore}, we showed that contraction with respect to this intrinsic distance function implies infinitesimal contraction in continuous and discrete time.
In addition to expanding the toolkit for stability analysis in hybrid systems, we provide novel bounds for the intrinsic distance function even when the system is not contractive.

\rev{
\section*{Acknowledgments}
We thank Matthew D.\ Kvalheim for correcting an error in an earlier version of this manuscript and four anonymous reviewers for providing invaluable feedback.
}

\bibliographystyle{ieeetr}
\bibliography{refs}

\end{document}